\newcolumntype{L}{>{$}l<{$}}
 \newcolumntype{C}{>{$}c<{$}}
\newcommand*{\ie}{i.e.\@\xspace}
\newcommand*{\eg}{e.g.\@\xspace}
\newcommand*{\wloss}{w.l.o.g.\@\xspace}
\newcommand*{\Wloss}{W.l.o.g.\@\xspace}
\renewcommand{\phi}{\varphi}
\renewcommand{\epsilon}{\varepsilon}
\newcommand*{\vv}[1]{\vec{\mkern1mu#1}}
\renewcommand{\mod}[1]{\ (\mathrm{mod}\ #1)}
\newcommand*{\Nset}{\mathbb{N}}
\providecommand{\dfn}{\mathrel{\mathop:}=}
\providecommand{\ddfn}{\mathrel{\mathop{{\mathop:}{\mathop:}}}=}
\newcommand*{\poly}{\mathrm{poly}}
\newcommand*{\calL}{\mathcal{L}}
\newcommand*{\calO}{\mathcal{O}}
\newcommand*{\md}{\mathrm{md}}
\newcommand*\+{\mkern2mu}
\newcommand*{\restr}[2]{#1{\upharpoonright}#2}
\newcommand*{\opset}{\Sigma}
\newcommand*{\splits}{\mathrm{Sp}}
\newcommand*{\ssplits}{\mathrm{SSp}}
\newcommand*{\teams}{\mathrm{Tms}}
\newcommand*{\Dim}{\mathrm{Dim}}
\newcommand*{\GG}{\mathbb{G}}
\newcommand*{\occ}{\mathrm{occ}_\bor}
\newcommand*{\FS}{\mathrm{FS}}
\renewcommand{\AA}{\mathbb{A}}
\newcommand*{\BB}{\mathbb{B}}
\newcommand*{\size}{\mathrm{wd}}
\newcommand*{\PL}{\mathrm{PL}}
\newcommand*{\MTL}{\mathrm{MTL}}
\newcommand*{\ML}{\mathrm{ML}}
\newcommand*{\Prop}{\mathrm{Prop}}
\newcommand*{\lax}[1]{#1^*}
\newcommand*{\arity}[1]{\mathrm{ar}(#1)} 
\newcommand*{\bor}{\ovee}
\newcommand*{\band}{\land}
\newcommand*{\equi}{\leftrightarrow}
\newcommand*{\nequi}{\nleftrightarrow}
\newcommand*{\sor}{\mathbin{\dot{\vee}}}
\newcommand*{\sand}{\mathbin{\dot{\owedge}}}
\newcommand*{\laor}{\lor}
\newcommand*{\laand}{\owedge}
\newcommand*{\sneg}{{\sim}}
\newcommand*{\hook}{\hookrightarrow}
\newcommand*{\nempty}{\text{\textsc{ne}}}
\newcommand*{\E}{\mathsf{E}}
\DeclareMathOperator*{\bigbor}{\scalerel*{\ovee}{\sum}}
\newcommand*{\dep}[1]{{=\!\!(#1)}}
\newcommand*{\anon}{\Upsilon}
\newcommand*{\TOWERPOLY}{\mathrm{TOWER}(\mathrm{poly})} 
\newcommand*{\PSPACE}{\mathrm{PSPACE}}
\newcommand*{\ATIMEALT}[1]{\mathrm{ATIME}\text{-}\mathrm{ALT}(#1,\poly)} 
\keywords{team semantics, succinctness, dependence atom}
\begin{document}

\title{On the succinctness of atoms of dependency}

\author[Martin Lück]{Martin Lück\rsuper{a}}
\address{\lsuper{a}Leibniz University Hannover}
\email{\texttt{lueck@thi.uni-hannover.de}}

\author[Miikka Vilander]{Miikka Vilander\rsuper{b}}
\address{\lsuper{b}Tampere University}
\email{\texttt{miikka.vilander@tuni.fi}}

\begin{abstract}
  Propositional team logic is the propositional analog to first-order team logic.
  Non-classical atoms of dependence, independence, inclusion, exclusion and anonymity can be expressed in it, but for all atoms except dependence only exponential translations are known.
  In this paper, we systematically compare their succinctness in the existential fragment, where  the splitting disjunction only occurs positively, and in full propositional team logic with unrestricted negation.
  By introducing a variant of the Ehrenfeucht--Fraïssé game called formula size game into team logic, we obtain exponential lower bounds in the existential fragment for all atoms. 
  In the full fragment, we present polynomial upper bounds also for all atoms.
\end{abstract}

\maketitle

\section{Introduction}

As a novel extension of classical logic, \emph{team semantics} provides a framework for reasoning about whole collections of entities at once, as well as their relation with each other.
Such a collection of entities is called a \emph{team}.
Originally, team semantics was introduced by Hodges~\cite{Hodges1997} to provide a compositional approach to logic of incomplete information, such as Hintikka's and Sandu's \emph{independence-friendly logic (IF-logic)}~\cite{HS89}.

In his seminal work, Väänänen~\cite{vaa07} introduced \emph{dependence logic} which extends first-order logic by so-called \emph{dependence atoms}, atomic formulas $\dep{x_1, \ldots, x_n;y}$ that intuitively express that the value of $y$ depends only on the values of $x_1,\ldots,x_n$.
While in IF-logic dependencies between variables are expressed with annotated quantifiers such as $\exists y/\{x_1,\ldots,x_n\}$, in team semantics these can be expressed without changing the quantifiers.
Accordingly, dependence logic formulas are evaluated on sets of first-order assignments (called \emph{teams}).
Besides the dependence atom, a multitude of other notions of interdependencies between variables were studied, such as the \emph{independence} of variables~\cite{GV13}, written $x_1\cdots x_n \perp y_1\cdots y_m$, the \emph{inclusion} $x_1\cdots x_n\subseteq y_1\cdots y_n$~\cite{ga12}, \emph{exclusion} $x_1\cdots x_n\mid y_1\cdots y_n$, and \emph{anonymity}
$x_1\ldots x_n \anon y_1 \ldots y_n$~\cite{Vaa}, also known as \emph{non-dependence}~\cite{RRthesis}.
We generally refer to these expressions as \emph{atoms of dependency}.
In its original formulation, dependence logic does not have a Boolean negation but only a so called dual negation $\neg$. For this negation, basic laws such as the law of the excluded middle---that either $\alpha$ or $\neg \alpha$ holds in any given interpretation---fail.
By adding a Boolean negation operator, often written $\sneg$, Väänänen~\cite{vaa07} introduced \emph{team logic} as a strictly more powerful extension of dependence logic.

In the last decade, research on logics with team semantics outside of the first-order setting has thrived as well.
A plethora of related systems has been introduced, most prominently for modal logic~\cite{vaa08}, propositional logic~\cite{Yang14,YangV16}, and temporal logic~\cite{KrebsMV15,KrebsMV018}.
Analogously to first-order team logics, variants with a Boolean negation were studied extensively~\cite{YangV17,mueller14,Kon15}.
The atoms of dependency in these logics feature a fundamental difference to their first-order counterparts:
First-order dependencies range over individuals of the universe, whereas propositional dependency atoms only range over truth values, of which there are only finitely many.
Based on this fact, unlike in first-order logic, they can be finitely defined in terms of other logical connectives.

Gogic et al.~\cite{GogicKPS95} argue that in addition to the computational complexity of a logic and which properties it can express, it is also important to consider how \emph{succinctly} the logic can express those properties.
The succinctness of especially modal and temporal logics has been an active area of research for the last couple of decades; see e.g.\ \cite{Wilke99,LutzSW01,EtessamiVW02,AdlerI03, Markey03}
for earlier work on the topic and~\cite{FrenchHIK11,FrenchHIK13,DitmarschFHI14,HoekI14}
for recent work.
A typical result states that a logic $\calL_1$ is exponentially more succinct than another logic $\calL_2$.
This means that there is a sequence of properties ${(P_n)}_{n \in \Nset}$ such that $P_n$ is definable by $\calL_1$-formulas ${(\phi_n)}_{n \in \Nset}$, but every family ${(\psi_n)}_{n \in \Nset}$ of $\calL_2$-formulas that defines ${(P_n)}_{n \in \Nset}$ is exponentially larger than ${(\phi_n)}_{n \in \Nset}$.

In team semantics, the question of succinctness has received only little attention so far.
In their paper, Hella et al.~\cite{HellaLSV14} are primarily concerned with the expressive power of modal dependence logic, but they also show that defining the dependence atom in modal logic with Boolean disjunction requires a formula of exponential size.
Similarly, Kontinen et al.~\cite{KontinenMSV17} investigate many aspects of modal independence logic and among them show that modal independence logic is exponentially more succinct than basic modal logic.
Our paper is, to our knowledge, the first systematic look at succinctness for team semantics.

\smallskip

The most commonly used systematic methods for proving succinctness results are formula size games and extended syntax trees.
Formula size games are a variant of Ehrenfeucht-Fraïssé games made to correspond to the size of formulas instead of the usual depth of some operator.
They were first introduced by Adler and Immerman~\cite{AdlerI03} for branching-time temporal logic $\mathrm{CTL}$.
The method of extended syntax trees was originally formulated by Grohe and Schweikardt~\cite{GroheS05} for first-order logic.
The notion of extended syntax tree was actually inspired by the Adler-Immerman game, and in a certain sense these two methods are equivalent: an extended syntax tree can be interpreted as a winning strategy for one of the players of the corresponding formula size game.
Both of these methods have been adapted to many languages, especially in the modal setting, see e.g.\ \cite{FrenchHIK11,HoekIK12,DitmarschFHI14}.

The formula size game we define in this paper is an adaptation of the games defined by Hella and Väänänen for propositional and first-order logic~\cite{HellaV15} and later by Hella and Vilander for basic modal logic~\cite{HellaV16}.
The new games of Hella and Väänänen are variations of the original Adler-Immerman game with a key difference.
In the original game, the syntax tree of the formula in question is constructed in its entirety and consequently the second player has an easy optimal strategy.
Thus the original game is in some sense a single player game.
The new variant uses a predefined resource that bounds the size of the constructed formula and only one branch of the syntax tree is constructed in one play.
The second player's decisions now truly matter as she gets to decide which branch that is.

\smallskip

\begin{table*}[t]\centering
 \begin{tabular}{lCCC}
  \toprule
  \text{Property} &                                       & \text{Connectives in } \Sigma & \text{Result} \\
  \midrule
  Dependence      & \mathllap{\sneg} \dep{\cdot;\cdot}    & \land,\bor, *                 & \poly         \\
                  & \dep{\cdot;\cdot}                     & \land,\bor, *                 & \exp          \\
                  & \dep{\cdot;\cdot}                     & \land,\sneg, *                & \poly         \\
  \midrule
  Independence    & \mathllap{\sneg} {\perp}\mathrlap{_c} & \land,\bor, \lor              & \poly         \\
                  & {\perp}                               & \land,\bor, *                 & \exp          \\
                  & {\perp}\mathrlap{_c}                  & \land,\sneg, *                & \poly         \\
  \midrule
  Inclusion       & \mathllap{\sneg}{\subseteq}           & \land,\bor, \lor              & \poly         \\
                  & \subseteq                             & \land,\bor, *                 & \exp          \\
                  & \subseteq                             & \land,\sneg, *                & \poly         \\
  \midrule
  Exclusion       & \mathllap{\sneg}{\mid}                & \land,\bor, *                 & \poly         \\
                  & \mid                                  & \land,\bor, *                 & \exp          \\
                  & \mid                                  & \land,\sneg, *                & \poly         \\
  \midrule
  Anonymity       & \mathllap{\sneg} \anon                & \land,\bor, \lor              & \poly         \\
                  & \anon                                 & \land,\bor, *                 & \exp          \\
                  & \anon                                 & \land,\sneg, *                & \poly         \\
  \midrule
  Parity          & \mathllap{\sneg} \oplus               & \land,\bor, *                 & \exp          \\
                  & \oplus                                & \land,\bor, *                 & \exp          \\
                  & \oplus                                & \land,\sneg, \sor             & \poly         \\
  \bottomrule
 \end{tabular}
 \captionsetup{font=small,skip=0.5em,position=bottom}
 \caption{The succinctness of team properties in propositional team logic.
  "$*$" means that the entry holds if $\lor$, $\sor$, or both are available.
  The bounds are sharp in the following sense: "poly" means that there is a polynomial translation to $\PL(\Sigma)$.
  "exp" means that there is an exponential translation to $\PL(\Sigma)$, but no sub-exponential translation.
  }\label{tab:results}
\end{table*}

\subsubsection*{Contribution.}

In this paper we consider the succinctness of atoms of dependency.
So far, it is known that these atoms can be expressed by exponentially large formulas (see Table~\ref{tab:translations}), with only the dependence atom having a known polynomial size formula~\cite{HannulaKVV18}.

In Section~\ref{sec:preliminaries} we define propositional team logic and the fragments we consider, and recall some useful known results.
In Section~\ref{sec:lower} we obtain exponential lower bounds in the \emph{existential fragment} of propositional team logic, where the splitting disjunction $\lor$ may only occur positively.
Our lower bounds imply succinctness results between logics with no atoms of dependency, and ones expanded with a single such atom. The lower bounds also show that the known translations to the existential fragment (see Table~\ref{tab:translations}) are asymptotically optimal.

Most of the lower bounds are obtained via the new formula size game for propositional team logic, including a lower bound for the parity of the cardinality of teams.
The lower bounds for dependence and exclusion atoms are obtained via the notion of \emph{upper dimension}, adapted from~\cite{HellaLSV14}.

In Section~\ref{sec:upper} we polynomially define the \emph{negations} of the considered atoms of dependency in the existential fragment.
From this, as a corollary we obtain polynomial upper bounds for full propositional team logic.
Moreover, we define parity polynomially in the full logic, even though both even and odd parities have exponential lower bounds in the existential fragment.
See Table~\ref{tab:results} for an overview of all results. For each property, the three rows correspond to defining the Boolean negation of the property with no free use of the Boolean negation operator $\sneg$, defining the property itself in the same setting, and finally defining the property with free use of Boolean negation. The required formula is classified to be either polynomial or exponential with respect to the size of the corresponding atom. We always have the Boolean disjunction $\bor$ available and either the lax disjunction $\lor$ or the strict disjunction $\sor$ or both.

Finally, we consider algorithmic applications of our results and show that the complexities of satisfiability, validity and model checking for propositional and modal team logic remain the same after extension by some atoms of dependency.

\section{Preliminaries}\label{sec:preliminaries}

\begin{defi}[Teams]
 A \emph{domain} $\Phi$ is a finite set of atomic propositions.
 A \emph{$\Phi$-assignment} is a function $s \colon \Phi \to \{0,1\}$.
 A \emph{$\Phi$-team} $T$ is a (possibly empty) set of $\Phi$-functions, $T \subseteq \Phi \to \{0,1\}$.
 The set of all $\Phi$-teams is denoted by $\teams(\Phi)$.
\end{defi}

\begin{defi}[Splits]
 Let $T$ be a team.
 We say that an ordered pair $(T_1, T_2)$ of teams is a \emph{split of $T$}, if $T_1, T_2 \subseteq T$ and $T_1 \cup T_2 = T$.
 We say that a split $(T_1, T_2)$ is \emph{strict} if $T_1 \cap T_2 = \emptyset$.
 Otherwise it is \emph{lax}.
 We denote the set of splits of $T$ by $\splits(T)$, and the set of strict splits of $T$ by $\ssplits(T)$.
\end{defi}

\begin{defi}[$\PL(\Sigma,\Phi)$-formulas]
  Let $\Sigma$ be a set of connectives $\circ$ each with a designated arity $\arity{\circ} \geq 0$.
 A $\Phi$-\emph{literal} is a string of the form $\top$, $\bot$, $\sneg \top$, $\sneg \bot$, $p$, $\neg p$, $\sneg p$, or $\sneg \neg p$, where $p \in \Phi$.
 The set of $\PL(\Sigma,\Phi)$-formulas is then the smallest set containing all $\Phi$-literals and closed under connectives in $\Sigma$, \ie, if $\phi_1,\ldots,\phi_n \in \PL(\Sigma,\Phi)$ and $\arity{\circ} = n$, then $\circ(\phi_1,\ldots,\phi_n) \in \PL(\Sigma,\Phi)$.
\end{defi}

Note that when we consider a logic with free usage of Boolean negation in front of arbitrary formulas, we include $\sneg$ in the set $\Sigma$. Otherwise, we always allow the Boolean negation $\sneg$ to occur in literals. In our setting the usual empty team property of every formula being true on the empty team, fails. We motivate this choice below after Proposition~\ref{prop:expcomp}.

Let $\Prop(\phi) \subseteq \Phi$ denote the set of propositional variables that occur in the formula $\phi$.
We will omit the domain $\Phi$ if it is clear from the context or makes no difference, and write only $\PL(\Sigma)$.
We consider the following connectives:
\begin{alignat*}{3}
  & T \vDash \top               &  & \text{ always,}                                                                                   \\
  & T \vDash \bot               &  & \Leftrightarrow\; T = \emptyset                                                                   \\
  & T \vDash p                  &  & \Leftrightarrow\; \forall s \in T \colon s(p) = 1\text{,}                                         \\
  & T \vDash \neg p             &  & \Leftrightarrow\; \forall s \in T \colon s(p) = 0\text{,}                                         \\
  & T \vDash \sneg \psi         &  & \Leftrightarrow\;T \nvDash \psi\text{,}                                                           \\
  & T \vDash \psi \land \theta  &  & \Leftrightarrow\;T \vDash \psi \text{ and }T \vDash \theta\text{,}                                \\
  & T \vDash \psi \bor \theta   &  & \Leftrightarrow\;T \vDash \psi \text{ or }T \vDash \theta\text{,}                                 \\
  & T \vDash \psi \lor \theta   &  & \Leftrightarrow\;\exists (S, U) \in \splits(T) : S \vDash \psi\text{ and }U\vDash \theta\text{,}  \\
  & T \vDash \psi \sor \theta   &  & \Leftrightarrow\;\exists (S, U) \in \ssplits(T) : S \vDash \psi\text{ and }U\vDash \theta\text{,} \\
  & T \vDash \psi \laand \theta &  & \Leftrightarrow\;\forall (S, U) \in \splits(T) : S \vDash \psi\text{ or }U\vDash \theta\text{,}   \\
  & T \vDash \psi \sand \theta  &  & \Leftrightarrow\;\forall (S, U) \in \ssplits(T) : S \vDash \psi\text{ or }U\vDash \theta\text{,}
\end{alignat*}

Note that, as usually in the context of team logic, we have two different negations: a \emph{dual negation} $\neg$ and a \emph{contradictory negation} $\sneg$.
For example, we have the equivalences $\neg(p \lor q) \equiv \neg p \land \neg q$ and $\sneg(p \lor q) \equiv \sneg p \laand \sneg q$, but $\neg p \land \neg q \not\equiv \sneg p \laand \sneg q$.
Also note that in team logic we have four different logical constants, namely $\top = \neg \bot$ (always true), $\sneg \top$ (always false), $\bot = \neg \top$ (true in the empty team) and $\sneg \bot$ (true in non-empty teams).

\smallskip

We say $\phi$ \emph{entails} $\psi$, in symbols $\phi \vDash \psi$, if $T \vDash \phi$ implies $T \vDash \psi$ for all domains $\Phi \supseteq \Prop(\phi) \cup \Prop(\psi)$ and $\Phi$-teams $T$.
If $\phi \vDash \psi$ and $\psi \vDash \phi$, then we write $\phi \equiv \psi$ and say that $\phi$ and $\psi$ are \emph{equivalent}.

\begin{defi}
  A $\PL(\{\land,\lor\})$-formula that contains no $\sneg$ is a \emph{purely propositional} formula.
\end{defi}
We will consistently use the letters $\alpha,\beta,\gamma,\ldots$ for purely propositional formulas, whereas $\phi,\psi,\theta,\ldots$ will denote arbitrary formulas.

We define the shorthands $\nempty \dfn \sneg \bot$, which defines non-emptiness of teams, and $\E \alpha \dfn \top \lor (\nempty \land \alpha)$, which expresses that at least one assignment in the team satisfies the purely propositional formula $\alpha$.

Many formulas of team logic enjoy useful closure properties:

\begin{defi}
  Let $\phi$ be a $\PL(\Sigma,\Phi)$-formula.
 \begin{itemize}
  \item $\phi$ is \emph{union closed} if, for any set of $\Phi$-teams $\mathcal{T}$ such that $\forall T \in \mathcal{T} : T \vDash \phi $ we have $\bigcup\mathcal{T} \vDash \phi$.
  \item $\phi$ is \emph{downward closed} if, for any $\Phi$-teams $T_1,T_2$, if $T_2 \vDash \phi$ and $T_1 \subseteq T_2$, we have $T_1 \vDash \phi$.
  \item $\phi$ is \emph{upward closed} if, for any $\Phi$-teams $T_1,T_2$, if $T_2 \vDash \phi$ and $T_1 \supseteq T_2$, we have $T_1 \vDash \phi$.
  \item $\phi$ has the \emph{empty team property} if $\emptyset \vDash \phi$.
  \item $\phi$ is \emph{flat} if, for any $\Phi$-team $T$, $T \vDash \phi$ if and only if $\{s\} \vDash \phi$ for all $s \in T$.
 \end{itemize}
\end{defi}

\noindent
A formula is flat if and only if it is union closed, downward closed, and has the empty team property.

\begin{prop}\label{prop:strict-lax}
 Let $\phi,\psi \in \PL(\Sigma)$ such that at least one of $\phi$ and $\psi$ is downward closed.
 Then $\phi \lor \psi \equiv \phi \sor \psi$.
\end{prop}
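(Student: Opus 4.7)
The plan is to prove the two entailments separately. The easy direction $\phi \sor \psi \vDash \phi \lor \psi$ holds with no closure hypothesis at all, simply because $\ssplits(T) \subseteq \splits(T)$: any witness strict split of $T$ for $\phi \sor \psi$ is automatically a lax split witnessing $\phi \lor \psi$.

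For the nontrivial direction $\phi \lor \psi \vDash \phi \sor \psi$, assume without loss of generality that $\phi$ is downward closed (the case where $\psi$ is downward closed is symmetric). Suppose $T \vDash \phi \lor \psi$, so that there exists a lax split $(S, U) \in \splits(T)$ with $S \vDash \phi$ and $U \vDash \psi$. The idea is to trim the overlap $S \cap U$ out of the first component: set $S' \dfn S \setminus U$ and $U' \dfn U$. Then $S' \cap U' = \emptyset$, and $S' \cup U' = (S \setminus U) \cup U = S \cup U = T$, so $(S', U') \in \ssplits(T)$. Moreover $S' \subseteq S$, so by downward closure of $\phi$ we get $S' \vDash \phi$, while $U' = U \vDash \psi$ trivially. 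Hence $T \vDash \phi \sor \psi$.

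There is essentially no obstacle here; the only subtlety is remembering that we are free to choose which side to shrink, which is why it suffices that \emph{one} of the two formulas (not necessarily a specific one) is downward closed. No properties of the connectives in $\Sigma$ other than the semantics of $\lor$ and $\sor$ are needed, so the argument applies uniformly to all $\PL(\Sigma)$.
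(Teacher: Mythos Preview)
Your proof is correct and follows essentially the same approach as the paper: the easy direction uses $\ssplits(T) \subseteq \splits(T)$, and for the converse you remove the overlap from the side that is downward closed to obtain a strict split. The only cosmetic difference is that the paper phrases the case distinction as ``either $(T_1 \setminus T_2, T_2)$ or $(T_1, T_2 \setminus T_1)$ works'' instead of invoking symmetry via ``without loss of generality''.
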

\begin{proof}
 Obviously, $\phi \sor \psi$ entails $\phi \lor \psi$.
 Conversely, if $T \vDash \phi \lor \psi$ via some split $(T_1,T_2)$ of $T$, then either $T_1 \setminus T_2$ will still satisfy $\phi$ or $T_2 \setminus T_1$ will satisfy $\psi$.
 So either $(T_1 \setminus T_2, T_2)$ or $(T_1, T_2 \setminus T_1)$ is a strict split of $T$ witnessing $\phi \sor \psi$.
\end{proof}

\begin{prop}
 Every $\sneg$-free $\PL(\{\land,\lor,\sor\})$-formula is flat.
 In particular, every purely propositional formula is flat.
\end{prop}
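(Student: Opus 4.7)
The plan is to prove flatness by structural induction on the $\sneg$-free formula $\phi$, and then observe that purely propositional formulas, being $\sneg$-free $\PL(\{\land,\lor\})$-formulas, form a subclass.

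For the base case, I would just check the four $\sneg$-free literals $\top, \bot, p, \neg p$ directly. The literals $\top, p, \neg p$ are clearly flat since their semantic clauses are universally quantified over assignments in $T$. The only one that needs a moment of thought is $\bot$: if $T = \emptyset$ then $T \vDash \bot$ and the condition "$\{s\} \vDash \bot$ for all $s \in T$" holds vacuously, while if $T \neq \emptyset$ then $T \nvDash \bot$ and any $s \in T$ witnesses $\{s\} \nvDash \bot$.

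For the inductive step, I would handle the three connectives $\land, \lor, \sor$. The case $\phi = \psi \land \theta$ is immediate from the inductive hypothesis since the semantic clause commutes with the "for all singletons" quantifier. For $\phi = \psi \sor \theta$, one direction uses the induction hypothesis that $\psi$ and $\theta$ are flat and therefore have the empty team property: if $T \vDash \psi \sor \theta$ via a strict split $(S,U)$ and $s \in T$, then $s$ lies in exactly one side, say $S$, and by flatness of $\psi$ we get $\{s\} \vDash \psi$, so the strict split $(\{s\}, \emptyset)$ witnesses $\{s\} \vDash \psi \sor \theta$ (here we need $\emptyset \vDash \theta$). The converse direction partitions $T$ according to which side each singleton-split chooses and reassembles $T$ using flatness (specifically union closure, plus the empty team property in case one side is empty). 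Finally, for $\phi = \psi \lor \theta$, the inductive hypothesis tells us both $\psi$ and $\theta$ are flat and in particular downward closed, so Proposition~\ref{prop:strict-lax} gives $\psi \lor \theta \equiv \psi \sor \theta$, reducing this case to the previous one.

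The only mildly delicate point is making careful use of the empty team property both in the base case $\bot$ and in reassembling singleton-witnesses for $\sor$; apart from that, the argument is a routine structural induction. The "in particular" clause follows immediately since purely propositional formulas are defined to be $\sneg$-free $\PL(\{\land,\lor\})$-formulas, which is a subclass of $\sneg$-free $\PL(\{\land,\lor,\sor\})$-formulas.
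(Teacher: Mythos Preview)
Your proposal is correct and is precisely the structural induction the paper has in mind when it writes ``an easy inductive proof.'' The reduction of the $\lor$-case to the $\sor$-case via Proposition~\ref{prop:strict-lax} is a tidy shortcut, though one could equally well handle $\lor$ directly by the same singleton-partition argument.
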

\begin{proof}
 An easy inductive proof.
\end{proof}

An important property of propositional (and other) logics is \emph{locality}, which means that formulas depend only on the assignment to variables that actually occur in the formula.
This property can be generalized to team semantics.

\begin{defi}
 If $T$ is a $\Psi$-team and $\Phi \subseteq \Psi$, the \emph{projection of $T$ onto $\Phi$}, denoted $\restr{T}{\Phi}$, is defined as the $\Phi$-team $\Set{ \restr{s}{\Phi} | s \in T}$, where $\restr{s}{\Phi}$ is the the restriction of the function $s$ to the domain $\Phi$.
\end{defi}

\begin{defi}
 A formula $\phi \in \PL(\Sigma,\Phi)$ is \emph{local} if, for any domain $\Psi \supseteq \Phi$ and $\Psi$-team $T$, it holds $T \vDash \phi$ if and only if $\restr{T}{\Phi} \vDash \phi$.
\end{defi}

\begin{propC}[\cite{YangV17}]\label{prop:locality}
 Every $\PL(\{\land,\sneg,\lor\})$-formula is local.
\end{propC}

Note that locality quickly fails if we admit strict splitting $\sor$ (cf.~\cite{YangV17}).
The formula $\psi \dfn \sneg p \sor \sneg p \sor \sneg p$ is an easy counter-example to the locality of $\PL(\{\sor\})$.\label{p:local-counter-example}
No team with domain $\{p\}$ does satisfy $\psi$, since it needs at least three assignments in the team, but for example the maximal $\{p,q\}$-team satisfies $\psi$.

\smallskip

\begin{defi}[Satisfiability]
  A formula $\phi$ is \emph{$\Phi$-satisfiable} if $T \vDash \phi$ for at least one $\Phi$-team $T$.
\end{defi}

The domain is crucial here:
The previous example formula $\psi$ is $\{p,q\}$-satisfiable, but not $\{p\}$-satisfiable.

Often the empty team is excluded in the definition of satisfiability, especially in logics with the empty team property where otherwise every formula would be satisfiable.
This is not necessary here as these definitions are interchangeable; $\phi$ is satisfiable in a non-empty team iff $\phi \land \nempty$ is satisfiable, and $\phi$ is satisfiable iff $\top \lor \phi$ is satisfiable in a non-empty team.

Usually, for propositional team logic, $\laand$ and $\sand$ are omitted since they are definable
as $\phi \laand \psi \equiv \sneg(\sneg \phi \lor \sneg \psi)$, and $\phi \sand \psi \equiv \sneg(\sneg \phi \sor \sneg \psi)$.
If they are removed entirely, then the splitting disjunction may occur only positively, that is, splits of team may only be quantified existentially.
This fragment plays an important role in the paper.

\begin{defi}
The \emph{existential fragment} is $\PL(\{\land,\bor,\lor,\sor\})$.
\end{defi}

It is well known that team logic is inherently second-order in nature:
First-order dependence logic is actually equivalent to existential second-order logic~\cite{vaa07}, and equivalent to full second-order logic if arbitrary negation is added~\cite{KN09}.
In the same vein, propositional team logic is equivalent to second-order logic over $\{0,1\}$, and to existential second-order logic if $\sneg$ is restricted~\cite{HannulaKLV16}.
In all these results, the splitting disjunction $\lor$ simulates set quantification.
From this perspective, we call the fragment with only positive $\lor$ "existential".

Note that, unlike in the first-order setting, for propositional logics the difference between existential and full logic emerges only in succinctness, not in expressive power.
Indeed Yang and Väänänen~\cite{YangV17} showed that already the existential fragment is expressively complete:

\begin{prop}\label{prop:expcomp}
  For every set $P$ of $\Phi$-teams there is a formula $\phi$ in the existential fragment such that $T \in P \Leftrightarrow T \vDash \phi$ for all $\Phi$-teams $T$.
  In particular, for every $\Sigma$ and formula $\psi \in \PL(\Sigma,\Phi)$ there is a formula $\phi$ of the existential fragment such that $\psi \equiv \phi$.
\end{prop}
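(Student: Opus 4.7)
The plan is a straightforward normal-form construction: since $\Phi$ is finite, there are only finitely many $\Phi$-teams, so every set $P$ of $\Phi$-teams is finite. Hence it suffices to describe, for each single team $T_0$, a formula $\chi_{T_0}$ of the existential fragment such that $T \vDash \chi_{T_0} \iff T = T_0$; then $\phi \dfn \bigbor_{T_0 \in P} \chi_{T_0}$ (using the Boolean $\bor$, and for example $\sneg \top$ in case $P = \emptyset$) does the job. The second sentence of the proposition is immediate: any formula $\psi \in \PL(\Sigma, \Phi)$ picks out a set $P$ of $\Phi$-teams, which by the first part has an existential-fragment definition.

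First, for each assignment $s \colon \Phi \to \{0,1\}$ I would form the purely propositional literal conjunction $\alpha_s \dfn \bigwedge_{s(p)=1} p \land \bigwedge_{s(p)=0} \neg p$. Since $\alpha_s$ is flat, an easy check gives $T \vDash \alpha_s \iff T \subseteq \{s\}$. The empty team is thus captured by the literal $\bot$, for which $T \vDash \bot \iff T = \emptyset$.

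For a nonempty team $T_0 = \{s_1, \dots, s_n\}$, the key observation is that the lax splitting disjunction yields the left-to-right inclusion automatically: by iterating $\lor$ one has
\[
  T \vDash \alpha_{s_1} \lor \cdots \lor \alpha_{s_n} \iff T \subseteq T_0,
\]
because any split $(S_1, \dots, S_n)$ of $T$ with $S_i \vDash \alpha_{s_i}$ forces $S_i \subseteq \{s_i\}$, and conversely $T_0$ itself is split by the singletons $\{s_i\}$. To additionally force each $s_i$ to actually occur in $T$, I would use the already defined shorthand $\E \alpha_{s_i} = \top \lor (\nempty \land \alpha_{s_i})$, which lies in the existential fragment (recall $\nempty = \sneg \bot$ is a literal), and whose semantics gives $T \vDash \E \alpha_{s_i} \iff s_i \in T$. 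Thus I define
\[
  \chi_{T_0} \dfn (\alpha_{s_1} \lor \cdots \lor \alpha_{s_n}) \land \bigwedge_{i=1}^{n} \E \alpha_{s_i},
\]
and conclude $T \vDash \chi_{T_0}$ iff $T \subseteq T_0$ and every $s_i \in T$, i.e.\ iff $T = T_0$.

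There is no real obstacle here; the only point deserving care is checking that every piece stays inside the existential fragment $\PL(\{\land, \bor, \lor, \sor\})$ with $\sneg$ only in literals. All the ingredients $\bot$, $\top$, $\nempty$, $p$, $\neg p$ are literals, and the construction uses only $\land$, $\lor$, and the Boolean $\bor$. Verifying the equivalences $T \vDash \alpha_s \iff T \subseteq \{s\}$, $T \vDash \E \alpha_{s_i} \iff s_i \in T$, and the iterated-split claim are routine inductions, and combining them yields the desired $\chi_{T_0}$ and hence $\phi$.
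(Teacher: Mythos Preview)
Your argument is correct. Note, however, that the paper does not actually supply its own proof of this proposition: it is stated with attribution to Yang and V\"a\"an\"anen~\cite{YangV17} and left unproved. Your construction is the standard normal-form argument one finds in that reference (characterize each team by a conjunction of the flat formula $\bigvee_{s \in T_0}\alpha_s$ bounding $T$ from above and the conjunction of $\E\alpha_s$'s bounding it from below, then take the Boolean disjunction over $P$), and every ingredient indeed lies in the existential fragment with $\sneg$ confined to literals. There is nothing to correct; the only remark is that you could equally well cite the result rather than reprove it, as the paper does.
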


Essentially this is the reason we keep Boolean negation in literals.
While expressively complete, the fragment lacks the succinctness of full propositional team logic with free use of Boolean negation.
For this reason, we find the existential fragment to be a suitable logic to compare in terms of succinctness to full propositional team logic.

We proceed with the definition of the \emph{size} of a formula.
The literature contains many different accounts of what should be considered formula size.
We take as our basic concept the length of the formula as a string.
Since in team semantics the domain is often fixed and finite, we consider each proposition symbol to be only one symbol in the string.
In Section~\ref{sec:lower} we define another measure of formula size called width because it is more convenient for the formula size game. Since we only use width for lower bounds and length is always greater than width, we refer to length in the theorems for the lower bounds.

\begin{defi}
 The \emph{length} of a formula $\phi \in \PL(\opset)$, denoted by $|\phi|$, is the length of $\phi$ as a string, counting proposition symbols as one symbol.
\end{defi}

\smallskip

If $\alpha$ is a purely propositional formula and not an atomic proposition, then technically $\neg\alpha$ is not a formula; then by $\neg \alpha$ we refer to the formula that is obtained from $\alpha$ by pushing $\neg$ inwards using classical laws, \ie,
$\neg (\beta \land \gamma) \dfn (\neg \beta \lor \neg \gamma)$ and $\neg (\beta \lor \gamma) \dfn (\neg \beta \land \neg \gamma)$.
For tuples $\vv{\alpha} = (\alpha_1,\ldots,\alpha_n)$ and $\vv\beta = (\beta_1,\ldots,\beta_n)$ of purely propositional formulas, we write $\vv{\alpha} \equi \vv{\beta}$ for the formula $\bigwedge_{i=1}^n ((\alpha_i \land \beta_i) \lor (\neg \alpha_i \land \neg \beta_i))$
and $\vv{\alpha} \nequi \vv\beta$ for $\neg(\vv\alpha \equi \vv\beta)$.
Note that since the formula $\vv\alpha \equi \vv\beta$ is purely propositional, we may use the dual negation $\neg$ here.

By slight abuse of notation, we will write $s(\alpha)$ even if $\alpha$ is not an atomic proposition, and mean
\[
 s(\alpha) = \begin{cases}1 & \text{if }\{s\} \vDash \alpha\text{,} \\
  0 & \text{else.}
 \end{cases}
\]
Finally, $s(\vv\alpha)$ is short for the vector $(s(\alpha_1),\ldots,s(\alpha_n)) \in {\{0,1\}}^n$.

\smallskip%
\label{p:atoms}

We consider the following atoms of dependency, where $\vv\alpha$, $\vv\beta$, $\vv\gamma$ are (possibly empty) tuples of formulas:
\begin{description}
 \item[Dependence] $\dep{\vv\alpha;\vv\beta}$: \begin{align*}
        \makebox[3cm]{$T \, \vDash \, \dep{\vv\alpha;\vv\beta} \; $} \Leftrightarrow \; \forall s,s' \in T : s(\vv\alpha)=s'(\vv\alpha)  \Rightarrow s(\vv\beta) = s'(\vv\beta) 
       \end{align*}
 \item[Independence] $\vv\alpha \perp \vv\beta$: \begin{align*}
        \makebox[3cm]{$T \, \vDash \, \vv\alpha \perp \vv\beta \;$} \Leftrightarrow \;\forall s, s' & \in  T : \exists s'' \in T :  s(\vv\alpha) = s''(\vv\alpha) \text{ and }s'(\vv\beta) = s''(\vv\beta) 
       \end{align*}
 \item[Conditional independence] $\vv\alpha \perp_{\vv\beta} \vv\gamma$: \begin{align*}
        \makebox[3cm]{$T \, \vDash \, \vv\alpha \perp_{\vv\beta} \vv\gamma \;$} \Leftrightarrow \;\forall s, s' & \in  T : \text{ if } s(\vv\beta) = s'(\vv\beta) \text{ then }                                                 \\ 
        & \exists s'' \in T :  s(\vv\alpha\vv\beta) = s''(\vv\alpha\vv\beta) \text{ and }s'(\vv\gamma) = s''(\vv\gamma)
       \end{align*}
 \item[Inclusion] $\vv\alpha \subseteq \vv\beta$, where $\vv\alpha$ and $\vv\beta$ have equal length:
       \begin{align*}
         & \makebox[3cm]{$T \, \vDash \, \vv\alpha \subseteq \vv\beta \; $}\Leftrightarrow \; \forall s \in T \, \exists s' \in T : s(\vv\alpha) = s'(\vv\beta) 
       \end{align*}
 \item[Exclusion] $\vv\alpha \mid \vv\beta$, where $\vv\alpha$ and $\vv\beta$ have equal length:
       \begin{align*}
         & \makebox[3cm]{$T \, \vDash \, \vv\alpha \mid \vv\beta \; $} \Leftrightarrow \; \forall s \in T \, \forall s' \in T : s(\vv\alpha) \neq s'(\vv\beta) 
       \end{align*}
 \item[Anonymity] $\vv\alpha \anon \vv\beta$:
       \begin{align*}
         & \makebox[3cm]{$T \, \vDash \, \vv\alpha \anon \vv\beta \;$} \Leftrightarrow \; \forall s \in T \, \exists s' \in T :  s(\vv\alpha) = s'(\vv\alpha) \text{ and } s(\vv\beta) \neq s'(\vv\beta) 
       \end{align*}
\end{description}

\noindent
Originally, the dependence and independence atoms were introduced in the first-order setting by Väänänen~\cite{vaa07} and Grädel and Väänänen~\cite{GV13}.
Inclusion and exclusion were considered by Galliani~\cite{ga12}.
The anonymity atom is due to Väänänen~\cite{Vaa}.
The propositional counterparts we study here, except for the anonymity atom, were first studied by Yang~\cite{Yang14}.

\begin{prop}
  Let $\Sigma = \{\land,\bor,\lor\}$ or $\Sigma = \{\land,\bor,\sor\}$.
  The atoms of dependence, conditional independence, inclusion, exclusion and anonymity are expressible by $\PL(\Sigma)$-formulas of size $2^{\calO(n)}$.
\end{prop}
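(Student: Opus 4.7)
The plan is to give, for each atom, an explicit translation into $\PL(\Sigma)$ of size $2^{\calO(n)}$ by enumerating the exponentially many value patterns that the tuples $\vv\alpha,\vv\beta,\vv\gamma$ can realize on a team. The key building blocks are the purely propositional (hence flat) formulas $\vv\alpha \equi \vv{v}$ and $\vv\alpha \nequi \vv{v}$ for each constant $\vv{v} \in \{0,1\}^{|\vv\alpha|}$, which, read as team formulas, express "every assignment has $\vv\alpha$-value $\vv{v}$" and "no assignment has $\vv\alpha$-value $\vv{v}$", respectively. For a constant $\vv{v}$ each of these has length $\calO(n)$.

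For \emph{dependence}, I would split the team by $\vv\alpha$-value and assert constancy of $\vv\beta$ on each class via an inner Boolean disjunction:
\[
  \dep{\vv\alpha;\vv\beta} \;\equiv\; \bigvee_{\vv{v}} \Bigl( (\vv\alpha \equi \vv{v}) \land \bigbor_{\vv{u}} (\vv\beta \equi \vv{u}) \Bigr).
\]
Any lax split into these conjuncts is forced to partition $T$ by $\vv\alpha$-value, and the $\bigbor$ then witnesses a single $\vv\beta$-value on each class. For the remaining atoms, each condition is a conjunction (over the relevant value tuples) of a Boolean implication of the form "if pattern $P$ appears in $T$, then pattern $Q$ appears in $T$". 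Since $\neg\E(\psi)$ for $\psi$ purely propositional is just the flat formula $\neg\psi$, such implications rewrite using only Boolean $\bor$ and the shorthand $\E$, yielding
\begin{align*}
  \vv\alpha \subseteq \vv\beta
    & \;\equiv\; \bigwedge_{\vv{v}} \bigl( (\vv\alpha \nequi \vv{v}) \bor \E(\vv\beta \equi \vv{v}) \bigr),                                                                                             \\
  \vv\alpha \mid \vv\beta
    & \;\equiv\; \bigwedge_{\vv{v}} \bigl( (\vv\alpha \nequi \vv{v}) \bor (\vv\beta \nequi \vv{v}) \bigr),                                                                                              \\
  \vv\alpha \anon \vv\beta
    & \;\equiv\; \bigwedge_{\vv{v},\vv{u}} \bigl( (\vv\alpha\vv\beta \nequi \vv{v}\vv{u}) \bor \E((\vv\alpha \equi \vv{v}) \land (\vv\beta \nequi \vv{u})) \bigr),                                      \\
  \vv\alpha \perp \vv\beta
    & \;\equiv\; \bigwedge_{\vv{v},\vv{u}} \bigl( (\vv\alpha \nequi \vv{v}) \bor (\vv\beta \nequi \vv{u}) \bor \E((\vv\alpha \equi \vv{v}) \land (\vv\beta \equi \vv{u})) \bigr),                       \\
  \vv\alpha \perp_{\vv\beta} \vv\gamma
    & \;\equiv\; \bigwedge_{\vv{v},\vv{w},\vv{u}} \bigl( (\vv\alpha\vv\beta \nequi \vv{v}\vv{w}) \bor (\vv\beta\vv\gamma \nequi \vv{w}\vv{u}) \bor \E(\vv\alpha\vv\beta\vv\gamma \equi \vv{v}\vv{w}\vv{u}) \bigr).
\end{align*}
Each displayed formula has at most $2^{\calO(n)}$ outer conjuncts or disjuncts and each inner subformula is of length $\calO(n)$, so the total size is $2^{\calO(n)}$.

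Since $\E\alpha = \top \lor (\nempty \land \alpha)$ uses $\lor$, the translations fit $\Sigma = \{\land,\bor,\lor\}$ verbatim. For $\Sigma = \{\land,\bor,\sor\}$, Proposition~\ref{prop:strict-lax} lets me replace each $\lor$ by $\sor$ throughout, since $\top$ is downward closed and the outer disjuncts in the dependence translation are conjunctions of flat formulas (hence downward closed). The main obstacle is then just the verification of each equivalence by unfolding the team-semantic definitions — routine for inclusion, exclusion, anonymity, and independence, but slightly delicate for conditional independence, where the joint quantification over $(\vv{v},\vv{w},\vv{u})$ must be carefully matched to the nested "if $s(\vv\beta)=s'(\vv\beta)$ then $\ldots$" structure of the atom.
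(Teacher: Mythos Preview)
Your proposal is correct and follows essentially the same approach as the paper: exhibit explicit translations that enumerate the $2^{\calO(n)}$ possible value patterns using the flat formulas $\vv\alpha \equi \vv{v}$, $\vv\alpha \nequi \vv{v}$ and the shorthand $\E$, then handle $\sor$ by replacing $\lor$. Your formulas for inclusion, exclusion and unconditional independence are identical to the paper's; for dependence, anonymity and conditional independence you give different but equally valid encodings (the paper writes constancy of $\vv\beta$ as $\bigwedge_i(\beta_i \bor \neg\beta_i)$ rather than your $\bigbor_{\vv{u}}(\vv\beta \equi \vv{u})$, expresses anonymity via an outer splitting $\bigvee$ by $\vv\alpha$-value, and defines conditional independence by splitting on the conditioning tuple and asserting plain independence on each slice), and your appeal to Proposition~\ref{prop:strict-lax} for the $\sor$ case is actually more explicit than the paper's ``easy to check''.
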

\begin{proof}
  See Table~\ref{tab:translations} for $\PL(\{ \land,\bor,\lor \})$-formulas defining each atom. We prove the case of the inclusion atom and leave the rest to the reader.

  Let $\phi$ be the defining formula of Table~\ref{tab:translations} for the inclusion atom and let $\vv\alpha = (\alpha_1, \dots, \alpha_n)$ and $\vv\beta = (\beta_1, \dots, \beta_n)$ be tuples of purely propositional formulas.
  Assume $T \vDash \vv\alpha \subseteq \vv\beta$ and let $\vv{c} \in {\{\top, \bot\}}^n$.
  If there is an assignment $t \in T$ such that $t(\vv\alpha) = t(\vv{c})$, then by the inclusion atom there is another assignment $t' \in T$ such that $t(\vv{c}) = t'(\vv{c}) = t'(\vv\beta)$.
  Thus $\E(\vv\beta = \vv{c})$ holds.
  If there is no such assignment $t$, then $T \vDash \vv\alpha \neq \vv{c}$ holds.
  For every $\vv{c}$ the Boolean disjunction $(\vv\alpha \neq \vv{c}) \bor \E(\vv\beta = \vv{c})$ holds, so $T \vDash \phi$.

  Conversely, assume $T \vDash \phi$. Let $t \in T$ be an assignment. Let $t(\vv\alpha) = \vv{b} \in {\{0,1\}}^n$, and let $\vv{s} \in {\{\top,\bot\}}^n$ such that $t(\vv{s}) = \vv{b}$.
  Now clearly $\vv\alpha \neq \vv{s}$ does not hold so $\E(\vv\beta = \vv{s})$ holds.
  Consequently, there is an assignment $t' \in T$ such that $s'(\beta) = \vv{b} = s(\vv\alpha)$, so $T$ satisfies the inclusion atom.

  For $\PL(\{ \land,\bor, \sor\})$, it is easy to check that replacing each occurrence of $\lor$ with $\sor$ leads to an equivalent formula.
\end{proof}

\begin{table}[t]
\centering
\begin{tabular}{CCL}
\toprule
\dep{\vv \alpha; \vv\beta} & \equiv &
\begin{aligned}
  \bigvee_{\vv{c}\+\in \{\top,\bot\}^{n}} ((\vv\alpha \equi \vv{c}) \land \bigwedge_{i=1}^m (\beta_i \bor \neg \beta_i))
\end{aligned}\\[6mm]
\vv{\alpha} \perp_{\vv{\gamma}} \vv{\beta} & \equiv &
\begin{aligned}
  \bigvee_{\vv{c}\+ \in \{\top,\bot\}^k} ((\vv\gamma \equi \vv c) \land (\vv{\alpha} \perp \vv{\beta}))
\end{aligned}\\[6mm]
\vv{\alpha} \perp \vv{\beta} & \equiv &
\begin{aligned}
  \bigwedge_{\substack{\vv{c}\+ \in \{\top,\bot\}^{n}\\\vv{c}\+' \in \{\top,\bot\}^m }} (\vv{\alpha} \nequi \vv{c}) \bor (\vv{\beta} \nequi \vv{c}\+') \bor \E \big((\vv{\alpha} \equi \vv{c}) \land (\vv{\beta} \equi \vv{c}\+')\big)
\end{aligned}\\[6mm]
\vv{\alpha} \subseteq \vv{\beta} & \equiv &
\begin{aligned}
  \bigwedge_{\vv{c} \+\in \{\top,\bot\}^{n}} (\vv{\alpha} \nequi \vv{c}) \bor \E (\vv{\beta} \equi \vv{c})
\end{aligned}\\[6mm]
\vv{\alpha} \mid \vv{\beta} & \equiv &
\begin{aligned}
  \bigwedge_{\vv{c} \+\in \{\top,\bot\}^{n}} (\vv{\alpha} \nequi \vv{c}) \bor (\vv{\beta} \nequi \vv{c})
\end{aligned}\\[6mm]
\vv{\alpha} \anon \vv{\beta} & \equiv &
\begin{aligned}
  \bigvee_{\vv{c}\+ \in \{\top,\bot\}^{n}} \big(\vv\alpha \equi \vv{c} \land \bigbor_{i=1}^m (\E \beta_i \land \E \neg \beta_i)\big)
\end{aligned}\\[6mm]
\bottomrule
\end{tabular}
 \captionsetup{font=small,skip=0.5em,position=bottom}
\caption{Exponential translations of atoms in the existential fragment, where $\vv\alpha = (\alpha_1,\ldots,\alpha_n)$, $\vv\beta = (\beta_1,\ldots,\beta_m)$ (with $n = m$ for $\subseteq$ and $\mid$) and $\vv\gamma = (\gamma_1,\ldots,\gamma_k)$.\label{tab:dep-atoms}
}\label{tab:translations}
\end{table}

\section{Exponential lower bounds for team properties}\label{sec:lower}

Though the length of a formula is the most immediate measure of formula size, it is not the most practical one in terms of defining a formula size game. For a measure better suited to the game we have chosen the number of literals in a formula, which we call width. As the name suggests, width corresponds to the number of leaves in the syntax tree of the formula.

\begin{defi}
	The \emph{width} of a formula $\phi \in \PL(\opset)$, denoted by $\size(\phi)$, is defined recursively as follows:
	\begin{itemize}
		\item $\size(l) = 1$ for a literal $l$,
		\item $\size(\psi \circ \theta) = \size(\psi) + \size(\theta)$, where $\circ \in \opset$ is binary,
		\item $\size(\circ \psi) = \size(\psi)$, where $\circ \in \opset$ is unary.
	\end{itemize}
\end{defi}
\noindent For the actual upper and lower bounds we prove, the difference between length and width is inconsequential. The number of binary connectives, and therefore parentheses, depends on the number of literals and the number of negations of either kind for a minimal formula is bounded by the number of literals. Note that for the game we also assume formulas to be in negation normal form, but this doesn't affect the width of formulas.

\subsection{A formula size game for team semantics}

Let $\AA_0$ and $\BB_0$ be sets of $\Phi$-teams and let $k_0$ be a natural number.
Let $\opset \subseteq \{\bor, \band, \laor, \laand, \sor, \sand\}$ be a set of connectives.
Note that if the strong negation $\sneg$ is freely available in the fragment under consideration, then either none or both of each pair of dual operators must be included in $\opset$.

The formula size game $\FS^\opset_{k_0}(\AA_0, \BB_0)$ for $\PL(\opset)$ has two players, S (Samson) and D (Delilah).
Positions of the game are of the form $(k, \AA, \BB)$, where $\AA$ and $\BB$ are sets of teams and $k$ is a natural number.

The goal of S is to construct a formula $\phi$ that \emph{separates} $\AA$ from $\BB$, which means that $T \vDash \phi$ for every team $T \in \AA$, denoted $\AA \vDash \phi$ and $T \nvDash \phi$ for every team $T \in \BB$, denoted $\BB \vDash \sneg \phi$. Note that $\BB \vDash \sneg \phi$ is different from $\BB \nvDash \phi$ since the first states that no team in $\BB$ satisfies $\phi$ and the second only that not all teams in $\BB$ satisfy $\phi$.

The starting position is $(k_0, \AA_0, \BB_0)$.
If $k_0 = 0$, D wins the game.
In a position $(k, \AA, \BB)$ with $k \geq 1$, S must make one of $|\Sigma|+1$ moves to continue the game.
The available moves are the ones given by $\Sigma$ and the literal move.
The moves work as follows:
\begin{itemize}
	\item $\bor$-move: S chooses subsets $\AA_1, \AA_2 \subseteq \AA$ such that $\AA_1 \cup \AA_2 = \AA$ and natural numbers $k_1, k_2 > 0$ such that $k_1 + k_2 = k$.
Then D chooses $i \in \{1,2\}$.
The game continues from the position $(k_i, \AA_i, \BB)$.

	\item $\band$-move: Same as the $\bor$-move with the roles of $\AA$ and $\BB$ switched.

	\item $\laor$-move: For every team $A \in \AA$, S chooses a split $(A_1, A_2)$.
Let $\AA_i = \{A_i \mid A \in \AA\}$ for $i \in \{1,2\}$.
For every team $B \in \BB$, S chooses a function $f_B: \splits(B) \to \{1,2\}$.
Let $\BB_i = \{B_i \mid f_B(B_1, B_2) = i, (B_1, B_2) \in \splits(B), B \in \BB \}$ for $i \in \{1,2\}$.
Finally, S chooses natural numbers $k_1, k_2 > 0$ such that $k_1 + k_2 = k$.
Then D chooses a number $i \in \{1,2\}$.
The game continues from the position $(k_i, \AA_i, \BB_i)$.

	\item $\laand$-move: Same as the $\laor$-move with the roles of $\AA$ and $\BB$ switched.

	\item $\sor$-move: Same as the $\laor$-move except all splits $(A_1, A_2)$ and $(B_1, B_2)$ considered are strict.

	\item $\sand$-move: Same as the $\sor$-move with the roles of $\AA$ and $\BB$ switched.

	\item Literal move: S chooses a $\Phi$-literal $l$.
	If $l$ separates $\AA$ from $\BB$, S wins.
	Otherwise, D wins.
\end{itemize}
While the definition of the $\lor$-move is quite technical, the intuition is well grounded in the semantics of the connective $\lor$. Let us assume S has a formula in mind with $\lor$ as the outermost connective. On the $\AA$-side S simply splits each team $A$ into two teams, $A_1$ and $A_2$, such that $A_1$ satisfies the left disjunct and $A_2$ satisfies the right one. The $\BB$-side is more involved. S claims that no team in $\BB$ satisfies the disjunction so for each team $B$ and each split of that team, $(B_1, B_2)$, he must choose which $B_i$ does not satisfy the corresponding disjunct. These choices are gathered in the function $f_B$ for each team. Finally $\BB_1$ gathers all of the teams S has claimed to not satisfy the first disjunct, and the same for $\BB_2$ and the second disjunct.

The number $k$ can be considered a resource for S in the following sense. Since for all the connective moves $k_1, k_2 > 0$, the number $k$ decreases in each move, and if $k = 1$, only the literal move is available. Thus, in a finite number of moves, S expends his resource $k$ and must eventually make a literal move which will end the game and one of the players will win.

We first prove that winning strategies for the formula size game $\FS^\opset_{k_0}(\AA_0, \BB_0)$ correspond to $\PL(\opset)$-formulas of size at most $k_0$ that separate $\AA_0$ from $\BB_0$.

\begin{thm}\label{gametheorem}
	Let $\AA_0$ and $\BB_0$ be sets of teams and let $k_0 \in \Nset$.
Then the following conditions are equivalent:
	\begin{enumerate}[10000]
		\item[${(1)}_{k_0}$] S has a winning strategy for the game $\FS^\opset_{k_0}(\AA_0, \BB_0)$.
		\item[${(2)}_{k_0}$] There is a formula $\phi \in \PL(\opset)$ with $\size(\phi) \leq k_0$ which separates $\AA_0$ from $\BB_0$.
	\end{enumerate}
\end{thm}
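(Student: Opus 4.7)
The plan is to prove both implications by induction on $k_0$, with the outermost connective of $\phi$ corresponding move-for-move to S's first move. The base case $k_0 = 0$ is vacuous, since D wins by definition and no $\PL(\opset)$-formula has width $0$. When $k_0 = 1$ only the literal move is available, so S has a winning strategy iff some $\Phi$-literal separates $\AA$ from $\BB$, which coincides with the existence of a separating formula of width $1$.

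For the direction ${(2)}_{k_0} \Rightarrow {(1)}_{k_0}$ in the inductive step, given a separating $\phi$ of width at most $k_0$, I let S play the move matching the outermost connective. For $\phi = \psi \bor \theta$, S chooses $\AA_1 \dfn \Set{A \in \AA | A \vDash \psi}$ and $\AA_2 \dfn \Set{A \in \AA | A \vDash \theta}$ (whose union covers $\AA$ since every $A \in \AA$ satisfies $\phi$), together with $k_1, k_2 > 0$ summing to $k_0$ and satisfying $k_1 \geq \size(\psi)$, $k_2 \geq \size(\theta)$; since $\BB \vDash \sneg \phi$, both $\psi$ and $\theta$ remain separating, and each continuation is won by induction (padding of resources is harmless). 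The $\band$-move is dual, splitting $\BB$ instead of $\AA$. For $\phi = \psi \lor \theta$, S picks for each $A \in \AA$ a split witnessing $A \vDash \phi$; for each $B \in \BB$ and $(B_1,B_2) \in \splits(B)$, the assumption $B \nvDash \phi$ guarantees $B_1 \nvDash \psi$ or $B_2 \nvDash \theta$, so S sets $f_B(B_1,B_2) \dfn 1$ in the former case and $2$ otherwise. The resulting $\AA_i, \BB_i$ are separated by the corresponding subformula, so induction applies. The $\laand$, $\sor$, and $\sand$ moves are handled analogously, restricting to strict splits in the latter two cases.

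For ${(1)}_{k_0} \Rightarrow {(2)}_{k_0}$, I read a formula off S's winning strategy. A literal move yields the separating literal directly. A $\bor$-move producing subgames $(k_i, \AA_i, \BB)$ yields, by induction, separating formulas $\phi_i$ of width at most $k_i$; then $\phi_1 \bor \phi_2$ separates $\AA$ from $\BB$ and has width at most $k_0$, since every $A \in \AA$ lies in some $\AA_i$ and no $B \in \BB$ satisfies either disjunct. A $\lor$-move with data $(A_1,A_2)$, $f_B$, $k_1, k_2$ yields by induction $\phi_i$ separating $\AA_i$ from $\BB_i$; the split $(A_1,A_2)$ chosen by S witnesses $A \vDash \phi_1 \lor \phi_2$ for each $A \in \AA$, while any hypothetical split $(B_1,B_2)$ witnessing $B \vDash \phi_1 \lor \phi_2$ for some $B \in \BB$ would, setting $i \dfn f_B(B_1,B_2)$, place $B_i$ in $\BB_i$ with $B_i \vDash \phi_i$, contradicting separation.

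The main obstacle is the bookkeeping for the splitting moves, where on the $\BB$-side an entire \emph{family} of splits must be simultaneously ruled out rather than a single one; the functions $f_B$ appearing in the $\lor$-move are designed precisely to encode this quantifier alternation, so once the induction hypothesis is invoked in the right form the contradiction is immediate. The remaining $\laand$, $\sor$, and $\sand$ moves follow the same template by duality and by restricting to strict splits, and the width accounting $k_1 + k_2 \leq k_0$ is straightforward in all cases.
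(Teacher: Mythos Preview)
Your proposal is correct and follows essentially the same approach as the paper's own proof: induction on $k_0$, with the outermost connective of a separating formula matched to S's first move in each direction, handling the splitting connectives via the $f_B$ functions exactly as you describe and omitting the dual cases by symmetry. Your explicit treatment of $k_0 = 0$ is a small bonus; the paper starts directly at $k_0 = 1$.
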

\begin{proof}
	We prove the equivalence of ${(1)}_{k_0}$ and ${(2)}_{k_0}$ by induction on $k_0$.

	Let $k_0 = 1$. The only type of move available for S is the literal move, so S if has a winning strategy, then there is a literal that separates $\AA_0$ from $\BB_0$. Conversely, the only formulas with size at most 1 are literals so if such a formula exists, then S wins by choosing that formula for a literal move.

	Let $k_0 > 1$ and assume that the equivalence of ${(1)}_k$ and ${(2)}_k$ holds for all natural numbers $k < k_0$ and all sets of teams $\AA$ and $\BB$.

    \smallskip
	${(1)}_{k_0} \Rightarrow {(2)}_{k_0}$:
	Let $\delta$ be a winning strategy of $S$ for the game $\FS^\opset_{k_0}(\AA_0, \BB_0)$.
	We divide the proof into cases according to the first move of $\delta$.
	We handle all operators possibly in $\opset$ except for dual cases.
	\begin{itemize}
		\item Literal move: Since S is playing according to the winning strategy $\delta$, the literal $l$ chosen by S separates $\AA_0$ from $\BB_0$.
		In addition, $\size(l) = 1 \leq k_0$.

		\item $\bor$-move: Let $(k_1, \AA_1, \BB_0)$ and $(k_2, \AA_2, \BB_0)$ be the successor positions chosen by S according to $\delta$.
		Since $\delta$ is a winning strategy, S has a winning strategy for both games $\FS^\opset_{k_i}(\AA_i, \BB_0)$.
		By induction hypothesis, there are formulas $\psi_i$ with $\size(\psi_i) \leq k_i$ that separate $\AA_i$ from $\BB_0$.
		Let $\phi = \psi_1 \bor \psi_2$.
		We have $\AA_0 = \AA_1 \cup \AA_2$ so $\AA_0 \vDash \phi$.
		On the other side we have $\BB_0 \vDash \sneg \psi_1$ and $\BB_0 \vDash \sneg \psi_2$ so $\BB \vDash \sneg \phi$.
		Finally $\size(\phi) = \size(\psi_1) + \size(\psi_2) \leq k_1 + k_2 = k_0$.

		\item $\laor$-move: Let $(k_1, \AA_1, \BB_1)$ and $(k_2, \AA_2, \BB_2)$ be the successor positions chosen by S according to $\delta$.
		Again by induction hypothesis there are formulas $\psi_i$ with $\size(\psi_i) \leq k_i$ which separate $\AA_i$ from $\BB_i$.
		Let $\phi = \psi_1 \laor \psi_2$.
		For each $A \in \AA_0$ S chose a split $(A_1, A_2)$.
		Now $A_1 \vDash \psi_1$ and $A_2 \vDash \psi_2$ so $A \vDash \phi$.
		On the other side, for each $B \in \BB_0$, S chose a function $f_B : \splits(B) \to \{1,2\}$.
		For each $(B_1, B_2) \in \splits(B)$, if $f_B(B_1, B_2) = i$, then $B_i \nvDash \psi_i$.
		Thus $B \nvDash \phi$.
		The width of $\phi$ is as in the previous case.

		\item $\sor$-move: Same as the $\laor$-move except all splits considered are strict.
	\end{itemize}

    \smallskip
	${(2)}_{k_0} \Rightarrow {(1)}_{k_0}$:
	Let $\phi \in \PL(\opset)$ with $\size(\phi) \leq k_0$ which separates $\AA_0$ from $\BB_0$.
	We give the first move of the winning strategy of S for the game $\FS^\opset_{k_0}(\AA_0, \BB_0)$.
	Then the following position $(k, \AA, \BB)$ is a valid starting position for a game $\FS^\opset_{k}(\AA, \BB)$.
	We can obtain a winning strategy for S in this new game using the induction hypothesis.
	We finally obtain the full winning strategy for S by combining the first move described below to the strategy given by the induction hypothesis.
	We divide the proof into cases according to the outermost connective of $\phi$.
	We again handle only one of each pair of dual cases.
	\begin{itemize}
		\item $\phi$ is a literal: We know that $\phi$ separates $\AA_0$ from $\BB_0$ so S wins by making a literal move choosing $\phi$.

		\item $\phi = \psi_1 \bor \psi_2$: S chooses $\AA_i = \{ A \in \AA_0 \mid A \vDash \psi_i\}$ for $i \in \{1,2\}$, $k_1 = \size(\psi_1)$ and $k_2 = k-k_1$.
		Since $\phi$ separates $\AA_0$ from $\BB_0$, we have $\AA_0 \vDash \phi$ so $\AA_1 \cup \AA_2 = \AA$.
		On the other side, $\BB_0 \vDash \sneg\phi$ so $B \nvDash \psi_1$ and $B \nvDash \psi_2$ for every $B \in \BB_0$.
		Now, no matter which number $i \in \{1,2\}$ D chooses, in the following position $(k_i, \AA_i, \BB_0)$, the formula $\psi_i$ will separate $\AA_i$ from $\BB_0$.
		In addition, $k_1 \leq \size(\psi_1)$ and $k_2 = k_0 - k_1 \leq \size(\phi) - \size(\psi_1) = \size(\psi_2)$.
		By induction hypothesis S has a winning strategy for both games $\FS^\opset_{k_i}(\AA_i, \BB_0)$.

		\item  $\phi = \psi_1 \laor \psi_2$: Again we have $\AA_0 \vDash \phi$ so for every $A \in \AA_0$, there is a split $(A_1, A_2)$ such that $A_1 \vDash \psi_1$ and $A_2 \vDash \psi_2$.
		S chooses such a split for every $A \in \AA_0$.
		On the other side, $\BB_0 \vDash \sneg \phi$ so for every $B \in \BB_0$ and every split $(B_1, B_2)$ we have $B_1 \nvDash \psi_1$ or $B_2 \nvDash \psi_2$.
		For each $B \in \BB_0$, S chooses $f_B$ so that if $f_B(B_1, B_2) = i$, then $B_i \nvDash \psi_i$.
		Now, no matter which number $i \in \{1,2\}$ D chooses, in the following position $(k_i, \AA_i, \BB_i)$, the formula $\psi_i$ will separate $\AA_i$ from $\BB_i$.
		S deals with the resource $k_0$ just like in the previous case.
		By induction hypothesis S has a winning strategy for both games $\FS^\opset_{k_i}(\AA_i, \BB_i)$.

		\item $\phi = \psi_1 \sor \psi_2$: Same as the $\laor$-case except all splits considered are strict.
		\qedhere
	\end{itemize}
\end{proof}

\noindent
Before we move on to the lower bounds, we prove a very standard lemma for formula size games stating that if at any time the same team ends up on both sides of the game, D wins.

\begin{lem}\label{sameteam}
	If in a position $P = (k, \AA, \BB)$ there is a team $T \in \AA \cap \BB$, D has a winning strategy from position $P$.
\end{lem}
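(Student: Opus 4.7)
The plan is to prove the lemma by induction on the resource $k$. The base case $k = 0$ is immediate: the game rules declare D the winner whenever the resource is exhausted. For $k \geq 1$, S must make some move, and I would argue by cases on which move S chooses, showing in each case that D can respond so that a common team (possibly a subteam of $T$) survives in the successor position, at which point the induction hypothesis applies since the resource strictly decreases.

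The literal move is the decisive observation: any literal $l$ S plays must separate $\AA$ from $\BB$ for S to win, but $T \in \AA \cap \BB$ would force $T \vDash l$ and $T \nvDash l$ simultaneously, which is impossible, so D wins outright. For the Boolean $\bor$-move, S partitions $\AA$ as $\AA_1 \cup \AA_2$, and since $T \in \AA$ lies in some $\AA_i$, D picks this $i$ and the successor is $(k_i, \AA_i, \BB)$ with $T \in \AA_i \cap \BB$; since $k_i < k$, induction applies. The $\band$-move is symmetric with the roles of $\AA$ and $\BB$ interchanged.

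The splitting moves require slightly more care, and this is where the small but only real subtlety lies. In a $\laor$-move S has chosen for the copy of $T$ on the $\AA$-side a particular split $(T_1, T_2)$, and for the copy of $T$ on the $\BB$-side a function $f_T \colon \splits(T) \to \{1,2\}$. Crucially, $f_T$ is defined on \emph{every} split of $T$, including the specific split $(T_1, T_2)$ that S already used on the $\AA$-side. Setting $i \dfn f_T(T_1, T_2)$, by the definitions of $\AA_i$ and $\BB_i$ the team $T_i$ belongs to both. D chooses this $i$, and the induction hypothesis yields a winning strategy from $(k_i, \AA_i, \BB_i)$. The $\sor$-move is handled identically once one observes that the split S chose on the $\AA$-side is strict, so it lies in the domain of $f_T$ on the $\BB$-side. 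The $\laand$- and $\sand$-moves are fully dual.

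The main obstacle, if one can call it that, is just the bookkeeping for the splitting moves: verifying that the team surviving on both sides is genuinely the same team, which hinges on reusing S's $\AA$-side split as input to $f_T$ on the $\BB$-side. Everything else is a direct consequence of the fact that D controls the choice of $i \in \{1,2\}$ in every connective move and can therefore always follow the surviving witness of membership in $\AA \cap \BB$ down the game tree.
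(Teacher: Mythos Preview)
Your proposal is correct and follows essentially the same argument as the paper: maintain a common team in $\AA\cap\BB$ through every connective move, with the key observation for the splitting moves being to feed S's $\AA$-side split of $T$ into $f_T$ on the $\BB$-side. The only cosmetic difference is that the paper phrases this as an invariant-maintenance argument (``D can maintain the condition $T\in\AA\cap\BB$; since S must eventually make a literal move, D wins''), whereas you frame it as an explicit induction on $k$; these are equivalent packagings of the same proof.
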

\begin{proof}
	As long as there is $T \in \AA \cap \BB$, if S makes a literal move,
	D wins.
	We show that D can maintain this condition. We again omit the cases of dual operators.
	\begin{itemize}
		\item $\bor$-move: S chooses sets $\AA_1, \AA_2 \subseteq \AA$.
		Since $\AA_1 \cup \AA_2 = \AA$, we have $T \in \AA_i$ for some $i \in \{1,2\}$.
		Then D chooses the following position $(k_i, \AA_i, \BB)$ and we have $T \in \AA_i \cap \BB$.
		\item $\laor$-move: Let $(T_1, T_2)$ be the split S chooses for $T$ on the left side.
		On the right side S must choose $i = f_T(T_1, T_2) \in \{1,2\}$.
		Then D chooses the following position $(k_i, \AA_i, \BB_i)$ and we have $T_i \in \AA_i \cap \BB_i$.
		\item $\sor$-move: Same as the $\laor$-move except the split must be strict.
	\end{itemize}
	Since S must eventually make a literal move,
	D wins the game.
\end{proof}

\subsection{Lower bounds via the formula size game}

In this section we use the formula size game to show lower bounds for the lengths of formulas defining atoms of dependency in the positive fragment of propositional team logic.
We first state all of the bounds as a theorem and prove them in the rest of the section.

For natural numbers $k$ and $m$, ${[k]}_m$ is the remainder of $k$ modulo $m$.

\begin{thm}\label{thm:lower-bounds}
	Let $\Sigma = \{\bor, \band, \laor, \sor\}$, $n,m \geq 1$, and $\Phi_n = \{p_1, \dots, p_n\}$.
	\begin{enumerate}
		\item If $m \leq 2^n$ and $k < m$, then a $\PL(\Sigma)$-formula, that defines the property $|T| \equiv k \mod m$ of $\Phi_n$-teams $T$, has length at least $2^n-{[2^n - k]}_m$. In particular, a formula that defines even parity has length at least $2^n$.
		\item A $\PL(\Sigma)$-formula that defines cardinality $k \leq 2^n$ of $\Phi_n$-teams has length at least $k$.
		\item A $\PL(\Sigma)$-formula that defines $p_1 \cdots p_n \subseteq q_1 \cdots q_n$ has length at least $2^n$.
		\item A $\PL(\Sigma)$-formula that defines $p_1 \cdots p_n \perp q_1 \cdots q_m$ has length at least $2^{n+m}$.
		\item A $\PL(\Sigma)$-formula that defines $p_1 \cdots p_n \anon q$ has length at least $2^{n+1}$.
	\end{enumerate}
\end{thm}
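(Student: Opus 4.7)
The plan is to invoke Theorem~\ref{gametheorem}: for each of the five claims I would fix sets $\AA_0$ of teams satisfying the asserted property and $\BB_0$ of teams violating it, and then describe a winning strategy for Delilah in the game $\FS^\Sigma_{k-1}(\AA_0, \BB_0)$, where $k$ denotes the claimed length lower bound. Since the length of any formula dominates its width, a width lower bound of $k$ translates into the desired length lower bound.

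In each case Delilah's strategy maintains an invariant---a "cost" counter bounding Samson's necessary remaining budget---together with the guarantee from Lemma~\ref{sameteam} that once the budget is exhausted there is some team in $\AA \cap \BB$ and Samson loses any literal move. The Boolean moves $\bor$ and $\band$ are routine: the invariant is additive over $\AA_1, \AA_2$ (resp.\ $\BB_1, \BB_2$) and Delilah picks the side whose share exceeds the corresponding sub-budget $k_i$. The splitting moves $\laor$ and $\sor$ are the delicate case: Samson simultaneously picks a split $(A_1, A_2)$ of every $A \in \AA$ and a response function $f_B$ on every $B \in \BB$, and Delilah must answer $i \in \{1,2\}$ on which the post-split invariant still exceeds $k_i$; the key combinatorial content is that this invariant splits at least additively over any such move, so one of the two halves always works.

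The specific constructions I would use are as follows. For (1), take $\AA_0 = \{T : |T| \equiv k \pmod m\}$ and $\BB_0$ its complement among $\Phi_n$-teams, with an invariant that tracks the number of nearby cardinalities still indistinguishable modulo $m$; the bound $2^n - {[2^n - k]}_m$ then counts the literals Samson must commit to in order to isolate the full team from a witness of the wrong residue class. Part (2) uses $\AA_0 = \{T\}$ for one fixed team $T$ of size $k$ and $\BB_0$ the family of its size-$(k-1)$ subteams, with the invariant equal to the number of surviving assignments of $T$, so Samson needs $k$ literals. For (3)--(5) the construction is analogous: $\AA_0$ and $\BB_0$ are collections of teams indexed by ${\{0,1\}}^n$, ${\{0,1\}}^{n+m}$, and ${\{0,1\}}^{n+1}$ respectively, each pair differing by a single critical assignment witnessing the failure of the atom, and Delilah's invariant counts the critical assignments not yet resolved by any literal Samson has committed to.

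The main obstacle throughout will be the analysis of the $\laor$- and $\sor$-moves. While Samson splits each team in $\AA$ only once, his rich choice of response functions $f_B$ on the $\BB$-side makes it subtle to verify that Delilah's invariant is genuinely preserved. The key step---carried out separately for each atom---is to show that for any such move by Samson, no single critical assignment can be resolved simultaneously on both sides with combined budget $k_1 + k_2 = k$, so Delilah can always choose a side whose surviving invariant exceeds its sub-budget. This case analysis is conceptually parallel across all five items but numerically distinct.
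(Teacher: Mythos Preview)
Your plan is essentially the paper's own approach, but you are missing the unifying abstraction that makes it go through cleanly. The paper does not maintain a bespoke invariant for each of the five items; instead it defines a single quantity, the \emph{density} $D(\AA,\BB) = \max_{A\in\AA}|\{B\in\BB : B = A\setminus\{s\}\text{ for some }s\in A\}|$, and proves once and for all (Lemma~\ref{lem:dens} and Theorem~\ref{winstrat}) that density splits subadditively over every move of the game and that no literal can separate when $D(\AA,\BB)>1$. The $\laor$/$\sor$ case you flag as ``the main obstacle\ldots carried out separately for each atom'' is thus handled uniformly: for any split $(A_1,A_2)$ of $A$ and any neighbour $B=A\setminus\{a\}$, the pair $(A_1\setminus\{a\},A_2\setminus\{a\})$ is a split of $B$, and Samson's choice $i=f_B(\cdot)$ either produces the same team on both sides (Delilah wins by Lemma~\ref{sameteam}) or yields a neighbour of $A_i$ in $\BB_i$. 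No atom-specific analysis is needed here.

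With this in hand, each item reduces to exhibiting a \emph{single} team $A$ satisfying the property with the required number of neighbours that do not: for (1) any team of size $k'=2^n-{[2^n-k]}_m$ (not the full team, as you wrote); for (2) any team of size $k$; for (3) the ``cyclic'' team $\{s_i : s_i(\vv p)=i,\ s_i(\vv q)={[i+1]}_{2^n}\}$; for (4) and (5) the full team over the relevant domain. Your vaguer descriptions---``nearby cardinalities still indistinguishable'' for (1), ``collections indexed by $\{0,1\}^n$'' for (3)--(5)---are pointing at this but do not yet name the constructions, and in particular the inclusion case~(3) needs the specific cyclic team rather than a generic family.
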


\noindent
Note that for $\anon$ we only consider a single argument on the right-hand side.
While this is an exponential lower bound (in $n$), a tight bound in both $n$ and $m$ (cf.\ Table~\ref{tab:dep-atoms}) is still open.

Our approach to proving these bounds is similar to that of Hella and Väänänen in~\cite{HellaV15}.
They used a formula size game for propositional logic to show that defining the parity of the number of ones in a propositional assignment of length $n$ requires a formula of length $n^2$.
We focus on teams that differ only by one assignment and define a measure named \emph{density} as in~\cite{HellaV15}, although our definition is slightly different.

\begin{defi}
	Let $T$ be a team.
	A team $T'$ is a \emph{neighbour} of $T$, if $T' = T \setminus \{s\}$ for some assignment $s \in T$.

	\noindent Let $\AA$ be a set of teams.
	The number of neighbours of $T$ in the set $\AA$ is denoted by $N(T, \AA)$,
	\[
	N(T, \AA) \dfn |\{ A \in \AA \mid A \text{ is a neighbour of } T\}|.
	\]
	The \emph{density} of the pair $(\AA, \BB)$ is
	\[
	D(\AA, \BB) \dfn \max\{N(A, \BB) \mid A \in \AA\}.
	\]
\end{defi}

\noindent
We shall use density as an invariant for the formula size game.
Essentially we will show that a certain number of the resource $k$ must be expended before a literal move can be made.
First we show that S cannot make a successful literal move when density is too high.

\begin{lem}\label{atom}
	If $D(\AA, \BB) > 1$, then no literal separates $\AA$ from $\BB$.
\end{lem}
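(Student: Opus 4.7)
The plan is to unpack the hypothesis $D(\AA,\BB)>1$ into a concrete witness and then do a routine case analysis over the eight literal forms. By definition of density, there is some $A \in \AA$ such that at least two distinct teams $B_1,B_2 \in \BB$ are neighbours of $A$. Since neighbours arise by deleting a single assignment, this forces $B_1 = A \setminus \{s_1\}$ and $B_2 = A \setminus \{s_2\}$ for \emph{distinct} $s_1,s_2 \in A$; in particular $|A|\geq 2$ and each $B_i$ is nonempty. I would then assume towards a contradiction that a literal $l$ separates $\AA$ from $\BB$, so $A \vDash l$ while $B_1 \nvDash l$ and $B_2 \nvDash l$, and rule this out in each case.

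The cases split naturally into three groups. The constants $\top$ and $\sneg\top$ are uniformly true or uniformly false on every team, so they cannot separate anything. The constants $\bot$ and $\sneg\bot$ are killed by the cardinality observation: $A \vDash \bot$ would force $A=\emptyset$, contradicting $|A|\geq 2$, and $B_i \nvDash \sneg\bot$ would force $B_i=\emptyset$, contradicting $|B_i|=|A|-1\geq 1$. The flat literals $p$ and $\neg p$ are downward closed, so $A \vDash l$ together with $B_i \subseteq A$ yields $B_i \vDash l$, again a contradiction.

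The only genuinely interesting case is the pair of existential literals $\sneg p$ and $\sneg\neg p$. For $l=\sneg p$, the assumption $A \vDash l$ gives some $s^\star \in A$ with $s^\star(p)=0$. The key observation is that since $s_1 \neq s_2$, the witness $s^\star$ can coincide with at most one of them, so $s^\star \in B_1$ or $s^\star \in B_2$; whichever $B_i$ still contains $s^\star$ therefore satisfies $\sneg p$, contradicting $B_i \nvDash l$. The case $l = \sneg\neg p$ is symmetric, using a witness with $s^\star(p)=1$. This exhausts all literal forms permitted by the definition of $\PL(\Sigma,\Phi)$, so no separating literal can exist. The only point that requires any care is recognising that the \emph{two} neighbours are needed precisely to defeat the existential literals $\sneg p$ and $\sneg\neg p$, since a single neighbour could remove the unique $p$-witness of $A$; this is the reason density $1$ would not suffice.
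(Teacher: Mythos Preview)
Your proof is correct. The paper's argument is the same in substance but organised more economically into just two cases via closure properties. For any \emph{positive} literal $l\in\{\top,\bot,p,\neg p\}$, flatness (in particular downward closure) gives $B_i\vDash l$ from $A\vDash l$ and $B_i\subseteq A$, which subsumes your separate treatment of $\bot$, $p$, and $\neg p$. For any \emph{negative} literal $\sneg l$, the paper observes that two distinct neighbours of $A$ satisfy $B_1\cup B_2=A$, so if $B_1\vDash l$ and $B_2\vDash l$ then union closure of $l$ yields $A\vDash l$ and hence $A\nvDash\sneg l$; this single argument covers $\sneg\top$, $\sneg\bot$, $\sneg p$, $\sneg\neg p$ at once. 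Your witness-tracking argument for $\sneg p$ (the witness $s^\star$ misses at most one of $s_1,s_2$) is exactly this union-closure step unfolded by hand. Both routes work; the paper's grouping avoids the constant cases and makes explicit that the two neighbours are needed precisely because together they cover $A$.
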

\begin{proof}
	If $D(\AA, \BB) > 1$, at least one team $A \in \AA$ has two neighbours $B_1, B_2 \in \BB$.
	Now any positive literal $l$ (with respect to $\sneg$) true in $A$ is also true in $B_1$ and $B_2$ since they are subteams of $A$.
	On the other hand, since $B_1$ and $B_2$ are different neighbours of $A$, we have $B_1 \cup B_2 = A$.
	For a negative literal $\sneg l$, assume that $B_1 \nvDash \sneg l$ and $B_2 \nvDash \sneg l$.
	This means that $B_1 \vDash l$ and $B_2 \vDash l$, so by union closure, $A \vDash l$ and consequently $A \nvDash \sneg l$.
\end{proof}

We proceed to show that density behaves well with respect to the moves of the game.

\begin{lem}\label{lem:dens}
	Let $(k, \AA, \BB)$ be a position in a game $\FS^\opset_{k_0}(\AA_0, \BB_0)$.
	\begin{enumerate}
		\item If S makes a $\bor$-move, and the possible following positions are $(k_1, \AA_1, \BB)$ and $(k_2, \AA_2, \BB)$, then	$D(\AA_1, \BB) + D(\AA_2, \BB) \geq D(\AA, \BB)$.
		\item If S makes a $\band$-move, and the possible following positions are $(k_1, \AA, \BB_1)$ and $(k_2, \AA, \BB_2)$, then $D(\AA, \BB_1) + D(\AA, \BB_2) \geq D(\AA, \BB)$.
		\item If S makes a $\laor$-move or $\sor$-move, and the possible following positions are $(k_1, \AA_1, \BB_1)$ and $(k_2, \AA_2, \BB_2)$, then	$D(\AA_1, \BB_1) + D(\AA_2, \BB_2) \geq D(\AA, \BB)$ or D has a winning strategy from one of the following positions.
	\end{enumerate}
\end{lem}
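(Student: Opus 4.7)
The plan is to prove each clause by fixing a witness team $A^* \in \AA$ achieving $D(\AA,\BB) = N(A^*,\BB)$ and then tracking where the neighbours of $A^*$ in $\BB$ end up after S's move. The $\bor$- and $\band$-cases are essentially bookkeeping; the splitting-move case is where the real work happens, and it is exactly where the ``or D has a winning strategy'' escape clause is needed.

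For the $\bor$-move, since $\AA = \AA_1 \cup \AA_2$, the witness $A^*$ lies in some $\AA_j$, and every neighbour of $A^*$ in $\BB$ remains a neighbour in the position $(k_j, \AA_j, \BB)$; hence $D(\AA_j, \BB) \geq N(A^*,\BB) = D(\AA,\BB)$, whence the sum bound is immediate. For the $\band$-move, each neighbour of $A^*$ in $\BB$ sits in $\BB_1$ or $\BB_2$ (possibly both), so $N(A^*,\BB_1) + N(A^*,\BB_2) \geq N(A^*,\BB) = D(\AA,\BB)$, and this propagates to the densities via $D(\AA,\BB_i) \geq N(A^*,\BB_i)$.

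For the $\laor$- or $\sor$-move, let $(A^*_1, A^*_2)$ be the split S chose for $A^*$. For each neighbour $B = A^* \setminus \{s\}$, the plan is to exhibit a canonical split of $B$ obtained from $(A^*_1, A^*_2)$ by deleting $s$ from whichever side(s) actually contain it; this gives a strict split in the $\sor$-case and a lax split in the $\laor$-case. Applying S's function $f_B$ to this split picks some $i_B \in \{1,2\}$, and the selected half then lands in $\BB_{i_B}$. One of two things can happen: either the selected half is a proper neighbour of $A^*_{i_B}$, contributing to $N(A^*_{i_B}, \BB_{i_B})$, or it equals $A^*_{i_B}$ itself, in which case $A^*_{i_B} \in \AA_{i_B} \cap \BB_{i_B}$ and Lemma~\ref{sameteam} hands D a winning strategy from $(k_{i_B}, \AA_{i_B}, \BB_{i_B})$. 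If the second possibility never materialises, distinct $s$ produce distinct shrunken teams $A^*_j \setminus \{s\}$, so the map from neighbours of $A^*$ in $\BB$ into neighbours of $A^*_1$ in $\BB_1$ together with neighbours of $A^*_2$ in $\BB_2$ is injective, giving $D(\AA_1,\BB_1) + D(\AA_2,\BB_2) \geq N(A^*_1,\BB_1) + N(A^*_2,\BB_2) \geq D(\AA,\BB)$.

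The main obstacle is the lax sub-case when $s \in A^*_1 \cap A^*_2$: here neither $A^*_1$ nor $A^*_2$ is itself a split-half of $B$, and one must instead use $(A^*_1 \setminus \{s\}, A^*_2 \setminus \{s\})$. Fortunately, this choice has the happy feature that both halves are already proper neighbours of the corresponding $A^*_i$, so it never triggers the Lemma~\ref{sameteam} escape; the ``D wins'' alternative is needed only for those $s$ that lie in exactly one of $A^*_1, A^*_2$, where picking the ``wrong'' side via $f_B$ leaves $A^*_{3-j}$ untouched and thus duplicated across $\AA_{3-j}$ and $\BB_{3-j}$.
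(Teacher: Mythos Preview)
Your proposal is correct and follows essentially the same approach as the paper's proof: fix a density-maximising $A^* \in \AA$, for each neighbour $B = A^* \setminus \{s\}$ use the induced split $(A^*_1 \setminus \{s\}, A^*_2 \setminus \{s\})$, and observe that either $f_B$ selects a side not containing $s$ (triggering Lemma~\ref{sameteam}) or it produces a genuine neighbour of $A^*_{i_B}$ in $\BB_{i_B}$, with distinct $s$ giving distinct such neighbours. Your final paragraph's analysis of the overlap case $s \in A^*_1 \cap A^*_2$ is additional elaboration not spelled out in the paper, but the underlying argument is identical.
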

\begin{proof}
	Let $A$ be one of the teams in $\AA$ with most neighbours in $\BB$.
	\begin{enumerate}
		\item Since $\AA_1 \cup \AA_2 = \AA$, we may assume by symmetry that $A \in \AA_1$.
		Since all the same neighbours of $A$ are still in $\BB$, we get $D(\AA_1, \BB) + D(\AA_2, \BB) \geq D(\AA_1, \BB) \geq D(\AA, \BB)$.
		\item Since $\BB_1 \cup \BB_2 = \BB$, the neighbours of $A$ are split between $\BB_1$ and $\BB_2$ so \\ $D(\AA, \BB_1) + D(\AA, \BB_2) \geq N(A, \BB_1) + N(A, \BB_2) \geq N(A, \BB) = D(\AA, \BB)$.
		\item Let $(A_1, A_2)$ be the (strict) split of $A$ chosen by S.
		Suppose $B = A \setminus \{a\}$ is a neighbour of $A$ in $\BB$.
		Then $(B_1,B_2) \dfn (A_1 \setminus \{a\}, A_2 \setminus \{a\})$ is a split of $B$, and is strict if $(A_1, A_2)$ is strict.
		Let $f_B : \mathrm{(S)}\splits(B) \to \{1,2\}$ be the function chosen by S for the team $B$ and $i \dfn f_B(B_1, B_2)$.
		If $a \notin A_i$, then $A_i = B_i \in \AA_i \cap \BB_i$ and by Lemma~\ref{sameteam}, D has a winning strategy from the position $(k_i, \AA_i, \BB_i)$.
		Consequently, we proceed with the case where $a \in A_i$ for all $A,B$ as above.
		Then $B_i = A_i \setminus \{a\}$ is a neighbour of $A_i$ in $\BB_i$, \ie, on the opposite side in the position $(k_i, \AA_i, \BB_i)$.
		We see that for each neighbour $B$ of $A$, we obtain a neighbour of $A_1$ in $\BB_1$, or one of $A_2$ in $\BB_2$.
		Furthermore, if $B = A \setminus \{a\}$ and $B' = A \setminus \{a'\}$ are distinct neighbours of $A$, then $A_i\setminus \{a\}$ and $A_i\setminus \{a'\}$ are distinct neighbours of $A_i$.
		For this reason, $D(\AA_1, \BB_1) + D(\AA_2, \BB_2) \geq D(\AA, \BB)$. \qedhere
	\end{enumerate}
\end{proof}

\noindent
For the rest of this section, we study a fragment $\PL(\opset)$ with operators from $\opset = \{\bor, \band, \laor, \sor\}$. All results are lower bounds for this fragment and are naturally preserved by any fragment $\PL(\opset')$ with $\opset' \subseteq \opset$.

We gather the above lemmas as the following theorem stating the usefulness of density.

\begin{thm}\label{winstrat}
	If $k_0 < D(\AA_0, \BB_0)$, then D has a winning strategy in the game $\FS^\opset_{k_0}(\AA_0, \BB_0)$.
\end{thm}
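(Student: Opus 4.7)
The plan is to prove the statement by induction on $k_0$, using Lemma~\ref{atom} to block literal moves and Lemma~\ref{lem:dens} to show that S cannot decrease density quickly enough to compensate for the dwindling resource $k$.

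For the base case $k_0 = 1$, the hypothesis $k_0 < D(\AA_0, \BB_0)$ gives $D(\AA_0, \BB_0) > 1$. Since $k_0 = 1$, the only move available to S is a literal move, and Lemma~\ref{atom} guarantees that no literal separates $\AA_0$ from $\BB_0$, so D wins. (If $k_0 = 0$, D wins by definition of the game.)

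For the inductive step, suppose the claim holds for all smaller values of the resource. I split according to S's first move. If S plays a literal move, he loses by Lemma~\ref{atom} exactly as in the base case. If S plays a $\bor$-move producing successors $(k_1, \AA_1, \BB)$ and $(k_2, \AA_2, \BB)$, then by Lemma~\ref{lem:dens}(1),
\[ D(\AA_1,\BB) + D(\AA_2,\BB) \;\geq\; D(\AA_0, \BB_0) \;>\; k_0 \;=\; k_1 + k_2, \]
so for some $i \in \{1,2\}$ we must have $D(\AA_i, \BB) > k_i$. D chooses this $i$, and by the induction hypothesis wins from $(k_i, \AA_i, \BB)$. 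The $\band$-move is handled identically using Lemma~\ref{lem:dens}(2).

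The $\laor$-move and $\sor$-move are handled using Lemma~\ref{lem:dens}(3): either D already has a winning strategy from one of the two possible successor positions (in which case D simply picks it), or we have the inequality $D(\AA_1, \BB_1) + D(\AA_2, \BB_2) \geq D(\AA_0, \BB_0) > k_1 + k_2$, and D wins by the same pigeonhole argument and appeal to the induction hypothesis. Since S has no further options, this exhausts the cases and D always has a winning strategy. The only step where anything subtle happens is the splitting-disjunction case, but Lemma~\ref{lem:dens}(3) has already packaged the one awkward scenario (where a neighbour collapses onto the other side of the game, invoking Lemma~\ref{sameteam}) into the disjunctive conclusion, so the proof itself reduces to bookkeeping.
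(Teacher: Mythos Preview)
Your proof is correct and essentially the same as the paper's: both use Lemma~\ref{atom} to block literal moves and Lemma~\ref{lem:dens} together with the pigeonhole inequality $k_1 + k_2 = k < D \le D_1 + D_2$ to pick a good successor for D. The only difference is cosmetic---you package the argument as an explicit induction on $k_0$, whereas the paper phrases it as D maintaining the invariant $k < D(\AA,\BB)$ throughout the play.
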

\begin{proof}
	We define a strategy $\delta$ for D and show that if D plays according to $\delta$, the condition $k < D(\AA, \BB)$ is maintained in all positions $(k, \AA, \BB)$.

	Let $(k, \AA, \BB)$ be a position of the game $\FS^\opset_{k_0}(\AA_0, \BB_0)$.
	By induction hypothesis, $k < D(\AA, \BB)$.
	\begin{itemize}
		\item If S makes a $\bor$-move, then by the first item of Lemma~\ref{lem:dens}, $D(\AA_1, \BB) + D(\AA_2, \BB) \geq D(\AA, \BB)$.
		Assume for contradiction that $k_i \geq D(\AA_i, \BB)$ for $i \in \{1,2\}$.
		Then
		\[
		k = k_1 + k_2 \geq D(\AA_1, \BB) + D(\AA_2, \BB) \geq D(\AA, \BB) > k,
		\]
		which is a contradiction.
		Therefore $k_i < D(\AA_i, \BB)$ for some $i \in \{1,2\}$ and D chooses that $i$ to continue the game.
		\item The case of a $\band$-move is similar, the second item of Lemma~\ref{lem:dens}.
		\item If S makes a $\laor$-move, then by the third item of Lemma~\ref{lem:dens}, D has a winning strategy from a following position $(k_i, \AA_i, \BB_i)$ or $D(\AA_1, \BB_1) + D(\AA_2, \BB_2) \geq D(\AA, \BB)$.
		In the first case D chooses the position $(k_i, \AA_i, \BB_i)$ and follows the strategy given by the lemma.
		In the second case D chooses a following position that maintains the condition $k < D(\AA, \BB)$ just like in the $\bor$-case above.
		\item If S makes a literal move,
		since $D(\AA, \BB) > k \geq 1$, by Lemma~\ref{atom}, D wins the game. Note that the case $k = 0$ is not possible since all binary connective moves lead to positions with positive $k$, and a literal move always ends the game. \qedhere
	\end{itemize}
\end{proof}

\begin{lem}\label{neighbours}
	No set $\AA$ of $\Phi$-teams can be defined with a $\PL(\opset)$-formula of width less than $D(\AA, \teams(\Phi) \setminus \AA)$.
\end{lem}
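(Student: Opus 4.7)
The plan is to derive the lemma as a direct corollary of the two main results already established: the game characterization theorem (Theorem~\ref{gametheorem}) and the density-based losing condition for S (Theorem~\ref{winstrat}). The argument is a one-step contrapositive combination, so there is no genuine obstacle --- the work has already been done in setting up density as a game invariant.

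Concretely, suppose $\phi \in \PL(\opset)$ is a formula that defines $\AA$, meaning $T \vDash \phi \Leftrightarrow T \in \AA$ for every $\Phi$-team $T$. Set $\BB \dfn \teams(\Phi) \setminus \AA$. By the definition of "defines", every team in $\AA$ satisfies $\phi$ and every team in $\BB$ fails to satisfy $\phi$; in other words, $\phi$ separates $\AA$ from $\BB$ in the sense of the formula size game.

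Applying the direction ${(2)}_{k_0} \Rightarrow {(1)}_{k_0}$ of Theorem~\ref{gametheorem} with $k_0 \dfn \size(\phi)$, S has a winning strategy for $\FS^\opset_{\size(\phi)}(\AA, \BB)$. On the other hand, Theorem~\ref{winstrat} asserts that whenever $k_0 < D(\AA_0, \BB_0)$, it is in fact D who has a winning strategy. Since the two players cannot both have winning strategies in the same finite game, we must have $\size(\phi) \geq D(\AA, \BB) = D(\AA, \teams(\Phi) \setminus \AA)$, which is the desired bound.

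The only thing worth mentioning is that the statement of the lemma is about width, matching precisely the measure used in Theorem~\ref{gametheorem}, so no conversion between width and length is needed here; the passage to length happens later when the lemma is applied to prove the items of Theorem~\ref{thm:lower-bounds}. Since the proof is so short, I would simply write it out as a two- or three-line deduction rather than elaborating further.
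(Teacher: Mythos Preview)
Your proposal is correct and matches the paper's own proof essentially line for line: both set $\BB \dfn \teams(\Phi)\setminus\AA$, observe that defining $\AA$ means separating $\AA$ from $\BB$, and then combine Theorem~\ref{winstrat} with Theorem~\ref{gametheorem} to conclude that no formula of width below $D(\AA,\BB)$ can achieve this. The only cosmetic difference is that the paper phrases it as a direct contrapositive while you argue by contradiction via the impossibility of both players having winning strategies; the content is identical.
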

\begin{proof}
	Let $\BB := \teams(\Phi) \setminus \AA$.
	Now defining $\AA$ amounts to separating $\AA$ from $\BB$.
	If $k < D(\AA, \BB)$, then by Theorem~\ref{winstrat}, D has a winning strategy in the game $\FS^\opset_{k_0}(\AA, \BB)$ and by Theorem~\ref{gametheorem}, $\AA$ and $\BB$ cannot be separated by a formula with width $k$.
\end{proof}

With the above lemma, we are now in the position to prove the main theorem of this section.

\begin{proof}[Proof of Theorem~\ref{thm:lower-bounds}]
	We find in each case a team which satisfies the desired property $\AA$ and has the desired number of neighbours which do not.
	Then $D(\AA, \teams(\Phi) \setminus \AA)$ is greater than or equal to the desired number and the claim follows from Lemma~\ref{neighbours} along with the fact that length is always greater than width.
	\begin{enumerate}
		\item First is the cardinality $k \mod m$ of $\Phi_n$-teams.
		Let $k' = 2^n - {[2^n-k]}_m$.
		We first note that $k' \leq 2^n$ so there is a $\Phi_n$-team $T_1$ with cardinality $k'$.
		Furthermore,
		\[
		k' \equiv {[2^n - 2^n + k]}_m \equiv k \mod m.
		\]
		Now $|T_1| \equiv k \mod m$ and $T_1$ has $k'$ neighbours with smaller cardinality.
		\item For a specific cardinality $k \leq 2^n$, if $T_2$ is any team with cardinality $k$, then $T_2$ clearly has $k$ neighbours with a smaller cardinality.
		\item Next is the inclusion atom $p_1 \cdots p_n \subseteq q_1 \cdots q_n$.
		If $s(p_1)\cdots s(p_n)$ is a binary representation of the number $i$, we denote this by $s(\vv{p}) = i$.
		For $i \in \{0, \dots, 2^n -1\}$, let $s_i$ be the assignment with $s_i(\vv{p}) = i$ and $s_i(\vv{q}) = {[i + 1]}_{2^n}$.
		Let $T_3 := \{s_i \mid i \in \{0, \dots, 2^n-1\}\}$.
		Now $\vv{p}$ and $\vv{q}$ both get all possible values so $T_3 \vDash p_1\cdots p_n \subseteq q_1\cdots q_n$.
		Furthermore, for any $s_i \in T_3$, we have $T_3 \setminus \{s_i\} \nvDash p_1\cdots p_n \subseteq q_1\cdots q_n$ since $\vv{p}$ gets the value ${[i+1]}_{2^n}$ but $\vv{q}$ does not.
		Thus there are $|T_3| = 2^n$ neighbours of $T_3$ which do not satisfy the inclusion atom.
		\item For the independence atom $p_1 \cdots p_n \perp q_1 \cdots q_m$, let $T_4$ be the full team with domain $\{p_1, \dots, p_n, q_1, \dots, q_m\}$.
		Clearly $T_4 \vDash p_1\cdots p_n \perp q_1 \cdots q_m$ and $(T_4 \setminus \{s\}) \nvDash p_1\cdots p_n \perp q_1 \cdots q_m$ for any assignment $s \in T_4$.
		Thus there are $|T_4| = 2^{n+m}$ neighbours of $T_4$ which do not satisfy the independence atom.
		\item Finally, for the anonymity atom $p_1 \cdots p_n \anon q$, let $T_5$ be the full team with domain $\{p_1, \dots, p_n, q\}$.
		Clearly $T_5 \vDash p_1 \cdots p_n \anon q$ and $T_5 \setminus \{s\} \nvDash p_1 \cdots p_n \anon q$ for any $s \in T_5$.
		We now have $|T_5| = 2^{n+1}$ neighbours of $T_5$ which do not satisfy the anonymity atom. \qedhere
	\end{enumerate}
\end{proof}
\noindent In the above, we did not prove lower bounds for the atoms of dependence and exclusion.
The reason for this is that the invariant we use for the formula size game is density, which is defined via the neighbourship relation.
The remaining two atoms are downward closed, so a team which satisfies such an atom cannot have any neighbours which do not satisfy the same atom.
For this reason, the above strategy fails for these two atoms.
We present a different approach in the next section.

\subsection{Lower bounds via upper dimension}

For the lower bounds of dependence and exclusion atoms we employ the notion of \emph{upper dimension}, which was successfully used to prove lower bounds by Hella et al.~\cite{HellaLSV14}.
Their paper mainly concerns the expressive power of modal dependence logic, but at the end it is shown that defining the dependence atom in modal logic with Boolean disjunction $\bor$ requires a formula with length at least $2^n$.
However, the logic they consider again has downward closure.
The existential fragment is not downward closed, so we adapt the technique of Hella et al.\ accordingly.
We first state the lower bounds as a theorem and then prove it in this section.

\begin{thm}\label{thm:dimension-lower-bounds}
 Let $\opset = \{\bor, \land, \lor, \sor\}$ and $n \geq 1$.
 \begin{itemize}
  \item A $\PL(\opset)$-formula that defines $\dep{p_1 \cdots p_n; q}$ has length at least $2^n$.
  \item A $\PL(\opset)$-formula that defines $p_1 \cdots p_n \mid q_1 \cdots q_n$ has length at least $2^n$.
 \end{itemize}
\end{thm}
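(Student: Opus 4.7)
The plan is to adapt the upper dimension technique of Hella et al.~\cite{HellaLSV14}. The starting observation is that every formula in the existential fragment $\PL(\{\bor, \land, \lor, \sor\})$ is downward closed: literals are flat (hence downward closed), and an easy induction shows that each of the four connectives $\land, \bor, \lor, \sor$ preserves downward closure. This gives meaning to the notion of a \emph{maximal} subteam satisfying a formula and thus unlocks the dimension machinery that was previously used for (downward closed) modal dependence logic.

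Following~\cite{HellaLSV14}, I would introduce an upper dimension $\Dim^+(\phi)$ adapted to the existential fragment. Intuitively, $\Dim^+(\phi)$ counts the pairwise $\subseteq$-incomparable maximal teams that $\phi$ is forced to accept. By induction on the structure of $\phi$ I would prove $\Dim^+(\phi) \leq \size(\phi)$: literals have dimension one (a literal has a unique maximal satisfying team on any fixed domain), and $\bor$ is straightforwardly additive since a maximal satisfying team of $\phi \bor \psi$ must be maximal for one of the disjuncts. The nontrivial cases are $\land$, $\lor$, and $\sor$, where I would argue that every maximal satisfying team of the compound is assembled from at most one maximal satisfying team of each subformula, so that subadditivity persists. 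Combined with the inequality $|\phi| \geq \size(\phi)$, this reduces both halves of Theorem~\ref{thm:dimension-lower-bounds} to exhibiting $2^n$ pairwise incomparable maximal witnesses for each atom.

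For the dependence atom $\dep{p_1\cdots p_n;q}$, I would pick, for each $\vv{c}\in{\{0,1\}}^n$, a team $T_{\vv{c}}$ satisfying the atom and "pinned" to $\vv{c}$, for instance the maximal team among all assignments in which $\vv{p}\equi\vv{c}$ forces a fixed value of $q$ and outside of $\vv{p}=\vv{c}$ the assignments are absent. Different choices of $\vv{c}$ produce teams that disagree on which $\vv{p}$-fiber is populated, so they are pairwise $\subseteq$-incomparable and maximal within the relevant ambient team; this gives $\Dim^+(\phi)\geq 2^n$ whenever $\phi$ defines the atom. For the exclusion atom $p_1\cdots p_n\mid q_1\cdots q_n$, a symmetric family indexed by the $2^n$ "forbidden values" $\vv{c}\in{\{0,1\}}^n$—for each $\vv{c}$, a maximal team in which neither $\vv{p}$ nor $\vv{q}$ ever attains $\vv{c}$ while the other variable tuple is free—should yield $2^n$ incomparable maximal witnesses in the same way.

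The main obstacle I anticipate is twofold. First, settling on the precise definition of $\Dim^+$: it must be tight on literals, subadditive across all four connectives (the $\lor$/$\sor$ step is the delicate one, since splits interact subtly with maximal subteams), and large on the dependence and exclusion atoms. Second, arranging the ambient team and the $2^n$ witnesses so that they are genuinely incomparable and genuinely maximal, not merely incomparable—otherwise the lower bound on $\Dim^+$ fails to transfer to arbitrary defining formulas. Once these are in hand, the chain $|\phi|\geq\size(\phi)\geq\Dim^+(\phi)\geq 2^n$ closes both cases of the theorem.
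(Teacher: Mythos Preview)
There are two genuine gaps.

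First, your opening claim that every formula in $\PL(\{\bor,\land,\lor,\sor\})$ is downward closed is false in this paper's setting: literals include $\sneg p$, $\sneg\neg p$, and $\sneg\bot$, none of which are downward closed. The paper says this explicitly (``the existential fragment is not downward closed''), and it is precisely why its generators are pairs $(S,U)$ with both a lower and an upper bound rather than single maximal teams.

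Second, and more fatally, the linear bound $\Dim^+(\phi)\leq\size(\phi)$ cannot hold, even if one restricts to downward closed formulas. Take $\phi=\bigwedge_{i=1}^n(p_i\bor\neg p_i)$: this has width $2n$, yet it has $2^n$ pairwise $\subseteq$-incomparable maximal satisfying teams (one for each constant valuation of $p_1,\ldots,p_n$). So subadditivity across $\land$ fails; the correct estimate is multiplicative, $\Dim(\phi\land\psi)\leq\Dim(\phi)\cdot\Dim(\psi)$, and likewise for $\lor$. With multiplicative bounds the best one gets is $\Dim(\phi)\leq 2^{\occ(\phi)}\leq 2^{|\phi|}$, which means you must exhibit not $2^n$ but $2^{2^n}$ maximal teams for each atom. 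The paper does exactly this, indexing teams by Boolean functions $f\colon\{0,1\}^n\to\{0,1\}$. Your proposed $2^n$ teams indexed by single tuples $\vv{c}$ are in any case not maximal for $\dep{\vv{p};q}$: one can freely add assignments with other $\vv{p}$-values without violating dependence, so every genuine maximal team meets every $\vv{p}$-fiber.

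A further point: the paper proves the dimension inequality only for $\{\bor,\land,\lor\}$ and handles $\sor$ separately, by showing that any \emph{local} formula is equivalent to its relaxation (replace every $\sor$ by $\lor$). Since the two atoms in question are local, lower bounds against $\{\bor,\land,\lor\}$ transfer automatically.
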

\noindent For now, we will assume that $\opset = \{\bor,\land,\lor\}$.
We will show in the next subsection that this imposes no restriction on the results, as for every $\PL(\{\bor,\land,\lor,\sor\})$-formula that is local there is an equivalent $\PL(\{\bor,\land,\lor\})$-formula of the same size.

\begin{defi}
 Let $\phi \in \PL(\opset, \Phi)$.
 A \emph{generator} of $\phi$ is a set $\GG(\phi)$ of pairs $(S,U)$ such that $S \subseteq U$, and for each $\Phi$-team $T$ it holds that $T \vDash \phi$ precisely if there is $(S,U) \in \GG(\phi)$ such that $S \subseteq T \subseteq U$.
 The \emph{upper dimension} $\Dim(\GG)$ of $\GG$ is the number of distinct upper bounds in $\GG$:
 \[
  \Dim(\GG) \dfn |\{ U : (S,U) \in \GG \}|\text{.}
 \]
 The \emph{upper dimension} of $\phi$, denoted $\Dim(\phi)$, is the minimal upper dimension of a generator of $\phi$:
 \[
  \Dim(\phi) \dfn \min\{\Dim(\GG) \mid \GG \text{ is a generator of }\phi\}\text{.}
 \]
\end{defi}

That we count only the upper bounds $U$ is analogous to Hella et al.~\cite{HellaLSV14}, who considered downward closed formulas and defined generators only in terms of $U$.
Indeed, with downward closure we could simply set $S \dfn \emptyset$ and obtain a definition equivalent to theirs.
For arbitrary formulas $\phi$ however (even with the empty team property), we could have $(S,U) \in \GG(\phi)$, but $\emptyset \subsetneq X \subsetneq S \subseteq U$ for some $X$ such that $X \nvDash \phi$.
Since the subformulas defining a downward closed formula are not necessarily downward closed, the inductive proofs in our results only work if we additionally keep track of the $S$.

\begin{lem}
 Let $\phi, \psi \in \PL(\opset, \Phi)$ and $\Phi = \{p_1,\ldots,p_n\}$. We have the following estimates:
 \begin{itemize}
  \item $\Dim(l) \leq 1$ for any $\Phi$-literal $l$,
  \item $\Dim(\phi \land \psi) \leq \Dim(\phi)\cdot\Dim(\psi)$,
  \item $\Dim(\phi \lor \psi) \leq \Dim(\phi)\cdot\Dim(\psi)$,
  \item $\Dim(\phi \bor \psi) \leq \Dim(\phi) + \Dim(\psi)$,
 \end{itemize}
\end{lem}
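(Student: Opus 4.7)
The plan is to handle each case by exhibiting an explicit generator for the compound formula built from generators of the components, and then bounding the number of distinct upper bounds that it contains.

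For a literal $l$, I would split into two kinds. The ``universal'' literals $\top, \bot, p, \neg p$ are characterized by ``every assignment of $T$ satisfies $l$'', so $T \vDash l$ iff $T \subseteq U_l$ where $U_l$ is the (fixed) set of assignments satisfying $l$; the single pair $(\emptyset, U_l)$ generates it (with $U_\bot = \emptyset$). The ``existential'' literals $\sneg p, \sneg \neg p, \sneg \bot$ are characterized by ``some assignment of $T$ satisfies the positive version''; here $T \vDash l$ iff some $s \in U_l^+$ lies in $T$, and the generator $\{(\{s\}, 2^\Phi) : s \in U_l^+\}$ (with the full team as upper bound) works, sharing a single upper bound. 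In both cases $\Dim(l) \leq 1$. (The anomalous $\sneg \top$ takes the empty generator of upper dimension $0$.)

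For $\phi \land \psi$ I would set
\[
  \GG(\phi \land \psi) \dfn \bigl\{(S_1 \cup S_2,\, U_1 \cap U_2) : (S_1,U_1) \in \GG(\phi),\ (S_2,U_2) \in \GG(\psi)\bigr\}.
\]
Then $S_1 \cup S_2 \subseteq T \subseteq U_1 \cap U_2$ is equivalent to the conjunction of $S_i \subseteq T \subseteq U_i$, which matches $T \vDash \phi$ and $T \vDash \psi$. The distinct upper bounds are intersections of an $U_1$ with an $U_2$, giving at most $\Dim(\phi)\cdot\Dim(\psi)$ values. For $\phi \bor \psi$ I would simply take $\GG(\phi \bor \psi) \dfn \GG(\phi) \cup \GG(\psi)$; the upper bounds are (as a set) at most the union of the two upper-bound sets, yielding the additive bound.

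The main step, and the place where the proof requires an idea rather than bookkeeping, is lax disjunction. I would define
\[
  \GG(\phi \lor \psi) \dfn \bigl\{(S_1 \cup S_2,\, U_1 \cup U_2) : (S_1,U_1) \in \GG(\phi),\ (S_2,U_2) \in \GG(\psi)\bigr\}.
\]
The ``if $T \vDash \phi \lor \psi$'' direction is routine: pick a witnessing split $(T_1,T_2)$ and generators $(S_i, U_i)$ for $T_i$; then $S_1 \cup S_2 \subseteq T$ and $T = T_1 \cup T_2 \subseteq U_1 \cup U_2$. The delicate converse is to show that whenever $S_1 \cup S_2 \subseteq T \subseteq U_1 \cup U_2$, the team $T$ actually admits a split witnessing $\phi \lor \psi$. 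The trick is the canonical split $T_1 \dfn T \cap U_1$, $T_2 \dfn T \cap U_2$: since $T \subseteq U_1 \cup U_2$ we have $T_1 \cup T_2 = T$, and since $S_i \subseteq U_i$ by definition of a generator and $S_i \subseteq T$ by assumption, we obtain $S_i \subseteq T_i \subseteq U_i$, so $T_i \vDash$ the corresponding disjunct. Counting gives $\Dim(\phi \lor \psi) \leq \Dim(\phi)\cdot\Dim(\psi)$, because each distinct upper bound of the new generator arises from a pair of upper bounds of the component generators. The one thing to double-check is that a single upper bound $U_1 \cup U_2$ may come from several pairs, but this only decreases the count.
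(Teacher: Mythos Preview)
Your proposal is correct and follows essentially the same route as the paper: the same explicit generators are used for literals, $\land$, $\lor$, and $\bor$, and the dimension bounds are read off in the same way. The only cosmetic difference is in the converse direction for $\lor$: you take the split $T_i \dfn T \cap U_i$, whereas the paper writes $T_i \dfn (T \cap U_i) \cup S_i$; since $S_i \subseteq U_i$ by the definition of a generator and $S_i \subseteq T$ by assumption, these two definitions coincide, so the arguments are in fact identical.
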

\begin{proof}
 For the binary connectives, let $\GG(\phi)$ and $\GG(\psi)$ be minimal generators of $\phi$ and $\psi$, respectively.
 \begin{itemize}
  \item
        Let $T$ be the full $\Phi$-team.
        Any positive literal $l \in \{ p, \neg p, \top, \bot \mid p \in \Phi\}$ has flatness, so $\{(\emptyset,\{ s \in T \mid \{s\} \vDash l\})\}$ generates $l$.
        The negative literals $l \in \{ \sneg p,\sneg \neg p, \sneg \bot \mid p \in \Phi \}$ are upward closed, so $\{ (\{s\},T) \mid s \in T : \{s\} \vDash l\}$ generates $l$.
        Finally, $\sneg \top$ is unsatisfiable, so it has the empty generator.

  \item	For the conjunction, it is easy to check that $\GG(\cap) \dfn \{ (S_1 \cup S_2, U_1 \cap U_2) \mid (S_1,U_1) \in \GG(\phi), (S_2,U_2) \in \GG(\psi)\}$ is a generator of $\phi \land \psi$, so
        \[
         \Dim(\phi \land \psi) \leq \Dim(\GG(\cap)) \leq \Dim(\GG(\phi)) \cdot \Dim(\GG(\psi)) = \Dim(\phi)\cdot\Dim(\psi).
        \]

  \item 	For the lax disjunction, let $\GG(\cup) \dfn \{(S_1 \cup S_2,U_1 \cup U_2) \mid (S_1,U_1) \in \GG(\phi), (S_2, U_2) \in \GG(\psi)\}$.
        If $T \vDash \phi \lor \psi$ via some split $(T_1, T_2)$, there are $(S_1,U_1) \in \GG(\phi)$ and $(S_2,U_2) \in \GG(\psi)$ such that $S_i \subseteq T_i \subseteq U_i$ for $i \in \{1,2\}$.
        Then $S_1 \cup S_2 \subseteq T \subseteq U_1 \cup U_2$.

        Conversely, assume $(S_1,U_1) \in \GG(\phi)$ and $(S_2,U_2) \in \GG(\psi)$ such that $S_1 \cup S_2 \subseteq T \subseteq U_1 \cup U_2$.
        Define $T_i \dfn (T \cap U_i) \cup S_i$.
        Then $(T_1,T_2)$ is a split of $T$, and $S_i \subseteq T_i \subseteq U_i$ (\wloss $S_i \subseteq U_i$).
        Consequently, $T_1 \vDash \phi$ and $T_2 \vDash \psi$, so $T \vDash \phi \lor \psi$.
        Thus $\GG(\cup)$ is a generator of $\phi \lor \psi$ and
        \[
         \Dim(\phi \lor \psi) \leq \Dim(\GG(\cup)) \leq \Dim(\GG(\phi)) \cdot \Dim(\GG(\psi)) = \Dim(\phi)\cdot\Dim(\psi).
        \]

  \item 	For the Boolean disjunction, clearly $\GG(\phi) \cup \GG(\psi)$ is a generator of $\phi \bor \psi$.\qedhere
 \end{itemize}
\end{proof}

\noindent
Let $\occ(\phi)$ denote the number of occurrences of $\bor$ inside $\phi$.

\begin{lemC}[{\cite[Proposition 5.9]{HellaLSV14}}]\label{lem:dimension-of-pl}
 Let $\phi \in \PL(\Sigma)$.
 Then $\Dim(\phi) \leq 2^{\occ(\phi)}$.
\end{lemC}
\begin{proof}
 By induction on $\phi$, using the previous lemma.
 For literals $\phi = l$, $\Dim(l) \leq 1 = 2^{\occ(l)}$.
 For $\nabla \in \{\land,\lor\}$, it holds that
 \begin{align*}
  \Dim(\psi \nabla \theta) \leq \; & \Dim(\psi) \cdot \Dim(\theta)                                                         \\
  \leq \;                          & 2^{\occ(\psi)} \cdot 2^{\occ(\theta)} = 2^{\occ(\psi)+\occ(\theta)} =  2^{\occ(\phi)}
 \end{align*}
 and for the Boolean disjunction,
 \begin{align*}
  \Dim(\psi \bor \theta) \leq \; & \Dim(\psi) + \Dim(\theta) \leq \Dim(\psi)\cdot \Dim(\theta)+1                                                  \\
  \leq \;                        & 2^{\occ(\psi)} \cdot 2^{\occ(\theta)} + 1 \leq 2^{\occ(\psi)+\occ(\theta) +1} = 2^{\occ(\phi)}\text{.}\qedhere
 \end{align*}
\end{proof}

Next, we show that the upper dimension of the dependence atom and the exclusion atom is at least doubly exponential.

\begin{lem}\label{lem:dimension-of-atoms}
 Let $n \geq 1$, let $p_1,\ldots,p_n,q_1,\ldots,q_n \in \Phi$ be pairwise distinct propositions, $\vv{p} = (p_1,\ldots,p_n)$, and $\vv{q} = (q_1,\ldots,q_n)$.
 Then $\Dim(\dep{\vv{p};q_1}) \geq 2^{2^n}$ and $\Dim(\vv{p} \mid \vv{q}) \geq 2^{2^n} - 2$.
\end{lem}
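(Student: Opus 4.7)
The plan is to exploit a simple observation about generators: whenever two satisfying teams $T_1, T_2$ are each covered by pairs in $\GG$ that share the same upper bound $U$, their union $T_1 \cup T_2$ lies inside that bounding interval and hence is also forced to satisfy the atom. So to lower-bound $\Dim(\phi)$ by $N$, it suffices to exhibit a family of $N$ satisfying teams whose pairwise unions all falsify $\phi$, for then the family forces $N$ distinct upper bounds in every generator of $\phi$.

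For the dependence atom, I would let the family be indexed by the $2^{2^n}$ Boolean functions $f : \{0,1\}^n \to \{0,1\}$: take $T_f$ to consist of all assignments on $\{p_1,\ldots,p_n,q_1\}$ that set $q_1 = f(\vv p)$. Each $T_f$ trivially satisfies $\dep{\vv p; q_1}$, and for $f \neq f'$ any input $\vv c$ on which they disagree supplies two assignments in $T_f \cup T_{f'}$ that agree on $\vv p$ but differ on $q_1$, so the union falsifies the atom. The framing observation then yields $\Dim(\dep{\vv p; q_1}) \geq 2^{2^n}$.

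For the exclusion atom, I would index the family by proper nonempty subsets $A \subseteq \{0,1\}^n$, of which there are $2^{2^n}-2$, taking
\[
T_A \dfn \{\, s \mid s(\vv p) \in A,\ s(\vv q) \in \{0,1\}^n \setminus A \,\}
\]
on the domain $\{p_1,\ldots,p_n,q_1,\ldots,q_n\}$. Each $T_A$ satisfies $\vv p \mid \vv q$ by construction, and for $A \neq A'$ a witness $\vv c \in A \setminus A'$ (say) together with the nonemptiness of $A'$ and of $\{0,1\}^n \setminus A$ yields one assignment in $T_A$ with $\vv p$-value $\vv c$ and one in $T_{A'}$ with $\vv q$-value $\vv c$, jointly violating $\vv p \mid \vv q$ in the union. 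The framing observation then yields $\Dim(\vv p \mid \vv q) \geq 2^{2^n}-2$.

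The only delicate point I anticipate is the careful choice of the exclusion family: both $A$ and its complement must be nonempty in order to guarantee the witnessing assignments exist on both sides, which is exactly what forces the two boundary subsets to be excluded and accounts for the ``$-2$'' in the bound. Everything else reduces to the union-of-intervals observation stated at the outset.
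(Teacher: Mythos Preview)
Your proposal is correct and essentially matches the paper's proof: the paper frames the key observation via \emph{maximal} satisfying teams (each maximal team must coincide with its covering upper bound, so distinct maximal teams force distinct upper bounds), whereas you use the slightly more general union-falsifying criterion, but the concrete families $T_f$ and $T_A$ you construct are exactly the paper's maximal teams (your proper nonempty subsets $A$ correspond to the supports of the paper's non-constant functions $f$). One small technicality: since generators are defined over $\Phi$-teams, your $T_f$ and $T_A$ should be taken as teams over the full domain $\Phi$ rather than just over the propositions occurring in the atom; this changes nothing in the argument.
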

\begin{proof}
We prove a more general result and then apply it to the two atoms.
Let $\phi$ be any formula and $\Phi = \Prop(\phi)$.
We show that the size of a generator of $\phi$ is always at least the number of maximal $\Phi$-teams of $\phi$, where a $\Phi$-team $X$ is \emph{maximal} if it satisfies $\phi$ but no $\Phi$-team $Y$ with $Y \supsetneq X$ satisfies $\phi$.
Suppose that $\phi$ has $m$ distinct maximal teams, but $\GG$ is a generator of $\phi$ with $|\GG| < m$.
Then there are distinct maximal teams $X_1,X_2$ and pairs $(S_1,U), (S_2,U) \in \GG$ such that $X_1,X_2 \subseteq U$.
Since $X_1$ is maximal and $U \vDash \phi$ by definition of generator, we have $X_1 = U$.
But by the same argument $X_2 = U$, contradiction.

\medskip

Next, we show that the atoms have at least $2^{2^n}$ maximal teams.
We start with the dependence atom.
For each $f \colon {\{0,1\}}^n \to \{1,0\}$, let
\[
 X(f) \dfn \{ s \colon \Phi \to \{0,1\} \mid f(s(\vv{p})) = s(q_1) \}\text{.}
\]
Then $X(f)$ is maximal for $\dep{\vv{p};q_1}$.
Since $f_1 \neq f_2$ implies $X(f_1) \neq X(f_2)$, there are at least $2^{2^n}$ distinct maximal teams.

\medskip

For the exclusion atom, we proceed similarly.
A function $f \colon {\{0,1\}}^n \to \{1,0\}$ is \emph{non-constant} if $f(\vv{b}) \neq f(\vv{b'})$ for some $\vv{b},\vv{b'} \in {\{0,1\}}^n$.
Now, for all non-constant $f \colon {\{0,1\}}^n \to \{1,0\}$, let
\[
 X(f) \dfn \{ s \colon \Phi \to \{0,1\} \mid f(s(\vv{p})) = 1 \text{ and }  f(s(\vv{q})) = 0 \}\text{.}
\]
Clearly $X(f) \vDash \vv{p} \mid \vv{q}$, as for every $s \in X(f)$ we have $f(s(\vv{p})) \neq f(s(\vv{q}))$, hence $s(\vv{p}) \neq s(\vv{q})$.

Next, we show that these are distinct teams, \ie, $f_1 \neq f_2$ implies $X(f_1) \neq X(f_2)$.
Suppose $f_1 \neq f_2$, \wloss there is $\vv{b} \in {\{0,1\}}^n$ such that $f_1(\vv{b}) = 1$ and $f_2(\vv{b}) = 0$.
Consider the assignment $s$ defined by $s(\vv{p}) = \vv{b}$ and $s(\vv{q})$ defined in a way such that $f_1(s(\vv{q})) = 0$ (recall that $f_1$ is non-constant).
Then $s \in X(f_1)$ but $s \notin X(f_2)$.
Consequently, there are $2^{2^n}-2$ such teams (as there are $2^{2^n}-2$ non-constant functions).

It remains to show that these teams are maximal, \ie, $X(f) \subsetneq Y$ implies $Y \nvDash \vv{p} \mid \vv{q}$ for all $\Phi$-teams $Y$.
Suppose $s \in Y \setminus X(f)$.
Then $f(s(\vv{p})) = 0$ or $f(s(\vv{q})) = 1$.
By symmetry, we consider only the first case.
As $f$ is non-constant, there exists $\vv{b} \in {\{0,1\}}^n$ with $f(\vv{b}) = 1$.
Now, define an assignment $s'$ such that $s'(\vv{p}) = \vv{b}$ and $s'(\vv{q}) = s(\vv{p})$.
Then $f(s'(\vv{p})) = 1$ and $f(s'(\vv{q})) = 0$, so $s'\in X(f) \subseteq Y$.
Hence $s,s' \in Y$, but $s'(\vv{q}) = s(\vv{p})$, so $Y \nvDash \vv{p} \mid \vv{q}$.
\end{proof}

We conclude the section with the following exponential lower bounds.

\begin{proof}[Proof of Theorem~\ref{thm:dimension-lower-bounds}]
 We consider the exclusion atom, the dependence atom works analogously.
 Suppose that $\phi \in \PL(\Sigma)$ is equivalent to $p_1\cdots{}p_n \mid q_1\cdots{}q_n$.
 Then by Lemma~\ref{lem:dimension-of-atoms}, $\Dim(\phi) \geq 2^{2^n}-2$, as the upper dimension is a purely semantical property.
 However, by Lemma~\ref{lem:dimension-of-pl}, $\Dim(\phi) \leq 2^{\occ(\phi)} \leq 2^{|\phi|} - 2$.
 With $n \geq 1$, the resulting inequality $2^{2^n} - 2\leq 2^{|\phi|} - 2$ implies $|\phi| \geq 2^n$.
\end{proof}

\subsection{From lax to strict lower bounds}

Before, we proved lower bounds for the dependence and exclusion atom for the for the restricted operator set $\opset = \{\bor,\land,\lor\}$, in particular with only lax disjunction.
Next, we incorporate the strict disjunction $\sor$.

The idea is the following:
Define the \emph{relaxation} $\lax\phi$ of a formula $\phi$ as the formula where every occurrence of $\sor$ is replaced by $\lor$.
We will prove that a formula $\phi$ and its relaxation are equivalent, provided $\phi$ is local.
This is the case in particular for the dependence and the exclusion atom, for which all lower bounds with $\lor$ then also hold with $\sor$.
This additional assumption of locality is needed, since formulas containing $\sor$ can be non-local.
For example, $\nempty \sor \nempty$ is not equivalent to its relaxation $\nempty \lor \nempty \equiv \nempty$.

\smallskip

The intuition is that if $\lax\phi$ is satisfiable, then $\phi$ is also satisfiable if we just make the domain larger, since the only way $\phi$ could be false while $\lax\phi$ is true is that we "run out of assignments" for $\sor$.
But if now $\phi$ is local, then enlarging the domain should have no effect so that then we have $\phi \equiv \lax\phi$.

\smallskip

We begin with proving the first part formally.
If $T$ is a $\Phi$-team and $\Psi \supseteq \Phi$, then the $\Psi$-expansion of $T$ is
\[
T[\Psi] \dfn \Set{ s \colon \Psi \to \{0,1\} | \restr{s}{\Phi} \in T }\text{.}
\]
Intuitively it is obtained from $T$ by duplicating all assignments in $T$ for all possible values for propositions $p \in \Psi \setminus \Phi$.
Observe that $\restr{T[\Psi]}{\Phi} = T$.

\begin{lem}\label{lem:lax-satisfiability}
 Let $\phi \in \PL(\{\land,\lor,\sor\})$.
 If a $\Phi$-team $T$ satisfies $\lax\phi$, then there is a domain $\Psi \supseteq \Phi$ such that $T[\Psi]$ satisfies $\phi$.
\end{lem}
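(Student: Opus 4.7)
The plan is to proceed by induction on $\phi$. The literal case is trivial since $\lax{l} = l$, so $\Psi = \Phi$ works. Before handling the inductive cases, I would establish an auxiliary \emph{generalized monotonicity} lemma: for any $\eta \in \PL(\{\land,\lor,\sor\})$, if $V$ is a team over some $\Psi_V \supseteq \Phi$ with $V \vDash \eta$ and $W$ is a team over some $\Psi_W \supseteq \Psi_V$ with $\restr{W}{\Psi_V} = V$, then $W \vDash \eta$. This would be shown by a short induction on $\eta$: literals follow directly from surjectivity of the projection (positive literals are flat, negative literals are upward closed); for $\land$ one applies the inductive hypothesis to both conjuncts; and for both disjunctions a witnessing (strict) split $(V_1,V_2)$ of $V$ lifts to $W_i \dfn \{s \in W : \restr{s}{\Psi_V} \in V_i\}$, where surjectivity gives $\restr{W_i}{\Psi_V} = V_i$ and disjointness of $V_1,V_2$ is inherited by preimages, handling $\sor$ as well.

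With this tool in hand, the $\land$ and $\lor$ cases of the main lemma are easy: the inductive hypothesis yields $\Psi_1, \Psi_2$ and one sets $\Psi \dfn \Psi_1 \cup \Psi_2$. Generalized monotonicity applied to $\restr{T[\Psi]}{\Psi_i} = T[\Psi_i]$ gives $T[\Psi] \vDash \psi$ and $T[\Psi] \vDash \theta$ for $\land$; for $\lor$ the same reasoning on a witnessing lax split $(T_1, T_2)$ of $T \vDash \lax\psi \lor \lax\theta$ gives $T_1[\Psi] \vDash \psi$ and $T_2[\Psi] \vDash \theta$, which together form the desired split of $T[\Psi]$ since expansion distributes over union.

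The crucial case is $\phi = \psi \sor \theta$. Starting from a lax split $(T_1, T_2)$ of $T \vDash \lax\psi \lor \lax\theta$, the inductive hypothesis combined with generalized monotonicity provides a common $\Psi'$ with $T_1[\Psi'] \vDash \psi$ and $T_2[\Psi'] \vDash \theta$. The obstruction to a strict split of $T[\Psi']$ is the overlap $T_1 \cap T_2$, which I would resolve by adjoining a fresh variable $r$, setting $\Psi \dfn \Psi' \cup \{r\}$, and defining
\[
 U_1 \dfn \{s \in T[\Psi] : \restr{s}{\Phi} \in T_1 \setminus T_2\} \cup \{s \in T[\Psi] : \restr{s}{\Phi} \in T_1 \cap T_2 \text{ and } s(r)=0\},
\]
together with $U_2$ defined symmetrically using $s(r)=1$ on the overlap. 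A direct verification shows $(U_1, U_2)$ is a strict split of $T[\Psi]$ and that $\restr{U_i}{\Psi'} = T_i[\Psi']$; applying generalized monotonicity once more to $V = T_i[\Psi']$ and $W = U_i$ then delivers $U_1 \vDash \psi$ and $U_2 \vDash \theta$, concluding the case.

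The main difficulty is that two requirements must be met simultaneously: the two sides of the strict split must be disjoint subteams whose union is the full expansion $T[\Psi]$, and yet each side must satisfy the corresponding subformula. Neither $T_i[\Psi]$ (too big, not disjoint) nor a bijective copy of $T_i[\Psi']$ (too small, does not cover $T[\Psi]$) works on its own. The fresh variable $r$ breaks the overlap strictly, while the generalized monotonicity lemma absorbs the gap between $U_i$ and $T_i[\Psi']$, which would otherwise block the argument since $\sor$-formulas are not local in general.
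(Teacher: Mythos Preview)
Your proof is correct and takes essentially the same approach as the paper: both introduce a fresh proposition to strictify the overlap in the $\sor$ case, and both rely on the fact that satisfaction in this fragment is preserved when passing to any team over a larger domain whose restriction equals the given one. The only organizational difference is that you isolate this preservation property as a standalone ``generalized monotonicity'' lemma, whereas the paper bakes it directly into a strengthened induction hypothesis (demanding the conclusion for every $X$ with $\restr{X}{\Psi} = T[\Psi]$, not just for $T[\Psi]$ itself).
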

\begin{proof}
 The idea is that any lax splitting can be simulated by a strict splitting by duplicating assignments in the team such that no assignment needs to be used in both halves of the splitting.
 We show the following stronger statement by induction on $\phi$:
 If $\lax\phi$ is satisfied by a $\Phi$-team $T$, then there is a domain $\Psi \supseteq \Phi$ such that, for all domains $\Psi' \supseteq \Psi$ and $\Psi'$-teams $X$, it holds that $\restr{X}{\Psi} = T[\Psi]$ implies $X \vDash \phi$.

 The case where $\phi$ is a literal or a conjunction is straightforward.
 So suppose $\phi = \psi_1 \sor \psi_2$ or $\phi = \psi_1 \lor \psi_2$, and assume $T \vDash \lax\phi = \lax{\psi_1} \lor \lax{\psi_2}$ via a (lax) split $(S_1,S_2)$ of $T$, \ie, $S_i \vDash \lax{\psi_i}$.
 For $i \in \{1,2\}$, there is $\Psi_i \supseteq \Phi$ such that for all $\Psi'_i \supseteq \Psi_i$ and $\Psi'_i$-teams $X_i$ it holds that $\restr{X_i}{\Psi_i} = S_i[\Psi_i]$ implies $X_i \vDash \psi_i$.
 We pick $p \in \Prop \setminus (\Psi_1 \cup \Psi_2)$, and let $\Psi \dfn \Psi_1 \cup \Psi_2 \cup \{p\}$.
 Now assume $\restr{X}{\Psi} = T[\Psi]$ for some $\Psi'$-team $X$, where $\Psi' \supseteq \Psi$.
 We have to show that $X \vDash \psi_1 \sor \psi_2$.
 This holds via the strict split $(Y_1 \cup Z_1,Y_2 \cup Z_2)$ of $X$, where
 \begin{align*}
  Y_1 & \dfn \{ s \in X \mid \restr{s}{\Phi} \in S_1\cap S_2 \text{ and } s(p) = 1\} \\
  Y_2 & \dfn \{ s \in X \mid \restr{s}{\Phi} \in S_1\cap S_2 \text{ and } s(p) = 0\} \\
  Z_1 & \dfn \{ s \in X \mid \restr{s}{\Phi} \in S_1 \setminus S_2 \}                \\
  Z_2 & \dfn \{ s \in X \mid \restr{s}{\Phi} \in S_2 \setminus S_1 \}\tag*{\qedhere}
 \end{align*}
\end{proof}

\noindent
We now prove the second part.

\begin{thm}
 A formula $\phi \in \PL(\{\bor, \land, \lor, \sor\})$ is local if and only if it is equivalent to its relaxation $\lax{\phi}$.
\end{thm}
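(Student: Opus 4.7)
The plan is to prove both directions separately, using Lemma~\ref{lem:lax-satisfiability} as the workhorse for the nontrivial inclusion.

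For the direction from $\phi \equiv \lax{\phi}$ to locality of $\phi$: observe that $\lax{\phi}$ lies in $\PL(\{\bor, \land, \lor\})$, with $\sneg$ permitted only inside literals. Since $\bor$ is definable via $\psi \bor \theta \equiv \sneg(\sneg \psi \land \sneg \theta)$, every such formula is equivalent to a $\PL(\{\land, \sneg, \lor\})$-formula and is therefore local by Proposition~\ref{prop:locality}. Because locality is a purely semantic property, $\phi$ inherits it from $\lax{\phi}$.

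For the direction from locality of $\phi$ to $\phi \equiv \lax{\phi}$: the inclusion $\phi \vDash \lax{\phi}$ is immediate by induction on $\phi$, since every strict split is also a lax split and all other connectives are preserved literally under relaxation. For the converse inclusion $\lax{\phi} \vDash \phi$, suppose the $\Phi$-team $T$ satisfies $\lax{\phi}$. By Lemma~\ref{lem:lax-satisfiability}, extended to also cover $\bor$ — which is straightforward, since whenever $T \vDash \lax{\psi_1} \bor \lax{\psi_2}$ we have $T \vDash \lax{\psi_i}$ for some $i$ and may apply the inductive hypothesis to that disjunct alone — there exists $\Psi \supseteq \Phi$ such that $T[\Psi] \vDash \phi$. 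Since $\restr{T[\Psi]}{\Phi} = T$ and $\phi$ is local, $T \vDash \phi$ follows at once.

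The main obstacle is really just extending Lemma~\ref{lem:lax-satisfiability} to the full operator set $\{\bor, \land, \lor, \sor\}$; once that is in place, locality converts the expanded-domain witness $T[\Psi]$ back to $T$ with no work, and the remaining inductive bookkeeping is routine.
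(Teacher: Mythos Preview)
Your proof is correct, and it deviates from the paper's argument only in how the connective $\bor$ is handled in the key implication $\lax{\phi} \vDash \phi$. The paper does \emph{not} extend Lemma~\ref{lem:lax-satisfiability} to $\bor$; instead it first applies the distributive laws $\theta_1 \circ (\theta_2 \bor \theta_3) \equiv (\theta_1 \circ \theta_2) \bor (\theta_1 \circ \theta_3)$ for $\circ \in \{\land,\lor,\sor\}$ to rewrite $\phi \equiv \bigbor_{i} \psi_i$ with each $\psi_i \in \PL(\{\land,\lor,\sor\})$, so that $\lax{\phi} \equiv \bigbor_i \lax{\psi_i}$, and then invokes the lemma as stated on the relevant disjunct $\psi_i$. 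Your route---adding a $\bor$-case to the induction inside the lemma---is more direct and avoids the normal-form detour; just be aware that the lemma's proof uses a strengthened induction hypothesis (the witnessing domain $\Psi$ works uniformly for all $X$ with $\restr{X}{\Psi} = T[\Psi]$), and your $\bor$-case must establish that same strengthened conclusion, which it does trivially by passing through the chosen disjunct. The paper's approach has the minor advantage of leaving Lemma~\ref{lem:lax-satisfiability} untouched, while yours is shorter overall.
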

\begin{proof}
  If $\phi$ is equivalent to $\lax\phi$, then $\phi$ is local by Proposition~\ref{prop:locality}.
  For the converse, let $\phi$ be local.
  We have to prove $\phi \equiv \lax\phi$.
 The direction $\phi \vDash \lax{\phi}$ is easy to prove by induction.
 For the other direction, $\lax{\phi} \vDash \phi$, we first transform $\phi$ and $\lax{\phi}$ into a disjunction of $\PL(\{ \land,\lor,\sor\})$-formulas
 using the distributive laws
 \begin{align*}
  \theta_1 \circ (\theta_2 \bor \theta_3) & \equiv (\theta_1\circ \theta_2) \bor (\theta_1 \circ \theta_3)  \\
  (\theta_1 \bor \theta_2) \circ \theta_3 & \equiv (\theta_1\circ  \theta_3) \bor (\theta_2 \circ \theta_3)
 \end{align*}
 for $\circ \in \{ \land,\lor,\sor\}$.
 We obtain $\phi \equiv \bigbor_{i = 1}^n \psi_i$ and $\lax{\phi} \equiv \bigbor_{i=1}^n \lax{\psi_i}$
 for suitable $\psi_1,\ldots,\psi_n \in \PL(\{ \land,\lor,\sor\})$.\footnote{Such normal forms with $\ovee$ are standard in team logic (cf.\ \cite[Theorem~3.5]{HellaLSV14}, \cite[Theorem~3.4]{Kon15}, \cite[Lemma~4.9]{YangV16}, \cite[Proposition~6.2]{Virtema17}).} 

 \smallskip

 Let now $T$ be a $\Phi$-team such that $T \vDash \lax\phi$, where $\Phi \supseteq \Prop(\phi)$.
 Then $T \vDash \lax{\psi_i}$ for some $i$.
 By Lemma~\ref{lem:lax-satisfiability}, there is a domain $\Psi \supseteq \Phi$ such that $T[\Psi] \vDash \psi_i$, which implies that $T[\Psi] \vDash \phi$.
 Since $\restr{T[\Psi]}{\Phi} = T$ and $\phi$ is local, we conclude $T \vDash \phi$, as desired.
\end{proof}

\section{Polynomial upper bounds for team properties}\label{sec:upper}

In this section, we complement the exponential lower bounds presented in Theorem~\ref{thm:lower-bounds} by polynomial upper bounds in the fragment $\PL(\{\bor,\land,\lor\})$.
Notably, among these polynomially definable properties are the \emph{negations} of all atoms of dependency considered previously.
This exhibits an interesting asymmetry of succinctness between the standard atoms of dependency and their negations.
For the parity of teams there is no such asymmetry and we have exponential lower bounds for both even and odd cardinality.
Nevertheless, in the subsequent subsection, we will present a polynomial upper bound for parity in a stronger logic than $\PL(\{\bor,\land,\lor,\sor\})$.

\subsection{Upper bounds for the atoms of dependency}

As with the lower bounds, we will first state the theorem and prove it with a series of lemmas.
The length of a tuple $\vv\phi = (\phi_1,\ldots,\phi_n)$ of formulas is $|\vv\phi| \dfn \sum_{i=1}^n|\phi_i|$.
The \emph{negation} of a formula $\phi$ is $\sneg \phi$.
Throughout this section, let $\vv\alpha, \vv\beta,\vv\gamma$ always denote tuples of purely propositional  formulas $\vv\alpha = (\alpha_1,\ldots,\alpha_n)$, $\vv\beta = (\beta_1,\ldots,\beta_m)$, and $\vv\gamma = (\gamma_1,\ldots,\gamma_k)$, where $n,m,k \geq 0$.

\begin{thm}\label{thm:upper-bounds}
  Let $\Sigma \supseteq \{ \bor, \land, \lor \}$.
  \begin{itemize}
    \item The dependence atom $\dep{\vv\alpha;\vv\beta}$ is equivalent to the negation of a $\PL(\Sigma)$-formula of length $\calO(|\vv\alpha\vv\beta|)$.
    \item The exclusion atom $\vv{\alpha} \mid \vv{\beta}$ is equivalent to the negation of a $\PL(\Sigma)$-formula of length $\calO(n|\vv\alpha\vv\beta|)$.
    \item The inclusion atom $\vv\alpha\subseteq \vv\beta$ is equivalent to the negation of a $\PL(\Sigma)$-formula of length $\calO(n|\vv\alpha\vv\beta|)$.
    \item The conditional independence atom $\vv\alpha\perp_{\vv\gamma}\vv\beta$ is equivalent to the negation of a $\PL(\Sigma)$-formula of length $\calO(n(n+m+k)|\vv\alpha\vv\beta\vv\gamma|)$.
    \item The anonymity atom $\vv\alpha\anon \beta$ is equivalent to the negation of a $\PL(\Sigma)$-formula of length $\calO(n|\beta|+|\vv\alpha|)$.
  \end{itemize}
  Additionally, for the dependence and exclusion atoms, $\Sigma \supseteq \{ \bor, \land, \sor\}$ yields the same result.
  Furthermore, all these formulas are logspace-computable.
\end{thm}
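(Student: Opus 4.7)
The plan is to prove each item by exhibiting, for the given atom $\mathcal{A}$, a concrete $\PL(\{\bor,\land,\lor\})$-formula $\phi$ whose Boolean negation $\sneg\phi$ is equivalent to $\mathcal{A}$, and then verifying the claimed size bound and logspace-computability. This asymmetric perspective---defining the negation rather than the atom itself---is natural because the negation of each of these atoms has an existential character (``there exist witnesses $s,s'\in T$, possibly also $s''$, whose values match or mismatch in prescribed ways''), and in the existential fragment the splitting disjunction $\lor$ acts precisely as a second-order existential quantifier over subteams. Throughout the proof I will rely on two compact building blocks: the \emph{constancy gadget} $\bigwedge_{j=1}^n (\alpha_j \bor \neg \alpha_j)$ of length $\calO(|\vv\alpha|)$, which holds on $T$ exactly when all $s\in T$ share the same $\vv\alpha$-value, and the existential shorthand $\E$ from the preliminaries.

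I will first dispatch the dependence case, which exemplifies the template. To express $\sneg\dep{\vv\alpha;\vv\beta}$, i.e.\ the existence of $s,s' \in T$ with $s(\vv\alpha)=s'(\vv\alpha)$ but $s(\beta_i)\neq s'(\beta_i)$ for some $i$, I take
\[
  \phi \;\dfn\; \top \lor \Bigl[\,\bigwedge_{j=1}^n (\alpha_j \bor \neg\alpha_j) \,\land\, \bigbor_{i=1}^{m} \bigl((\nempty \land \beta_i)\lor(\nempty \land \neg\beta_i)\bigr)\,\Bigr].
\]
The outer $\top\lor$ carves a subteam $T'\subseteq T$; the constancy conjunct pins $T'$ to a common $\vv\alpha$-value; the inner $\bor$ selects a coordinate $i$; and the inner $\lor$ splits $T'$ into two non-empty parts satisfying $\beta_i$ and $\neg\beta_i$, yielding the two witnesses. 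The length is $\calO(|\vv\alpha\vv\beta|)$ and both directions of equivalence are routine. The $\sor$-variant follows from Proposition~\ref{prop:strict-lax}, since each inner disjunct $\nempty \land \beta_i$ is downward closed.

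For exclusion, inclusion, anonymity, and conditional independence the same template applies, but a naive encoding would require a Boolean disjunction over all $2^n$ possible values of the matching coordinates. The key trick is to commit to the witness split via a single outer $\lor$ and then, for each coordinate $i$ separately, enforce the matching (or mismatching) condition by a short clause of length $\calO(|\vv\alpha\vv\beta|)$; iterating over the $n$ coordinates produces the stated $\calO(n\,|\vv\alpha\vv\beta|)$ bound. Concretely, for $\vv\alpha \mid \vv\beta$ I split off two non-empty subteams $T'_1,T'_2$, force $T'_1$ to be $\vv\alpha$-constant and $T'_2$ to be $\vv\beta$-constant via constancy gadgets, and for each $i$ insert a small clause forcing the $\alpha_i$-value on $T'_1$ to coincide with the $\beta_i$-value on $T'_2$. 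For $\vv\alpha \subseteq \vv\beta$, a single witness subteam is split off, pinned by the constancy gadget to an $\vv\alpha$-value $\vv v$, and the absence of $\vv v$ as a $\vv\beta$-value of $T$ is enforced coordinate by coordinate. Anonymity is analogous but simpler since the right-hand tuple is a single proposition. Conditional independence $\vv\alpha \perp_{\vv\gamma} \vv\beta$ is obtained by stratifying unconditional independence over the $\vv\gamma$-values, which accounts for the extra $(n+m+k)$ factor in the size.

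The principal obstacle is precisely this coordination step: the $n$ per-coordinate clauses must refer to the \emph{same} witness split, not to $n$ independently chosen splits. I will resolve it by placing all coordinate clauses under one common outer $\lor$, so that once the split is fixed each clause becomes a purely propositional condition on the now-constant subteams and combines freely under $\land$. The remaining claims are routine: for dependence and exclusion the inner splits are downward closed, so Proposition~\ref{prop:strict-lax} yields the $\{\bor,\land,\sor\}$-variant; and logspace-computability is immediate because each construction is a direct syntactic function of the input tuples $\vv\alpha,\vv\beta,\vv\gamma$.
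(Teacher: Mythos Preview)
Your treatment of the dependence atom is correct and essentially coincides with the paper's Lemma~\ref{lem:dep}.  The gap is in the remaining four atoms.  You correctly identify the coordination problem---that the per-coordinate clauses must all refer to the \emph{same} witness split---but your proposed resolution does not work.  A splitting disjunction $\psi_1 \lor \psi_2$ does not introduce a named, persistent pair of subteams that later conjuncts can address: each occurrence of $\lor$ is its own existential quantifier over splits, and $\psi_1$ is evaluated on one half with no syntactic access to the other.  Concretely, take exclusion.  You want a formula of the shape $\top \lor \big[\,\phi_1 \lor \phi_2\,\big]$ where $\phi_1$ pins $T'_1$ to a constant $\vv\alpha$-value and $\phi_2$ pins $T'_2$ to a constant $\vv\beta$-value, and then a conjunction of $n$ clauses forces $s(\alpha_i)$ on $T'_1$ to agree with $s'(\beta_i)$ on $T'_2$.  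But any clause you place inside $\phi_1$ sees only $T'_1$, any clause inside $\phi_2$ sees only $T'_2$, and if you instead write $\bigwedge_i (\chi_i^1 \lor \chi_i^2)$ outside, each conjunct chooses its own split.  Since the two-element team $\{s,s'\}$ is symmetric, there is no purely propositional condition on it that distinguishes $s$ from $s'$, so you cannot express ``the $\vv\alpha$-value of one equals the $\vv\beta$-value of the other'' coordinate by coordinate without a marker.  The same obstruction hits inclusion (``no $s'$ has $s'(\vv\beta)=s^\star(\vv\alpha)$'' is a disjunction over coordinates whose disjunct depends on $s'$, not a conjunction you can peel off), and independence is harder still.

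The paper's actual mechanism is different and is the missing idea.  It first observes that the witnesses needed for each atom can always be chosen so that some \emph{fixed} formula $\gamma$ among $\{\alpha_i,\neg\alpha_i,\beta_j,\neg\beta_j\}$ syntactically separates a ``pivot'' assignment from the rest of the relevant team; a finite Boolean disjunction over the $\calO(n+m)$ candidates for $\gamma$ is affordable.  Once such a $\gamma$ is fixed, the comparison ``$s(\vv\alpha)=s'(\vv\beta)$ for $(s,s')\in T_\gamma\times T_{\neg\gamma}$'' (and its negation) is expressed by the helper formulas $\theta^{=}$ and $\theta^{\neq}$ of Lemma~\ref{lem:compare-forms}, which work precisely because $T_\gamma$ has been reduced to a single $\vv\alpha$-value and hence each coordinate can be handled by a length-$\calO(1)$ clause \emph{within one team}.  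That pivot/marker argument is what produces the $\calO(n|\vv\alpha\vv\beta|)$ bounds in Lemmas~\ref{lem:exc}--\ref{lem:anon-unary}; without it your sketch does not yield polynomial-size formulas for exclusion, inclusion, independence, or anonymity.
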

 \begin{proof}
  We prove these results in Lemmas~\ref{lem:dep} to~\ref{lem:anon-unary}.
   For the formulas that are equivalent to the negations of the dependence and exclusion atom, note that every occurrence of $\lor$ in them is of the form $\alpha \lor \varphi$ for purely propositional $\alpha$.
   But then $\alpha \lor \varphi \equiv \alpha \sor \varphi$ by Proposition~\ref{prop:strict-lax}.
   For this reason, these results hold for $\Sigma \supseteq \{ \bor, \land, \sor\}$ as well.
 \end{proof}

\subsection*{Dependence atom.}

It is well-known that the dependence atom can be efficiently rewritten by means of other connectives in most flavors of team logic that have unrestricted negation (see, \eg,~\cite{vaa07,Kon15,HannulaKVV18}).
For the sake of completeness, we will also state such a formula here.

The following formula expresses the negation of the dependence atom $\dep{\vv{\alpha};\vv{\beta}}$ and has length $\calO(|\vv\alpha\vv\beta|)$.
Recall the defined abbreviations $\E \alpha \dfn \top \lor (\nempty \land \alpha)$ and $(\vv\alpha \equi \vv\beta) \dfn \bigwedge_{i=1}^n ((\alpha_i \land \beta_i) \lor (\neg \alpha_i \land \neg \beta_i))$, which we will extensively use in this section.

\noindent{}The following formula defines $\sneg\dep{\vv\alpha;\vv\beta}$.
\begin{align*}
 \varphi(\vv{\alpha};\vv{\beta}) & \dfn \top \lor \bigg(\bigwedge_{i=1}^n (\alpha_i \bor \neg \alpha_i) \land \bigbor_{i=1}^m(\E \beta_i \land \E \neg \beta_i) \bigg)
\end{align*}

\begin{lem}\label{lem:dep}
 $\sneg\dep{\vv{\alpha};\vv{\beta}} \equiv \varphi(\vv{\alpha};\vv{\beta})$.
\end{lem}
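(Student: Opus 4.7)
The plan is to unfold the team semantics of $\varphi(\vv\alpha;\vv\beta)$ directly and match it up with the semantic characterization of $\sneg\dep{\vv\alpha;\vv\beta}$, which says that there exist $s,s' \in T$ with $s(\vv\alpha) = s'(\vv\alpha)$ but $s(\vv\beta) \neq s'(\vv\beta)$.

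The first step is to observe that $T \vDash \top \lor \psi$ holds precisely when there is some subteam $T' \subseteq T$ with $T' \vDash \psi$: for any such $T'$, the pair $(T,T')$ is a valid lax split of $T$, and $T \vDash \top$ trivially. So $\varphi(\vv\alpha;\vv\beta)$ essentially asserts the existence of a nontrivial subteam $T' \subseteq T$ satisfying the inner formula. Next I would analyze the two inner conjuncts on $T'$ separately. Since each $\alpha_i$ is purely propositional and hence flat, $T' \vDash \alpha_i \bor \neg \alpha_i$ iff either every $s \in T'$ satisfies $\alpha_i$ or every $s \in T'$ satisfies $\neg \alpha_i$, i.e.\ all assignments in $T'$ agree on $\alpha_i$. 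Taking the conjunction over all $i$ gives that $T' \vDash \bigwedge_{i=1}^n(\alpha_i \bor \neg\alpha_i)$ iff $s(\vv\alpha) = s'(\vv\alpha)$ for all $s,s' \in T'$. For the other conjunct, by definition of the shorthand $\E$, we have $T' \vDash \E \beta_i$ iff there is at least one $s \in T'$ with $s(\beta_i) = 1$, and analogously for $\E \neg \beta_i$. Hence $T' \vDash \bigbor_{i=1}^m(\E \beta_i \land \E \neg \beta_i)$ iff there is some index $i$ and assignments $s,s' \in T'$ with $s(\beta_i) \neq s'(\beta_i)$.

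Combining these two observations: $T \vDash \varphi(\vv\alpha;\vv\beta)$ iff there is a subteam $T' \subseteq T$ all of whose assignments agree on $\vv\alpha$ and such that two assignments in $T'$ disagree on some $\beta_i$, which is equivalent to the existence of $s,s' \in T$ with $s(\vv\alpha) = s'(\vv\alpha)$ and $s(\vv\beta) \neq s'(\vv\beta)$: one direction takes $T' \dfn \{s,s'\}$, and the other direction picks two witnesses inside $T'$. This is exactly $T \vDash \sneg\dep{\vv\alpha;\vv\beta}$.

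There is no real obstacle here; the verification is essentially bookkeeping. The only point to handle with care is the role of the leading $\top \lor (\cdot)$, which promotes the team property inside into its existential closure over subteams, thereby allowing us to ``select'' the pair of witnesses needed to refute the dependence atom. A minor care is also to note that the conjuncts $\alpha_i \bor \neg\alpha_i$ and $\E\beta_i\land\E\neg\beta_i$ only make sense when evaluated on the subteam $T'$ chosen by the splitting disjunction, not on $T$ itself; this matches exactly what the semantics of $\lor$ provides.
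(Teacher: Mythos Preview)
Your argument is correct and complete. The paper itself does not give a proof here but simply refers to \cite[Proposition~2.5]{HannulaKVV18}; your direct unfolding of the semantics---reading $\top\lor(\cdot)$ as existential quantification over subteams, the conjunction $\bigwedge_i(\alpha_i\bor\neg\alpha_i)$ as constancy of $\vv\alpha$, and the disjunction $\bigbor_i(\E\beta_i\land\E\neg\beta_i)$ as the existence of a disagreement on some $\beta_i$---is exactly the standard verification that the cited result encapsulates.
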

\begin{proof}
 Analogously to~\cite[Proposition 2.5]{HannulaKVV18}.
\end{proof}

Next, we require the abbreviation $\alpha \hook \varphi \dfn \neg \alpha \lor (\alpha \land \varphi)$, or equivalently, with strict splitting, $\alpha \hook \varphi \dfn \neg \alpha \sor (\alpha \land \varphi)$.
It was introduced by Galliani~\cite{ga15} and has the semantics
$T \vDash \alpha \hook \varphi \Leftrightarrow T_\alpha \vDash \varphi$,
where $T_\alpha \dfn \{ s \in T \mid s \vDash \alpha \}$.

Before we define the next atom, we introduce two helper formulas $\theta^=$ and $\theta^{\neq}$, which we will explain below.
\begin{align*}
  \theta^=(\vv{\alpha};\vv{\beta};\gamma) &
  \dfn \bigwedge_{i=1}^n \bigbor_{l \in \{\top,\bot\}} \Big( (\gamma \land (\alpha_i \equi l)) \lor (\neg \gamma \land (\beta_i \equi l)) \Big)\\
  \theta^{\neq}(\vv{\alpha};\vv{\beta};\gamma) &
  \dfn \bigvee_{i = 1}^n \Big(\E \gamma \land
  \bigbor_{\mathclap{l \in \{\top,\bot\}}} \big( (\gamma \land (\alpha_i \equi l))
  \lor (\neg \gamma \land (\beta_i \nequi l)) \big) \Big)
\end{align*}
These are $\PL(\{\bor,\land,\lor\})$-formulas of length $\calO(n|\gamma| + |\vv\alpha|+|\vv\beta|)$.

The purpose of $\theta^=(\vv\alpha,\vv\beta,\gamma)$ and $\theta^{\neq}(\vv\alpha,\vv\beta,\gamma)$ is the following.
The definitions of the various dependency atoms are all based on comparison of pairs of assignments in a team.
For instance, $\vv\alpha \mid \vv\beta$ holds if $s(\vv\alpha) \neq s'(\vv\beta)$ for all $s,s' \in T$, and so on.
Loosely speaking, $\theta^=(\vv\alpha,\vv\beta,\gamma)$ and $\theta^{\neq}(\vv\alpha,\vv\beta,\gamma)$ test the values $s(\vv\alpha)$ and $s'(\vv\beta)$ for equality resp.\ inequality for pairs $(s,s') \in T_\gamma \times T_{\neg \gamma}$.
The restriction to $T_\gamma \times T_{\neg \gamma}$ is unfortunately necessary in our implementation of $\theta^=$ and $\theta^{\neq}$, so $s$ and $s'$ must differ in some formula $\gamma$ that is known \emph{a priori}.
While this seems to complicate the matter, we can actually find such $\gamma$ for all of the atoms of dependency.

Before we proceed with defining the atoms, we prove the semantics of $\theta^=$ and $\theta^{\neq}$.
Another constraint is that they work only for the subclass of teams $T$ where $|\{ s(\vv\alpha) \mid s \in T_\gamma\}| = 1$, \ie, all $s \in T_\gamma$ agree on the value $s(\vv\alpha)$, but this again suffices for our purpose.

\begin{lem}\label{lem:compare-forms}
  Let $T$ be a team such that $|\{ s(\vv\alpha) \mid s \in T_\gamma \}| = 1$.
  Then the following holds:
\begin{align*}
  T \vDash \theta^{=}(\vv{\alpha};\vv{\beta};\gamma) &\Leftrightarrow \forall (s,s') \in T_\gamma \times T_{\neg \gamma} : s(\vv{\alpha}) = s'(\vv{\beta})\\
  T \vDash \theta^{\neq}(\vv{\alpha};\vv{\beta};\gamma) &\Leftrightarrow \forall (s,s') \in T_\gamma \times T_{\neg\gamma} : s(\vv\alpha)\neq s'(\vv\beta)
\end{align*}
\end{lem}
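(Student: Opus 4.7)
The plan is to unpack the semantics of the two formulas and show, in each direction, that satisfaction corresponds exactly to the asserted pointwise condition on pairs in $T_\gamma \times T_{\neg\gamma}$. The key observation is that every inner subformula of the form $\gamma \land (\alpha_i \equi l)$, $\neg\gamma \land (\beta_i \equi l)$, or $\neg\gamma \land (\beta_i \nequi l)$ is purely propositional, hence flat, so team-level satisfaction reduces to pointwise conditions on the individual assignments. With flatness in hand, the only work is to track the existential choices made by $\bor$ (pick a value $l$) and by $\lor$ (pick a split of the team).

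For $\theta^{=}$, I would handle each index $i$ separately. The outer conjunction forces $T$ to satisfy the Boolean disjunction $\bigbor_l$, so some $l \in \{\top,\bot\}$ witnesses $T \vDash (\gamma \land (\alpha_i \equi l)) \lor (\neg\gamma \land (\beta_i \equi l))$. Any lax split $(S,U)$ realizing this must satisfy $S \subseteq T_\gamma$ and $U \subseteq T_{\neg\gamma}$ by flatness, and since $S \cup U = T$ while $T_\gamma$ and $T_{\neg\gamma}$ partition $T$, the split is forced to be exactly $(T_\gamma, T_{\neg\gamma})$. The uniformity hypothesis that $\vv\alpha$ takes a single value $\vv b$ on $T_\gamma$ then pins $l$ down to $b_i$, and the right-hand conjunct says every $s' \in T_{\neg\gamma}$ has $s'(\beta_i) = b_i$. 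Collecting over $i$ yields $s'(\vv\beta) = \vv b = s(\vv\alpha)$ for every pair. The converse uses the same canonical split $(T_\gamma, T_{\neg\gamma})$ and the choice $l = b_i$.

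For $\theta^{\neq}$, the outer $\bigvee_{i=1}^n$ provides a lax split $T = T_1 \cup \cdots \cup T_n$ with each $T_i$ satisfying $\E\gamma$ and the Boolean disjunction. The $\E\gamma$ conjunct guarantees $(T_i)_\gamma \neq \emptyset$, so the uniformity hypothesis fixes the $\vv\alpha$-value on $(T_i)_\gamma$ to $\vv b$; this pins the $\bor$-witness $l_i$ to $b_i$, and the right-hand disjunct then says every $s' \in (T_i)_{\neg\gamma}$ has $s'(\beta_i) \neq b_i$. Since every $s' \in T_{\neg\gamma}$ lies in at least one $T_i$, we conclude $s'(\vv\beta) \neq \vv b$. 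Conversely, given the pointwise inequality, for each $s' \in T_{\neg\gamma}$ I pick an index $i(s')$ where $s'(\beta_{i(s')}) \neq b_{i(s')}$ and define $T_i \dfn T_\gamma \cup \{\, s' \in T_{\neg\gamma} \mid i(s') = i \,\}$. This is a legal lax split because the lax semantics allow $T_\gamma$ to be duplicated into every part, and each $T_i$ verifies the required conjuncts for $l_i = b_i$.

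The one delicate point is the interaction between the Boolean choice of $l$ and the lax split along $\gamma$ versus $\neg\gamma$: the argument hinges on flatness forcing the lax split to be the canonical one $(T_\gamma, T_{\neg\gamma})$, at which point the uniformity hypothesis uniquely determines $l_i = b_i$ and the remaining conjunct over $T_{\neg\gamma}$ encodes exactly the desired pointwise equality or inequality. Degenerate cases such as $T_{\neg\gamma} = \emptyset$ are absorbed by permitting empty parts in the lax split, which vacuously satisfy the flat right-hand conjuncts.
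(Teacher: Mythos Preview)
Your proposal is correct and follows essentially the same approach as the paper: both arguments reduce the inner $\lor$ to the canonical split $(T_\gamma,T_{\neg\gamma})$ via flatness, use the uniformity hypothesis to pin down the Boolean witness $l$, and for $\theta^{\neq}$ build the outer lax split by routing each $s'\in T_{\neg\gamma}$ to an index $i$ where $s'(\beta_i)\neq b_i$ while duplicating $T_\gamma$ into every part. The only cosmetic differences are that the paper treats $\theta^{=}$ as ``straightforward'' without spelling out the forced-split observation, and in the $\Leftarrow$ direction for $\theta^{\neq}$ it places each $s'$ into \emph{every} applicable part rather than a single chosen one; neither affects the argument.
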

\begin{proof}
  As $\theta^=$ is straightforward, let us consider $\theta^{\neq}$.

  For "$\Rightarrow$", by the formula, $T$ can be divided into $Y_1\cup \cdots \cup Y_n$ such that $Y_i \cap T_{\gamma} \neq \emptyset$ and additionally $Y_i$ satisfies the respective Boolean disjunction.
  Now let $s \in T_\gamma$ and $s' \in T_{\neg \gamma}$.
  For some $i \geq 1$, $s' \in Y_i$.
  Furthermore, there is $l \in \{\top,\bot\}$ such that $Y_i \vDash \E \gamma \land (\gamma \land (\alpha_i \equi l))\lor (\neg \gamma \land (\beta_i \nequi l))$.
  As $Y_i \vDash \E \gamma$, some $s^\star \in Y_i \cap T_\gamma$ exists, and we conclude $s(\alpha_i) = s^\star(\alpha_i) \neq s'(\beta_i)$.

  For "$\Leftarrow$", we divide $T$ into teams $Y_1 \cup \cdots \cup Y_n$ as follows.
  For every $i \in \{1,\ldots,n\}$, choose $l \in \{\top,\bot\}$ such that
  \[
  Y_i \dfn \{ s \in T_{\neg \gamma} \mid s(\beta_i) \neq s'(\alpha_i), s' \in T_\gamma \}\text{.}
  \]
  $Y_i$ is well-defined as $s'(\alpha_i)$ is constant for all $s' \in T_\gamma$.
  This is a split of $T$, as otherwise some $s \in T_{\neg \gamma}$ is left over with $s(\beta_i) = s'(\alpha_i)$ for all $i \in \{1,\ldots,n\}$, contradicting the assumption.
  Clearly $Y_i \vDash \E \gamma$, as $T_{\gamma}\neq \emptyset$ and $T_{ \gamma} \subseteq Y_i$.
  It remains to check that setting $l \dfn \top$ if $a_i = 1$ (resp.\ $l \dfn \bot$ if $a_i = 0$) renders $(\gamma \land (\alpha_i \equi l))\lor (\neg \gamma \land (\beta_i \nequi l))$ true in $Y_i$.
\end{proof}

With $\theta^=$ and $\theta^{\neq}$ we can now define the remaining atoms.
To define the condition $|\{ s(\vv\alpha) \mid s \in T_\gamma\}| = 1$ in a formula, we use $\gamma \hook \mathbf{1}_{\alpha}$, where $\mathbf{1}_{\alpha} \dfn \sneg \bot \land \bigwedge_{i=1}^n\dep{\alpha_i}$.
Let us call an assignment $s$ in $T_\gamma$ that is unique up to $\alpha$ a \emph{pivot}.

\subsection*{Exclusion atom.}

With the exclusion atom, we exemplify how the formula $\theta^=$ can be used.
A team $T$ violates the exclusion atom $\vv\alpha\mid\vv\beta$ if either
some assignment $s$ satisfies $\vv{\alpha} \equi \vv{\beta}$, or otherwise if $s(\vv{\alpha}) = s'(\vv{\beta})$ for distinct $s,s'$.
Assuming we are only in the latter case, however, $s$ and $s'$ must disagree on some $\alpha_i$, say, $s \vDash \alpha_i$ and $s' \vDash \neg \alpha_i$, since otherwise we again have $s'(\vv{\alpha}) = s(\vv\alpha) = s'(\vv\beta)$.
Taking now $\gamma\dfn \alpha_i$, we can with $\lor$ split off everything from $T_\gamma$ except the pivot $s$, retain the team $\{ s \} \cup T_{\neg \gamma}$, and search for $s'$ in $T_{\neg \gamma}$ with $\theta^=$.

We apply these ideas in the following formula which expresses $\sneg(\vv{\alpha} \mid \vv{\beta})$ and has length $\calO(n|\vv\alpha\vv\beta|)$.
\begin{align*}
  \varphi(\vv{\alpha};\vv{\beta}) \dfn \E(\vv{\alpha} \equi \vv{\beta}) \bor \bigbor_{\mathclap{\substack{i=1\\\gamma\in \{\alpha_i,\neg \alpha_i\}}}}^n  \Big( \top \lor ((\E\neg \gamma) \land (\gamma \hook \mathbf{1}_{\vv\alpha}) \land \theta^=(\vv{\alpha};\vv{\beta};\gamma)) \Big)
\end{align*}
\begin{lem}\label{lem:exc}
 $\sneg \vv{\alpha} \mid \vv{\beta} \equiv \varphi(\vv{\alpha};\vv{\beta})$.
\end{lem}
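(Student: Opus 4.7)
The plan is to prove the equivalence by unfolding the semantics on both sides: $T \vDash \sneg(\vv{\alpha} \mid \vv{\beta})$ means there exist $s,s' \in T$ (possibly equal) with $s(\vv\alpha) = s'(\vv\beta)$, and we have to connect this witness pair either with the disjunct $\E(\vv\alpha \equi \vv\beta)$ or with one of the big-$\bor$ disjuncts.

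For the easy direction $\varphi \vDash \sneg(\vv\alpha\mid\vv\beta)$, suppose $T\vDash\varphi$. If $T\vDash \E(\vv\alpha\equi\vv\beta)$, some $s\in T$ satisfies $s(\vv\alpha)=s(\vv\beta)$, witnessing $\sneg(\vv\alpha\mid\vv\beta)$ with $s=s'$. Otherwise some disjunct $\top\lor((\E\neg\gamma)\land(\gamma\hook\mathbf{1}_{\vv\alpha})\land\theta^=(\vv\alpha;\vv\beta;\gamma))$ holds via a lax split $(T_0,Y)$ of $T$. From $\gamma\hook\mathbf{1}_{\vv\alpha}$ we get that $Y_\gamma$ is nonempty and all its members share the same value of $\vv\alpha$, and from $\E\neg\gamma$ we get $Y_{\neg\gamma}\neq\emptyset$; in particular the precondition of Lemma~\ref{lem:compare-forms} holds for $Y$, so $\theta^=$ gives $t(\vv\alpha)=t'(\vv\beta)$ for every $(t,t')\in Y_\gamma\times Y_{\neg\gamma}$. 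Picking any such pair, both assignments lie in $T$, witnessing $\sneg(\vv\alpha\mid\vv\beta)$.

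For the harder direction $\sneg(\vv\alpha\mid\vv\beta)\vDash\varphi$, assume $s,s'\in T$ with $s(\vv\alpha)=s'(\vv\beta)$. I will split on whether $s(\vv\alpha)=s'(\vv\alpha)$. If yes, then $s'(\vv\alpha)=s'(\vv\beta)$, so $\{s'\}\vDash \vv\alpha\equi\vv\beta$ by flatness of purely propositional formulas, giving $T\vDash \E(\vv\alpha\equi\vv\beta)$. If no, pick $i$ with $s(\alpha_i)\neq s'(\alpha_i)$; by renaming assume $s\vDash\alpha_i$ and $s'\vDash\neg\alpha_i$, and set $\gamma \dfn \alpha_i$ and $Y\dfn\{s,s'\}$. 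The lax split $(T,Y)$ of $T$ makes $\top$ trivially true on the left and, on the right, $Y_\gamma=\{s\}$ and $Y_{\neg\gamma}=\{s'\}$. Then $\E\neg\gamma$ holds because $s'\in Y_{\neg\gamma}$; $\gamma\hook\mathbf{1}_{\vv\alpha}$ holds because the singleton $Y_\gamma$ is nonempty and trivially has constant $\vv\alpha$-value; and Lemma~\ref{lem:compare-forms} applies and reduces $\theta^=(\vv\alpha;\vv\beta;\gamma)$ to the single check $s(\vv\alpha)=s'(\vv\beta)$, which is the hypothesis.

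The only real obstacle is the second sub-case, where one has to both locate the distinguishing index $i$ and arrange the disjunct/split so that Lemma~\ref{lem:compare-forms}'s hypothesis on $Y_\gamma$ is automatic; choosing $Y$ to be the two-element pivot team $\{s,s'\}$ (rather than, say, $\{s\}\cup T_{\neg\gamma}$) does exactly that, because then the universal quantification in $\theta^=$ collapses to one pair. Everything else is an unfolding of the abbreviations $\E$, $\hook$, and $\mathbf{1}_{\vv\alpha}$ together with the lax $\lor$-semantics.
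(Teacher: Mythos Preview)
Your proof is correct and follows essentially the same approach as the paper: the same case split on whether $s(\vv\alpha)=s'(\vv\alpha)$, the same two-element witness team $\{s,s'\}$ for the nontrivial case, and the same appeal to Lemma~\ref{lem:compare-forms} for $\theta^{=}$. The only cosmetic differences are that the paper uses the split $(T\setminus\{s,s'\},\{s,s'\})$ rather than $(T,\{s,s'\})$, and phrases your ``by renaming'' step as choosing $\gamma\in\{\alpha_i,\neg\alpha_i\}$ with $s(\gamma)=1$ (which is the precise reading, since $s$ and $s'$ cannot be swapped).
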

\begin{proof}
Suppose $T \nvDash \vv{\alpha} \mid \vv{\beta}$, so there are $s,s' \in T$ such that $s(\vv{\alpha}) = s'(\vv{\beta})$.
 First, if $s(\vv{\alpha}) = s'(\vv{\alpha})$, then $T \vDash \E( \vv{\alpha} \equi \vv{\beta})$ and we are done.
 Otherwise, $s$ and $s'$ disagree on some $\gamma \in \{\alpha_i,\neg \alpha_i\mid 1\leq i\leq n\}$ such that $s(\gamma) = 1$ and $s'(\gamma) = 0$.
 Then the split $(T \setminus \{s,s'\},\{s,s'\})$ satisfies the Boolean disjunct with index $\gamma$, as clearly $\{s,s'\}$ satisfies  $\E \neg \gamma$, $\gamma \hook \mathbf{1}_{\vv\alpha}$, and $\theta^=(\vv{\alpha};\vv{\beta};\gamma)$.
 For the other direction, assume that $T \vDash \varphi(\vv{\alpha};\vv{\beta})$.
 Then either $T \vDash \E (\vv{\alpha} \equi \vv{\beta})$ and we are done, or there exist $\gamma$ and some split $(S,U)$ of $T$ such that $U \vDash (\E \neg \gamma) \land (\gamma \hook \mathbf{1}_{\vv\alpha}) \land   \theta^=(\vv{\alpha};\vv{\beta};\gamma)$.
This implies $U_{\gamma},U_{\neg \gamma} \neq \emptyset$, so $s_1(\vv{\alpha}) = s_2(\vv{\beta})$ for some
$(s_1,s_2) \in U_{\gamma} \times U_{\neg \gamma}$.
\end{proof}

\subsection*{Inclusion atom.}

A team $T$ falsifies the inclusion atom $\vv\alpha \subseteq \vv\beta$ if there exists $s^\star \in T$ such that $s^\star(\vv{\alpha}) \neq s(\vv{\beta})$ for all $s \in T$.
In particular, some $s^\star \in T$ must exist such that $s^\star(\alpha_i) \neq s^\star(\beta_i)$ for some $i$.
Similar as for the exclusion atom, it suffices to compare $s^\star(\vv\alpha)$ and $s(\vv\beta)$ only for assignments $s$ such that $s(\beta_i) \neq s^\star(\beta_i)$, as $s(\beta_i) = s^\star(\beta_i)$ already ensures $s^\star(\vv\alpha) \neq s(\vv\beta)$.
Hence $s^\star$ is a pivot for $\gamma \dfn \beta_i$, and it suffices to compare pairs from $\{s^\star\} \times T_{\neg \gamma}$ with $\theta^{\neq}$.

The following formula expresses the negation of the inclusion atom $\vv{\alpha}\subseteq \vv{\beta}$ and has length $\calO(n|\vv\alpha\vv\beta|)$.
\begin{align*}
   \varphi(\vv{\alpha};\vv{\beta})  \dfn \bigbor_{\mathclap{\substack{i=1\\\gamma\in\{\beta_i,\neg \beta_i\}}}}^n \Big(\gamma \lor \Big(\big(\gamma \hook  ( (\alpha_i \nequi \beta_i) \land \mathbf{1}_{\vv\alpha})\big) \land \theta^{\neq}(\vv{\alpha};\vv{\beta};\gamma)\Big)\Big)
\end{align*}
\begin{lem}\label{lem:inc}
 $\sneg\vv{\alpha} \subseteq \vv{\beta} \equiv \varphi(\vv{\alpha};\vv{\beta})$.
\end{lem}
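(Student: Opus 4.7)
My plan is to follow exactly the \emph{pick a pivot} template used in the proof of Lemma~\ref{lem:exc}, adapted to the one-sided nature of inclusion. The semantic hook is that $T \nvDash \vv\alpha \subseteq \vv\beta$ is equivalent to the existence of a single pivot $s^\star \in T$ with $s^\star(\vv\alpha) \neq s(\vv\beta)$ for every $s \in T$; in particular $s^\star(\vv\alpha) \neq s^\star(\vv\beta)$, so some coordinate $i$ witnesses $s^\star(\alpha_i) \neq s^\star(\beta_i)$. Taking $\gamma \in \{\beta_i, \neg\beta_i\}$ with $s^\star \vDash \gamma$, the Boolean disjunct of $\varphi(\vv\alpha;\vv\beta)$ indexed by $(i,\gamma)$ is precisely tailored to this situation: the lax disjunct $\gamma \lor (\cdots)$ is used to peel off those assignments of $T_\gamma$ that spoil the $\vv\alpha$-uniformity needed by Lemma~\ref{lem:compare-forms}.

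For $(\Leftarrow)$, I would unfold a witnessing split $(T_1, T_2)$ of $T$ with $T_1 \vDash \gamma$ and $T_2$ satisfying the conjunction. The subformula $\mathbf{1}_{\vv\alpha}$ (which contains $\sneg\bot$) yields a non-empty, $\vv\alpha$-constant set $(T_2)_\gamma$, from which I extract the pivot $s^\star$. I then verify $s^\star(\vv\alpha) \neq s(\vv\beta)$ for every $s \in T$ by case analysis: if $s \in T_\gamma$, the conjunct $\alpha_i \nequi \beta_i$ forces $s^\star(\alpha_i) \neq s^\star(\beta_i) = s(\beta_i)$, so $s^\star(\vv\alpha) \neq s(\vv\beta)$; if $s \in T_{\neg\gamma}$, then $s \in (T_2)_{\neg\gamma}$ since $T_1 \subseteq T_\gamma$, and Lemma~\ref{lem:compare-forms} applied via $\theta^{\neq}(\vv\alpha;\vv\beta;\gamma)$ on $T_2$ delivers the desired inequality.

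For $(\Rightarrow)$, I construct the required split explicitly by setting
\[
T_2 \dfn T_{\neg\gamma} \cup \{\, s \in T_\gamma \mid s(\vv\alpha) = s^\star(\vv\alpha)\,\}, \qquad T_1 \dfn T \setminus T_2.
\]
Then $T_1 \subseteq T_\gamma$ (so $T_1 \vDash \gamma$ by flatness); the set $(T_2)_\gamma$ contains $s^\star$ together with exactly those $\gamma$-assignments sharing $s^\star$'s $\vv\alpha$-value, which immediately validates the $\gamma$-hooked conjunction $(\alpha_i \nequi \beta_i) \land \mathbf{1}_{\vv\alpha}$ and supplies the uniformity precondition of Lemma~\ref{lem:compare-forms}. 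The latter then certifies $T_2 \vDash \theta^{\neq}(\vv\alpha;\vv\beta;\gamma)$, since every $s' \in (T_2)_{\neg\gamma} \subseteq T$ differs from $s^\star(\vv\alpha)$ in its $\vv\beta$-value by the defining property of the pivot.

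The main subtlety is organisational rather than technical: because the inclusion atom has a dissymmetric "witness-for-all" quantifier pattern (unlike the symmetric pairwise-disjointness of exclusion), the proof must route a single coordinate-mismatch through $\alpha_i \nequi \beta_i$ while simultaneously using the \emph{same} coordinate to carve out a $\gamma$-slice on which $\theta^{\neq}$ can be invoked. The design of $\varphi(\vv\alpha;\vv\beta)$ already encodes this dependence, so once the split in $(\Rightarrow)$ is chosen as above, the remaining verifications reduce to bookkeeping with Lemma~\ref{lem:compare-forms}.
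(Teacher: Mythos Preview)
Your proposal is correct and follows essentially the same pivot-and-split strategy as the paper's proof. The only (harmless) difference is in the $(\Rightarrow)$ direction: the paper takes the minimal subteam $S \dfn \{s^\star\} \cup T_{\neg\gamma}$, whereas you take the larger $T_2 \dfn T_{\neg\gamma} \cup \{s \in T_\gamma \mid s(\vv\alpha) = s^\star(\vv\alpha)\}$; both choices satisfy the required constraints, and your $(\Leftarrow)$ analysis is in fact slightly more careful than the paper's in that you correctly treat $(T_2)_\gamma$ as merely $\vv\alpha$-constant rather than a singleton.
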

\begin{proof}
 Let $T \nvDash \vv{\alpha} \subseteq \vv{\beta}$.
We show that $T \vDash \varphi(\vv{\alpha};\vv{\beta})$.
 By definition, there is $s^\star \in T$ such that $s^\star(\vv{\alpha}) \neq s(\vv{\beta})$ for all $s \in T$.
 In particular, $s^\star(\alpha_i) \neq s^\star(\beta_i)$ for some $i \in \{1,\ldots,n\}$.
 Let $\gamma \in \{\beta_i,\neg \beta_i\}$ such that $s^\star(\gamma) = 1$, and consider the subteam $S \dfn \{ s^\star \} \cup T_{\neg \gamma}$ of $T$.
 We show that the Boolean disjunct with index $\gamma$ is satisfied by the split $(T\setminus S,S)$.
 Clearly, $T \setminus S \vDash \gamma$.
 Moreover, $S_{\gamma} = \{ s^\star\} \vDash (\alpha_i \nequi \beta_i) \land \mathbf{1}_{\vv\alpha}$.
Finally, $S\vDash \theta^{\neq}(\vv{\alpha};\vv{\beta};\gamma)$ holds
since $s^\star(\vv{\alpha}) \neq s(\vv{\beta})$ for all $s \in T_{\neg \gamma}$ by assumption.

\smallskip

Conversely, assume $T \vDash \varphi(\vv{\alpha};\vv{\beta})$ with $1 \leq i \leq n$ and $\gamma \in \{ \beta_i,\neg \beta_i \}$ chosen according to a satisfying disjunct of $\varphi(\vv{\alpha};\vv{\beta})$.
By the formula, $T$ can be divided into $X \cup S$ with $X \vDash \gamma$, $S_{\gamma} \vDash (\alpha_i \neq \beta_i) \land \mathbf{1}_{\vv\alpha}$, and $S \vDash \theta^{\neq}(\vv{\alpha};\vv{\beta};\gamma)$.
In particular $S_{\gamma} = \{s^\star\}$ for some $s^\star$.
We show that for all $s \in T$ we have $s^\star(\vv{\alpha}) \neq s(\vv{\beta})$, so $T \nvDash \vv{\alpha} \subseteq \vv{\beta}$.
For all $s \in T_{\neg \gamma}$, this follows since $T_{\neg \gamma} = S_{\neg\gamma}$ and $S \vDash \theta^{\neq}(\vv{\alpha};\vv{\beta};\gamma)$.
For all $s \in T_\gamma$, this follows since $s(\beta_i) = s^\star(\beta_i)$ (recall that $\gamma \in \{\beta_i,\neg\beta_i\}$) and $s^\star(\beta_i) \neq s^\star(\alpha_i)$.
\end{proof}

\subsection*{Independence atom.}

The independence atom $\vv\alpha\perp_{\vv\gamma}\vv\beta$ is a bit more complicated:
It is false if there are $s^\star,s^\circ \in T$ that agree on $\vv\gamma$, and any $s \in T$ disagrees either with $s^\star$ on $\vv\alpha\vv\gamma$ or with $s^\circ$ on $\vv\beta$.
We separate the team along two "axes", with $\delta$ and $\epsilon$, have one pivot ($s^\star$ or $s^\circ$) for each, and two occurrences of $\theta^{\neq}$.

The following formula expresses the negation of the conditional independence atom $\vv{\alpha} \perp_{\vv{\gamma}} \vv{\beta}$ and has length $\calO(n(n+m+k)|\vv\alpha\vv\beta\vv\gamma|)$, where $\vv{\alpha} = (\alpha_1,\ldots,\alpha_n)$, $\vv{\beta} = (\beta_1,\ldots,\beta_m)$, and $\vv{\gamma} = (\gamma_1,\ldots,\gamma_k)$.
\begin{align*}
  \varphi(\vv{\alpha};\vv{\beta};\vv{\gamma}) \dfn \bigbor_{\mathclap{\substack{\delta \in \{\alpha_i,\neg \alpha_i \mid 1 \leq i \leq n\}\\
  \epsilon \in \{\beta_j,\neg \beta_j\mid 1 \leq j \leq m\}}}} (\delta\lor \epsilon) \lor \Big( \big[ (\neg \delta \land \theta^{\neq}(\vv{\alpha}\vv{\gamma};\vv{\alpha}\vv{\gamma};\epsilon)) \lor (\neg \epsilon \land \theta^{\neq}(\vv{\beta};\vv{\beta};\delta))&\big]\\[-6mm]
  \; \land\, \E \delta \land \E \epsilon \land ((\delta \lor \epsilon) \hook & \;\mathbf{1}_{\vv\gamma}) \Big)
\end{align*}
\begin{lem}\label{lem:ind}
 $\sneg\vv{\alpha} \perp_{\vv{\gamma}} \vv{\beta} \equiv \varphi(\vv{\alpha};\vv{\beta};\vv{\gamma})$.
\end{lem}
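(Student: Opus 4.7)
The plan is to follow the same pattern as Lemmas~\ref{lem:exc} and~\ref{lem:inc}: explicit construction for the forward direction and a case analysis for the backward direction. I would start with the \emph{forward} direction: assume $T \nvDash \vv\alpha \perp_{\vv\gamma} \vv\beta$ and fix witnesses $s^\star, s^\circ \in T$ with $s^\star(\vv\gamma) = s^\circ(\vv\gamma)$ such that no $s'' \in T$ satisfies both $s''(\vv\alpha\vv\gamma) = s^\star(\vv\alpha\vv\gamma)$ and $s''(\vv\beta) = s^\circ(\vv\beta)$. Instantiating this property with $s'' = s^\circ$ forces $s^\star(\vv\alpha) \neq s^\circ(\vv\alpha)$, and with $s'' = s^\star$ it forces $s^\star(\vv\beta) \neq s^\circ(\vv\beta)$; hence there are literals $\delta \in \{\alpha_i,\neg\alpha_i\}$ and $\epsilon \in \{\beta_j,\neg\beta_j\}$ such that $s^\star \vDash \neg\delta \land \epsilon$ and $s^\circ \vDash \delta \land \neg\epsilon$. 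These will be the literals indexing the Boolean disjunct of $\varphi$ that I aim to satisfy.

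Next, I would let $F \dfn \{s \in T \mid s \vDash \neg\delta \land \neg\epsilon\}$ and split $F = F_1 \cupdot F_2$ where $F_1 \dfn \{s \in F \mid s(\vv\alpha\vv\gamma) \neq s^\star(\vv\alpha\vv\gamma)\}$; by the main hypothesis every $s \in F_2$ then satisfies $s(\vv\beta) \neq s^\circ(\vv\beta)$. Take the lax split $T = U \cup V$ with $U \dfn T \setminus F$ and $V \dfn F \cup \{s^\star, s^\circ\}$, and further $V_1 \dfn \{s^\star\} \cup F_1$, $V_2 \dfn \{s^\circ\} \cup F_2$. Checking the conjuncts is then routine: $U \vDash \delta \lor \epsilon$ because every element of $U$ satisfies $\delta$ or $\epsilon$; $V \vDash \E\delta$ and $V \vDash \E\epsilon$ via $s^\circ$ and $s^\star$; $V_{\delta \lor \epsilon} = \{s^\star, s^\circ\}$ agrees on $\vv\gamma$, so $V \vDash (\delta \lor \epsilon) \hook \mathbf{1}_{\vv\gamma}$; and in both $V_1$ and $V_2$ the $\epsilon$-part (resp.\ $\delta$-part) is the singleton $\{s^\star\}$ (resp.\ $\{s^\circ\}$), so the pivot hypothesis of Lemma~\ref{lem:compare-forms} is trivially met and its conclusion reduces to checking that elements of $F_1$ disagree with $s^\star$ on $\vv\alpha\vv\gamma$ and elements of $F_2$ disagree with $s^\circ$ on $\vv\beta$, both true by construction.

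For the \emph{backward} direction, I would assume $T \vDash \varphi(\vv\alpha;\vv\beta;\vv\gamma)$ with witnessing literals $\delta,\epsilon$ and lax splits $T = U \cup V$, $V = V_1 \cup V_2$, then pick $s^\star \in (V_1)_\epsilon$ and $s^\circ \in (V_2)_\delta$, which are non-empty since every $\lor$-disjunct in $\theta^{\neq}$ carries an $\E\epsilon$ (resp.\ $\E\delta$). Both $s^\star, s^\circ$ lie in $V_{\delta \lor \epsilon}$, so $V \vDash (\delta \lor \epsilon) \hook \mathbf{1}_{\vv\gamma}$ yields $s^\star(\vv\gamma) = s^\circ(\vv\gamma)$. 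To finish I would show that for every $s \in T$, either $s(\vv\alpha\vv\gamma) \neq s^\star(\vv\alpha\vv\gamma)$ or $s(\vv\beta) \neq s^\circ(\vv\beta)$, by splitting on the four quadrants of $\{\delta,\neg\delta\}\times\{\epsilon,\neg\epsilon\}$: if $s \vDash \delta$ then $s$ and $s^\star$ disagree on the underlying proposition of $\delta$, and dually if $s \vDash \epsilon$ then $s$ and $s^\circ$ disagree on the underlying proposition of $\epsilon$; the remaining cases $s \in (V_1)_{\neg\epsilon}$ and $s \in (V_2)_{\neg\delta}$ are delegated to Lemma~\ref{lem:compare-forms} applied to $V_1$ and $V_2$.

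The main obstacle is precisely the last step: to apply Lemma~\ref{lem:compare-forms} one needs the pivot condition on $(V_1)_\epsilon$ for $\vv\alpha\vv\gamma$ (and dually on $(V_2)_\delta$ for $\vv\beta$), but $\mathbf{1}_{\vv\gamma}$ alone only pins down the $\vv\gamma$-component in $V_{\delta\lor\epsilon}$. Closing this gap requires choosing $s^\star$ as the $\epsilon$-side pivot of the specific $\lor$-split of $V_1$ witnessing $\theta^{\neq}(\vv\alpha\vv\gamma;\vv\alpha\vv\gamma;\epsilon)$ — exactly the strategy already used in Lemmas~\ref{lem:exc} and~\ref{lem:inc} — and this is where I would concentrate the bookkeeping.
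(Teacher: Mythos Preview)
Your proposal follows the same two–direction scheme as the paper, with essentially identical constructions. In the forward direction your $V_1,V_2$ are the paper's $S,U$ (the paper allows them to overlap on elements of $F$ satisfying both inequality conditions, but this is immaterial for a lax split). In the backward direction the paper argues by contradiction---fixing $s^\star\in S_\epsilon$, $s^\circ\in U_\delta$, invoking independence to produce a single witness $s$, and then locating $s$ in $S\cup U$---rather than treating all $s\in T$ as you do, but the case analysis is the same as yours.

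The obstacle you raise is genuine, and the paper does not address it either: from $s\in S\cup U$ it simply concludes ``either $s(\vv\alpha\vv\gamma)\neq s^\star(\vv\alpha\vv\gamma)$ or $s(\vv\beta)\neq s^\circ(\vv\beta)$'', implicitly invoking the $\Rightarrow$-direction of Lemma~\ref{lem:compare-forms} for $S$ (resp.\ $U$) without verifying the pivot hypothesis on $\vv\alpha$ (resp.\ $\vv\beta$); only $\vv\gamma$ is pinned down on $S_\epsilon\cup U_\delta$ by $\mathbf{1}_{\vv\gamma}$. Your suggested patch---take $s^\star$ to be ``the'' $\epsilon$-pivot of the split witnessing $\theta^{\neq}$---does not literally close the gap, since $\theta^{\neq}(\vv\alpha\vv\gamma;\vv\alpha\vv\gamma;\epsilon)$ splits $S$ into $n+k$ pieces $Y_j$, each with its own $\epsilon$-witness, and there is no single element of $S_\epsilon$ guaranteed to realise $l_j$ at coordinate $j$ for every $j$ simultaneously. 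So at the level of detail you have written, your argument matches the paper's exactly, including the point both leave open.
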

\begin{proof}
For the direction from left to right, assume $T \nvDash \vv{\alpha} \perp_{\vv{\gamma}} \vv{\beta}$.
Then there are $s^\star,s^\circ \in T$ such that $s^\star(\vv{\gamma}) = s^\circ(\vv{\gamma})$, but for all $s \in T$ it holds either $s(\vv{\alpha}\vv{\gamma}) \neq s^\star(\vv{\alpha}\vv{\gamma})$ or $s(\vv{\beta}) \neq s^\circ(\vv{\beta})$.
In particular, there must be $i,j$ such that $s^\star(\alpha_i) \neq s^\circ(\alpha_i)$ and $s^\star(\beta_j) \neq s^\circ(\beta_j)$.
Let $\delta \in \{\alpha_i,\neg \alpha_i\}$ and $\epsilon\in \{\beta_j,\neg \beta_j\}$ such that $s^\star(\epsilon) = s^\circ(\delta) = 1$ and $s^\star(\delta) = s^\circ(\epsilon) = 0$.
In order to now satisfy the Boolean disjunct with index $\delta,\epsilon$, we define subteams
\begin{align*}
S &\dfn \{ s^\star \} \cup \{ s \in T \mid s(\delta) = s(\epsilon) = 0, s(\vv{\alpha}\vv{\gamma}) \neq s^\star(\vv{\alpha}\vv{\gamma}) \}\\
U &\dfn \{ s^\circ \} \cup \{ s \in T \mid s(\delta) = s(\epsilon) = 0, s(\vv{\beta}) \neq s^\circ(\vv{\beta}) \}
\end{align*}
of $T$.
We show that the (in fact strict) split $(T \setminus (S\cup U), S \cup U)$ satisfies the disjunction.
First, $T\setminus (S \cup U) \vDash \delta \lor \epsilon$ due to the fact that $T \setminus (S \cup U) \subseteq T_{\delta} \cup T_{\epsilon}$.
Furthermore, $S \cup U \vDash \E \delta \land \E \epsilon \land (\delta \lor \epsilon) \hook \mathbf{1}_{\vv\gamma}$, since ${(S\cup U)}_{\delta \lor \epsilon} = \{s^\star,s^\circ\}$.
For the part in brackets, consider the (again strict) split $(S,U \setminus S)$ of $S \cup U$.
Again, clearly $S \vDash \neg \delta$ and $U \vDash \neg \epsilon$.
Finally, both $S \vDash \theta^{\neq}(\vv{\alpha}\vv{\gamma};\vv{\alpha}\vv{\gamma};\epsilon)$ and $U \vDash \theta^{\neq}(\vv{\beta};\vv{\beta};\delta)$ hold.

For the other direction, assume $T \vDash \varphi(\vv{\alpha};\vv{\beta};\vv{\gamma})$ with the Boolean disjunction satisfied with indices $\delta \in \{\alpha_i,\neg \alpha_i\mid 1 \leq i \leq n\}$ and
$\epsilon \in \{\beta_j,\neg \beta_j\mid 1 \leq j \leq m\}$.
Then $T$ can be divided into $X \cup S \cup U$ where
  \begin{itemize}
    \item $X \vDash \delta \lor \epsilon$,
    \item $S \vDash \neg \delta \land \theta^{\neq}(\vv{\alpha}\vv{\gamma};\vv{\alpha}\vv{\gamma};\epsilon)$,
    \item $U \vDash \neg \epsilon \land \theta^{\neq}(\vv{\beta};\vv{\beta};\delta)$
    and
    \item $S \cup U \vDash \E\delta \land \E\epsilon \land ((\delta \lor \epsilon) \hook \mathbf{1}_{\vv\gamma})$.
  \end{itemize}
  By the final line, assignments $s^\star \in S_{\epsilon}$ and $s^\circ \in U_{\delta}$ exist.
  Now, for the sake of contradiction, suppose that $T \vDash \vv{\alpha}\perp_{\vv{\gamma}}\vv{\beta}$.
  As $S_\epsilon \cup U_\delta \vDash \mathbf{1}_{\vv\gamma}$ and hence $s^\star(\vv{\gamma}) = s^\circ(\vv{\gamma})$, due to independence, another assignment $s \in T$ must exist such that $s(\vv{\alpha}\vv{\gamma}) = s^\star(\vv{\alpha}\vv{\gamma})$ and $s(\vv{\beta}) = s^\circ(\vv{\beta})$.

However, $s \notin X$, since $s(\vv{\alpha}) = s^\star(\vv{\alpha})$ implies $s \nvDash \delta$ and $s(\vv{\beta}) = s^\circ(\vv{\beta})$ implies $s \nvDash \epsilon$.
Consequently, $s \in S \cup U$.
For this reason, either $s(\vv{\alpha}\vv{\gamma}) \neq s^\star(\vv{\alpha}\vv{\gamma})$, or $s(\vv{\beta}) \neq s^\circ(\vv{\beta})$, contradiction to $s(\vv{\alpha}\vv{\gamma}) = s^\star(\vv{\alpha}\vv{\gamma})$ and $s(\vv{\beta}) = s^\circ(\vv{\beta})$.
\end{proof}

\subsection*{Anonymity atom.}

Finally, the following formula expresses the negation of the unary anonymity atom $\vv{\alpha}\anon \beta$ and has length $\calO(n|\beta|+|\vv\alpha|)$.

Roughly speaking, the anonymity atom $\vv{\alpha}\anon \beta$ is false if there is $s^\star \in T$ such that no $s \in T$ with identical $\vv\alpha$ but different $\beta$ exists, or in other words, all $s \in T$ with different $\beta$ are also different in $\vv\alpha$.
So we can directly let $\gamma \dfn \beta$ or $\gamma \dfn \neg \beta$, pick $s$ as pivot, and apply $\theta^{\neq}$ to $\alpha$:
\begin{align*}
   \varphi(\vv{\alpha};\beta)  \dfn \bigbor_{\mathclap{\gamma \in \{\beta,\neg \beta\}}} \Big( \gamma \lor \big( (\gamma \hook \mathbf{1}_{\vv\alpha}) \land \theta^{\neq}(\vv{\alpha};\vv{\alpha};\gamma) \big) \Big)
\end{align*}
\begin{lem}\label{lem:anon-unary}
  $\sneg\vv{\alpha} \anon \beta \equiv \varphi(\vv{\alpha};\beta)$.
\end{lem}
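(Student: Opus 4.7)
The plan is to mirror the proofs of Lemmas~\ref{lem:exc}--\ref{lem:ind}, adapting the two-sided argument to the single-pivot structure of $\varphi(\vv\alpha;\beta)$. The starting point is the direct unfolding of the semantics: $T \nvDash \vv\alpha \anon \beta$ holds iff there exists $s^\star \in T$ such that every $s' \in T$ with $s'(\vv\alpha) = s^\star(\vv\alpha)$ also satisfies $s'(\beta) = s^\star(\beta)$. Choosing $\gamma \in \{\beta,\neg\beta\}$ with $s^\star(\gamma) = 1$, this is equivalent to saying that every $s' \in T_{\neg\gamma}$ disagrees with $s^\star$ on $\vv\alpha$.

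For the direction $\sneg(\vv\alpha\anon\beta) \vDash \varphi(\vv\alpha;\beta)$, I would take such a witness $s^\star$ and $\gamma$, and satisfy the corresponding Boolean disjunct of $\varphi$ via the lax split $(T\setminus S, S)$ with $S \dfn \{s^\star\} \cup T_{\neg\gamma}$. Since $T \setminus S \subseteq T_\gamma\setminus\{s^\star\}$, the first half satisfies $\gamma$. On the other side, $S_\gamma = \{s^\star\}$ immediately gives $S\vDash \gamma\hook \mathbf{1}_{\vv\alpha}$ and also delivers the pivot condition required by Lemma~\ref{lem:compare-forms}; applying that lemma reduces $S \vDash \theta^{\neq}(\vv\alpha;\vv\alpha;\gamma)$ to checking $s^\star(\vv\alpha) \neq s'(\vv\alpha)$ for all $s' \in S_{\neg\gamma} = T_{\neg\gamma}$, which is exactly the defining property of $s^\star$.

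Conversely, given $T \vDash \varphi(\vv\alpha;\beta)$ I would select a satisfied disjunct indexed by some $\gamma \in \{\beta,\neg\beta\}$ and extract a split $T = X \cup S$ with $X \vDash \gamma$, $S \vDash \gamma \hook \mathbf{1}_{\vv\alpha}$, and $S \vDash \theta^{\neq}(\vv\alpha;\vv\alpha;\gamma)$. The conjunct $\gamma \hook \mathbf{1}_{\vv\alpha}$ forces $S_\gamma$ to be nonempty (since $\mathbf{1}_{\vv\alpha}$ entails $\nempty$) and all of its assignments to share a single $\vv\alpha$-value, so any $s^\star \in S_\gamma$ serves as a candidate pivot. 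I would then fix $s' \in T$ with $s'(\vv\alpha) = s^\star(\vv\alpha)$ and argue $s' \vDash \gamma$: otherwise $s' \in T_{\neg\gamma}$, and $X \vDash \gamma$ forces $s' \in S_{\neg\gamma}$, so Lemma~\ref{lem:compare-forms} gives $s^\star(\vv\alpha) \neq s'(\vv\alpha)$, contradicting the choice of $s'$. Hence $s'(\gamma) = 1$, i.e., $s'(\beta) = s^\star(\beta)$, so $s^\star$ witnesses $T \nvDash \vv\alpha\anon\beta$. The only genuinely delicate step is the alignment of $\gamma$ with $\mathrm{sign}(s^\star(\beta))$ in the forward direction and the non-emptiness of $S_\gamma$ in the backward direction; both are immediate once $\mathbf{1}_{\vv\alpha}$ and the two-valued range of $\gamma$ are taken into account, so the proof is entirely routine given the machinery already established.
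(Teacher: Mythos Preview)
Your proposal is correct and follows essentially the same approach as the paper: in both directions you pick the same pivot $s^\star$, the same $\gamma \in \{\beta,\neg\beta\}$, and the same split $S = \{s^\star\}\cup T_{\neg\gamma}$ (resp.\ extract $s^\star$ from $S_\gamma$ via $\mathbf{1}_{\vv\alpha}$), then invoke Lemma~\ref{lem:compare-forms} exactly as the paper does. The only cosmetic difference is that in the backward direction you phrase the conclusion contrapositively (fix $s'$ with $s'(\vv\alpha)=s^\star(\vv\alpha)$ and derive $s'(\gamma)=1$), whereas the paper states directly that $s^\star(\beta)=s(\beta)$ or $s^\star(\vv\alpha)\neq s(\vv\alpha)$ for all $s\in T$; these are the same argument.
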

\begin{proof}
  Suppose $T \nvDash \vv{\alpha} \anon \beta$.
  Then there is $s^\star \in T$ such that $s(\vv{\alpha}) = s^\star(\vv{\alpha})$ implies $s(\beta) = s^\star(\beta)$ for all $s \in T$.
  Let $\gamma \in \{\beta,\neg \beta\}$ such that $s^\star \vDash \gamma$, and consider the split $(T \setminus S,S)$ of $T$ defined by $S \dfn \{ s^\star \} \cup T_{\neg\gamma}$.
  Then $T \setminus S \vDash \gamma$.
  Moreover, $S \vDash \gamma \hook \mathbf{1}_{\vv\alpha}$ and $S \vDash \theta^{\neq}(\vv{\alpha};\vv{\alpha};\gamma)$, since $S_{\gamma} = \{ s^\star\}$ and $s(\vv{\alpha}) \neq s^\star(\vv{\alpha})$ for all $s \in S_{\neg \gamma}$.

  For the other direction, suppose there is $\gamma \in \{\beta,\neg \beta\}$ such that $S \vDash \gamma$ and $U \vDash (\gamma \hook \mathbf{1}_{\vv\alpha}) \land \theta^{\neq}(\vv{\alpha};\vv{\alpha};\gamma)$ for some split $(S,U)$ of $T$.
  Then there exists $s^\star \in U_\gamma$ such that $s^\star(\vv{\alpha}) \neq s(\vv{\alpha})$ for all $s \in U_{\neg \gamma}$.
  Clearly, now $s^\star(\beta) = s(\beta)$ for all $s \in S \cup U_{ \gamma}$, so ultimately $s^\star(\beta) = s(\beta)$ or $s^\star(\vv{\alpha}) \neq s(\vv{\alpha})$ for all $s \in T$, hence $T \nvDash \vv{\alpha} \anon \beta$.
\end{proof}

In the first-order setting, Rönnholm~\cite[Remark 2.31]{RRthesis} demonstrated that the general anonymity atom can be expressed via the unary anonymity atom and the splitting disjunction.
In the lemma below, we show that this can also be done via strict splitting.
This yields a formula expressing $\vv{\alpha} \anon \vv{\beta}$ of length $\calO(n|\vv\beta|+m|\vv\alpha|)$.

\begin{lem}\label{lem:anon}
  The following formulas are equivalent:
  \begin{enumerate}
    \item[(1)] $\vv{\alpha} \anon \vv{\beta}$,
    \item[(2)] $\bigvee_{i=1}^m \vv{\alpha}\anon\beta_i$,
    \item[(3)] $\dot\bigvee_{i=1}^m \,\vv{\alpha}\anon\beta_i$.
  \end{enumerate}
\end{lem}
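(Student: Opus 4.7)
The plan is to establish the cycle (3) $\Rightarrow$ (2) $\Rightarrow$ (1) $\Rightarrow$ (3). The first implication is immediate: every strict split is in particular a lax split, so (3) entails (2). The implication (2) $\Rightarrow$ (1) is also routine. Given a lax split $T = T_1 \cup \cdots \cup T_m$ with $T_i \vDash \vv\alpha \anon \beta_i$, any $s \in T$ lies in some $T_i$, and the unary anonymity atom yields $s' \in T_i \subseteq T$ with $s(\vv\alpha) = s'(\vv\alpha)$ and $s(\beta_i) \neq s'(\beta_i)$; the latter already forces $s(\vv\beta) \neq s'(\vv\beta)$, as required for $\vv\alpha \anon \vv\beta$.

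The content lies in (1) $\Rightarrow$ (3), where we must construct an explicit strict partition. The idea is to partition $T$ along the fibres of the map $s \mapsto s(\vv\alpha)$. For each $\vv a \in \{0,1\}^n$ realised in $T$, write $F_{\vv a} \dfn \{s \in T : s(\vv\alpha) = \vv a\}$. Applied to any $s \in F_{\vv a}$, the assumption $T \vDash \vv\alpha \anon \vv\beta$ yields a witness $s' \in T$ with $s'(\vv\alpha) = \vv a$, hence $s' \in F_{\vv a}$, and $s'(\vv\beta) \neq s(\vv\beta)$. In particular, some coordinate $\beta_{i(\vv a)}$ takes both truth values inside $F_{\vv a}$. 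Fix such an index $i(\vv a)$ for every non-empty fibre and define $T_i \dfn \bigcup_{\vv a : i(\vv a) = i} F_{\vv a}$. Since the fibres partition $T$, so do the $T_i$, which yields a strict partition and hence a witness for the iterated $\sor$ in (3).

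It remains to verify $T_i \vDash \vv\alpha \anon \beta_i$ for each $i$. Given $s \in T_i$, let $\vv a \dfn s(\vv\alpha)$, so that $s \in F_{\vv a}$ and $i(\vv a) = i$. By the choice of $i(\vv a)$ there exists $s' \in F_{\vv a}$ with $s'(\beta_i) \neq s(\beta_i)$, and this $s'$ lies in $F_{\vv a} \subseteq T_i$ and satisfies $s'(\vv\alpha) = \vv a = s(\vv\alpha)$, which is exactly what the unary anonymity atom requires.

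The main subtle point I expect is precisely the need to keep the anonymity witness inside the same part of the partition. A naive index choice assigned to individual assignments (rather than to whole $\vv\alpha$-fibres) would not do this, since two assignments sharing the same $\vv\alpha$-value could be routed to different parts, destroying the witness. Performing the choice fibre-wise is what repairs the argument, and it is simultaneously what makes the partition strict rather than merely lax, so that we obtain (3) directly rather than only (2).
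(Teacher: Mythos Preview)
Your proof is correct. The cycle of implications matches the paper's, and your argument for (1) $\Rightarrow$ (3) via $\vv\alpha$-fibres is clean: the key observation that if $\beta_i$ takes \emph{both} truth values on a fibre $F_{\vv a}$ then every element of $F_{\vv a}$ has a $\beta_i$-witness inside $F_{\vv a}$ is exactly what is needed.

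The paper's route for (1) $\Rightarrow$ (3) is organized differently. Rather than grouping assignments by $\vv\alpha$-fibre and then choosing an index per fibre, it starts from R\"onnholm's lax covers $Y_i \dfn \{s \in T \mid \exists s' \in T,\ s'(\vv\alpha)=s(\vv\alpha),\ s(\beta_i)\neq s'(\beta_i)\}$ and disjointifies them by priority, setting $Z_i \dfn Y_i \setminus \bigcup_{j<i} Y_j$. The verification that the witness $s'$ for $s\in Z_i$ lands back in $Z_i$ then uses that $Y_j$-membership depends only on the $\vv\alpha$-fibre (``by symmetry also $s\in Y_j$''). Unpacking this, one sees that each $Z_i$ is in fact a union of whole $\vv\alpha$-fibres, namely those whose minimal non-constant coordinate is $i$; so the two constructions coincide when your choice function $i(\vv a)$ picks the least available index. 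Your presentation makes the fibre structure explicit from the outset and avoids the symmetry detour, at the cost of an uncanonical choice; the paper's presentation is canonical and connects directly to R\"onnholm's earlier lax construction. Either way, the essential content is the same insight you highlight at the end: the index must be assigned per fibre, not per assignment.
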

\begin{proof}
For (2) $\Rightarrow$ (1), we follow Rönnholm~\cite{RRthesis}.
Suppose $T \vDash \bigvee_{i=1}^m \vv{\alpha}\anon\beta_i$ via the split of $T$ into $Y_1 \cup \cdots \cup Y_m$, where $Y_i \vDash \vv{\alpha}\anon\beta_i$.
To see that $T \vDash \vv{\alpha} \anon \vv{\beta}$, let $s \in T$ be arbitrary.
For some $i$, now $s \in Y_i$.
Consequently, there is $s' \in Y_i$ such that $s(\vv{\alpha}) = s'(\vv{\alpha})$ but $s(\beta_i) \neq s'(\beta_i)$.
But as $Y_i \subseteq T$ and $s$ was arbitrary, (1) follows.

The step (3) $\Rightarrow$ (2) is clear, since every strict split of a team is a split.

It remains to show (1) $\Rightarrow$ (3).
Here, we adapt the proof of Rönnholm~\cite{RRthesis} for $\sor$.
Suppose that $T \vDash \vv{\alpha} \anon \vv{\beta}$ holds.
Define subteams $Y_i$ of $T$ by
\[
Y_i \dfn \Set{ s \in T \mid \exists s' \in T : s'(\vv{\alpha}) = s(\vv{\alpha}) \text{ but }s(\beta_i) \neq s'(\beta_i) }\text{,}
\]
as in the proof of Rönnholm~\cite{RRthesis}, but additionally define teams $Z_i \dfn Y_i \setminus \bigcup_{j < i} Y_{j}$ for $1 \leq i \leq m$, where $Y_0 \dfn \emptyset$.
We show that $Z_1 \cup \cdots \cup Z_m$ forms a strict split of $T$.
The sets $Z_1,\ldots,Z_m$ are pairwise disjoint, as $Z_i \subseteq Y_i$ but $Z_{j} \cap Y_i = \emptyset$ when $i < j$.
Next, let $s \in T$ be arbitrary.
Define
\[
I \dfn \{ i \in \{1,\ldots,m\} \mid \exists s' \in T : s(\vv\alpha) = s'(\vv\alpha) \text{ but } s(\beta_i) \neq s'(\beta_i)\}\text{.}
\]
By assumption (1), $I$ is non-empty and hence contains a minimal element $i$.
But then $s \in Y_i \setminus \bigcup_{j<i}Y_{j} = Z_i$.
Consequently, $T = \bigcup_{i=1}^m Z_i$.

Finally, we need to show that $Z_i \vDash \vv\alpha \anon \beta_i$.
For this, let now $s \in Z_i$ be arbitrary.
By definition of $Z_i$, there exists $s' \in T$ with $s(\vv\alpha) = s'(\vv\alpha)$ and $s(\beta_i) \neq s'(\beta_i)$.
It suffices to show that $s' \in Z_i = Y_i \setminus \bigcup_{j<i}Y_j$.
As $s' \in Y_i$ follows from the definition of $Y_i$, assume $s' \in Y_j$ for some $j < i$.
Then by symmetry also $s \in Y_j$, contradiction to $s \in Z_i$.
Hence $s' \notin Y_j$ for all $j < i$, so $s' \in Z_i$.
\end{proof}

With the negations of dependency atoms definable in $\PL(\{ \bor, \land, \lor \})$, it is an easy corollary that the atoms themselves are definable when additionally the strong negation $\sneg$ is available.
In the next theorem, we prove this, generalize the part on the anonymity atom $\anon$, and furthermore expand the results to also work with $\sor$, which we previously considered only for the downward closed atoms $\dep{\cdot,\cdot}$ and $\mid$ in Theorem~\ref{thm:upper-bounds}.

\begin{thm}\label{thm:upper-bounds-with-negation}
  Let $\Sigma = \{ \sneg, \land, \lor \}$ or $\Sigma = \{ \sneg, \land, \sor \}$.
  Let $\vv\alpha = (\alpha_1,\ldots,\alpha_n)$, $\vv\beta = (\beta_1,\ldots,\beta_m)$, and $\vv\gamma = (\gamma_1,\ldots,\gamma_k)$ be tuples of purely propositional formulas.
  \begin{itemize}
    \item The dependence atom $\dep{\vv\alpha;\vv\beta}$ is equivalent to a $\PL(\Sigma)$-formula of length $\calO(|\vv\alpha\vv\beta|)$.
    \item The exclusion atom $\vv{\alpha} \mid \vv{\beta}$ is equivalent to a $\PL(\Sigma)$-formula of length $\calO(n|\vv\alpha\vv\beta|)$.
    \item The inclusion atom $\vv\alpha\subseteq \vv\beta$ is equivalent to a $\PL(\Sigma)$-formula of length $\calO(n|\vv\alpha\vv\beta|)$.
    \item The conditional independence atom $\vv\alpha\perp_{\vv\gamma}\vv\beta$ is equivalent to a $\PL(\Sigma)$-formula of length $\calO(n(n+m+k)|\vv\alpha\vv\beta\vv\gamma|)$.
    \item The anonymity atom $\vv\alpha\anon \vv\beta$ is equivalent to a $\PL(\Sigma)$-formula of length $\calO(n|\vv\beta|+m|\vv\alpha|)$.
  \end{itemize}
  Furthermore, all these formulas are logspace-computable.
\end{thm}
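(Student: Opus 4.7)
The plan is to obtain Theorem~\ref{thm:upper-bounds-with-negation} as a direct corollary of Theorem~\ref{thm:upper-bounds} (and, for the general anonymity atom, of Lemma~\ref{lem:anon}). The key observation is that once the Boolean negation $\sneg$ is in $\Sigma$, the Boolean disjunction $\bor$ is definable by the De Morgan identity $\phi \bor \psi \equiv \sneg(\sneg\phi \land \sneg\psi)$, at a constant symbol overhead per occurrence. So for each atom of dependency $\#$, I would take the $\PL(\{\bor,\land,\lor\})$-formula $\varphi_\#$ from Theorem~\ref{thm:upper-bounds} defining $\sneg\#$, prepend a single $\sneg$ to obtain a formula equivalent to $\#$ itself, and eliminate every $\bor$ in it by the above rewrite. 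The result lies in $\PL(\{\sneg,\land,\lor\})$, has length a constant multiple of $|\varphi_\#|$, and is logspace-computable since every transformation is a purely local syntactic substitution driven by the input atom.

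For the case $\Sigma = \{\sneg,\land,\sor\}$, the dependence and exclusion atoms need no further argument, since the proof of Theorem~\ref{thm:upper-bounds} already observes that their defining formulas for $\sneg\#$ can be written in $\PL(\{\bor,\land,\sor\})$. For inclusion, conditional independence, and anonymity I would inspect the formulas of Theorem~\ref{thm:upper-bounds} and verify that every occurrence of $\lor$ has at least one purely propositional, hence downward closed, operand, so that Proposition~\ref{prop:strict-lax} licenses the in-place replacement of $\lor$ by $\sor$. At the outer levels this is visible from the forms $\gamma \lor (\cdots)$, $\top \lor (\cdots)$, $(\delta \lor \epsilon) \lor (\cdots)$, and from the hook abbreviation $\alpha \hook \varphi = \neg\alpha \lor (\alpha \land \varphi)$; inside $\theta^{=}$ and $\theta^{\neq}$ the disjunctions $(\gamma \land (\alpha_i \equi l)) \lor (\neg\gamma \land (\beta_i \nequi l))$ are likewise well-positioned. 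The general anonymity atom is first reduced by Lemma~\ref{lem:anon} to the strict-split disjunction $\dot\bigvee_{i=1}^m \vv{\alpha} \anon \beta_i$ of unary instances, to which the unary bound then applies.

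The main obstacle is the careful audit of the outer $\bigvee_i$ inside $\theta^{\neq}$, whose disjuncts $\E\gamma \land \chi_i$ are not downward closed because of the $\E\gamma$ conjunct, so Proposition~\ref{prop:strict-lax} does not apply verbatim. Here I would argue semantically, exploiting the pivot hypothesis $|\{s(\vv{\alpha}) \mid s \in T_\gamma\}| = 1$ that is in force at every call site of $\theta^{\neq}$, to restructure the aggregation (for instance, by factoring $\E\gamma$ out of the $\bigvee_i$, or by explicitly splitting off a $\gamma$-witness before each disjunction) so that every $\lor$ finally has a downward closed side. Once this verification is in place, the remaining work is mechanical rewriting, and the length bounds and logspace-computability follow directly from those of Theorem~\ref{thm:upper-bounds} and Lemma~\ref{lem:anon}.
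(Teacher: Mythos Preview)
Your reduction for $\Sigma = \{\sneg,\land,\lor\}$ matches the paper's exactly: negate the formulas of Theorem~\ref{thm:upper-bounds} (using Lemma~\ref{lem:anon} for non-unary anonymity) and eliminate $\bor$ by De Morgan.

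For $\Sigma = \{\sneg,\land,\sor\}$ you correctly isolate the one genuine obstacle, the outer $\bigvee_i$ in $\theta^{\neq}$, but neither of your proposed fixes works as stated. Factoring $\E\gamma$ out of the $\bigvee_i$ severs the link between the value $l$ chosen in the $i$-th disjunct and the pivot's value $s^\star(\alpha_i)$: a piece $Y_i$ with $(Y_i)_\gamma = \emptyset$ then satisfies the inner clause merely by being constant in $\beta_i$, which is strictly weaker than disagreeing with the pivot. Strictly splitting off the $\gamma$-witness beforehand has the same defect, since the remaining pieces again lose access to $s^\star(\vv\alpha)$.

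The paper takes a different route and does not attempt to repair $\theta^{\neq}$ at all. Instead it uses the now-available $\sneg$ to rewrite $\theta^{\neq}(\vv\alpha;\vv\beta;\gamma)$, under the pivot hypothesis, as
\[
\gamma \;\bor\; \sneg\bigl(\top \sor (\E\gamma \land \E\neg\gamma \land \theta^{=}(\vv\alpha;\vv\beta;\gamma))\bigr),
\]
i.e., ``either $T_{\neg\gamma}$ is empty, or no subteam of $T$ witnesses agreement between the pivot and some $\neg\gamma$-assignment via $\theta^{=}$''. This leaves only $\theta^{=}$, $\E$, and $\hook$, in all of which every $\lor$ has a purely propositional operand and Proposition~\ref{prop:strict-lax} applies directly. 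The insight you were missing is that once $\sneg$ is in $\Sigma$, the universal quantifier hidden in $\theta^{\neq}$ is most cheaply expressed as the negation of an existential over subteams, rather than by reorganising a positive split.
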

\begin{proof}
We essentially take the formulas of Theorem~\ref{thm:upper-bounds} (and Lemma~\ref{lem:anon} for the anonymity atom) and add a Boolean negation in front of them.
For $\Sigma = \{ \sneg, \land, \lor\}$, the only remaining thing to do is to rewrite $\bor$ via $\land$ and $\sneg$.

For $\Sigma = \{ \sneg, \land, \sor\}$, we must also remove all occurrences of $\lor$ and use only $\sor$.
We see that this comes down to expressing the subformulas $\theta^=$ and $\theta^{\neq}$ in $\{\sneg, \land, \sor\}$.
In $\theta^=$, the lax splitting $\lor$ can equivalently be replaced by $\sor$ due to Proposition~\ref{prop:strict-lax}, as any occurrence of $\lor$ has at least one purely propositional argument.
The same does not hold for $\theta^{\neq}$, but it is easy to see that $\theta^{\neq}(\vv{\alpha};\vv{\beta};\gamma)$ can be replaced by
\begin{align*}
\gamma \ovee \sneg(\top \sor (\E \gamma \land \E\neg\gamma \land \theta^=(\vv{\alpha};\vv{\beta};\gamma)))\text{.}\tag*{\qedhere}
\end{align*}
\end{proof}

\subsection{Upper bounds for parity}

\smallskip

Next, we again consider the parity of the cardinality of teams, \ie, is there a formula that is true precisely on teams with even cardinality?
This differs from the other considered team properties in that both the property and its negation have exponential lower bounds in $\PL(\{\bor,\land,\lor,\sor\})$ (see Theorem~\ref{thm:lower-bounds}).
Nevertheless, we show that it is polynomially definable when linearly many negations are nested inside the formula, which was not necessary for the results of Theorem~\ref{thm:upper-bounds-with-negation}.

\begin{thm}
  Let $|\Phi| = n$.
  The class of $\Phi$-teams of odd cardinality is defined by a $\PL(\land,\sneg,\sor)$-formula of length $\calO(n^2)$.
\end{thm}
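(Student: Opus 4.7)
The plan is to construct the formula by induction on the domain size $n$, where $\Phi_n \dfn \{p_1,\ldots,p_n\}$. The base case $n=0$ is immediate: a $\Phi_0$-team is either empty or the singleton containing the empty assignment, so odd cardinality is the same as non-emptiness, and $\mathrm{Odd}^0 \dfn \nempty$ works in length $\calO(1)$. For the inductive step, I would set
\[
  \mathrm{Odd}^n \dfn \big(\dep{p_1,\ldots,p_{n-1};p_n} \land \mathrm{Odd}^{n-1}\big) \sor \big(p_1\cdots p_{n-1} \anon p_n\big),
\]
where the two atoms are replaced by the $\PL(\{\land,\sneg,\sor\})$-formulas of length $\calO(n)$ supplied by Theorem~\ref{thm:upper-bounds-with-negation}.

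The correctness argument rests on a canonical decomposition of any $\Phi_n$-team $T$. Call $s \in T$ \emph{paired} if its $p_n$-flip $s^*$ (the unique assignment agreeing with $s$ on $\Phi_{n-1}$ but disagreeing on $p_n$) lies in $T$, and \emph{unpaired} otherwise. Let $W \subseteq T$ be the paired assignments and $V \dfn T\setminus W$. Then $V$ satisfies $\dep{p_1,\ldots,p_{n-1};p_n}$, since any two elements of $V$ sharing $\Phi_{n-1}$-values would be $p_n$-partners and hence paired, contradicting $V \subseteq T\setminus W$; and $W$ satisfies $p_1\cdots p_{n-1} \anon p_n$ by construction. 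A short case analysis shows this is the \emph{only} strict split of $T$ simultaneously satisfying both atoms: the dependence forbids partners inside the left component, while the anonymity closes the right component under $p_n$-flipping and thus excludes any unpaired assignment. Because $W$ decomposes into $p_n$-partner pairs, $|W|$ is even and $|T|$ shares the parity of $|V|$.

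The subtle point is that $V$ is a $\Phi_n$-team while $\mathrm{Odd}^{n-1}$ was constructed to count $\Phi_{n-1}$-teams, and $\PL(\{\land,\sneg,\sor\})$-formulas are not local in general (cf.\ the counter-example on page~\pageref{p:local-counter-example}). I would bridge this gap with a short auxiliary lemma, proved by induction on formula structure: for every $\PL(\{\land,\sneg,\sor\})$-formula $\phi$ with $\Prop(\phi)\subseteq\Phi_{n-1}$ and every $\Phi_n$-team $V$ satisfying $\dep{p_1,\ldots,p_{n-1};p_n}$, one has $V \vDash \phi$ iff $\restr{V}{\Phi_{n-1}} \vDash \phi$. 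The key observation is that the dependence turns the projection $V \to \restr{V}{\Phi_{n-1}}$ into a bijection that transports strict splits of subteams bijectively, and all subteams of $V$ inherit the dependence since the dependence atom is downward closed. Applied to $\mathrm{Odd}^{n-1}$, this yields $V \vDash \mathrm{Odd}^{n-1}$ iff $|V|$ is odd, closing the outer induction.

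The size recursion $|\mathrm{Odd}^n| \leq |\mathrm{Odd}^{n-1}| + \calO(n)$ unfolds to $|\mathrm{Odd}^n| = \calO(n^2)$, matching the claim. The main technical obstacle is the locality-style lemma in the previous paragraph: the entire reason for conjoining $\mathrm{Odd}^{n-1}$ with $\dep{p_1,\ldots,p_{n-1};p_n}$, rather than plugging it in naked, is that only the dependence-restricted context makes recursion on a $\Phi_n$-team sound when $\sor$ is in play.
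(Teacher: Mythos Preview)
Your argument is correct, and it takes a genuinely different route from the paper.

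The paper's recursion removes one proposition $p$ at a time but never changes the ambient domain: it works throughout with $\Phi$-teams that satisfy $\dep{\Phi\setminus\{q_1,\ldots,q_m\}}$, and the key subformula
\[
\psi \;=\; \big[\mathbf{1}\bor(\sneg\dep{p}\land(\mathbf{1}\sor\dep{p}))\big]\sor\big[\dep{p}\land\sneg\varphi(\vv q)\big]
\]
is engineered so that $S\vDash\psi$ iff at least one of $|S_p|$, $|S_{\neg p}|$ is odd; oddness of $|S|$ is then expressed as $\mathbf{1}\sor\sneg\psi$ (``remove one assignment so that both halves become even''). The only atoms used are the nullary constancy atoms $\dep{p}$, and the locality issue for $\sor$ is sidestepped by keeping the domain fixed and carrying an explicit $\dep{\cdot}$ side condition through the induction.

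Your decomposition is orthogonal: you split $T$ into the \emph{paired} part $W$ (characterised by $p_1\cdots p_{n-1}\anon p_n$) and the \emph{unpaired} part $V$ (characterised by $\dep{p_1,\ldots,p_{n-1};p_n}$), observe that this is the \emph{unique} strict split realising both atoms, and reduce to $|V|$. This is conceptually cleaner---the correctness of the recursion boils down to one uniqueness argument rather than the case analysis in the paper's proof of $\psi$---but it buys that simplicity at the price of two ingredients the paper avoids: you import the $\calO(n)$ formulas for the full dependence and anonymity atoms from Theorem~\ref{thm:upper-bounds-with-negation} as black boxes, and you need the auxiliary locality lemma (projection is a bijection on strict splits under the dependence constraint) to descend to $\Phi_{n-1}$-teams. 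Both ingredients are sound as stated; the locality lemma in particular is exactly the right workaround for the failure of Proposition~\ref{prop:locality} in the presence of $\sor$, and your proof sketch for it is correct.
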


We write $\dep{X}$, for a finite set $X \subseteq \Phi$ of propositions, as abbreviation for $\bigwedge_{p \in X}\dep{p}$.
Based on this, the formula $\mathbf{1} \dfn \sneg\bot \land \dep{\Phi}$ defines singletons, that is, a $\Phi$-team $T$ satisfies $\mathbf{1}$ iff $|T| = 1$.
The formula expressing odd cardinality is now recursively defined as follows:
\begin{align*}
  \varphi(\+) \;&\dfn \mathbf{1}\\
  \varphi(p\vv{q}) \;& \dfn \mathbf{1} \sor \sneg\Big(\big[\mathbf{1} \bor  \big(\sneg\dep{p}\land\big(\mathbf{1} \sor \dep{p} \big)\big) \big]  \sor [\dep{p}\land \sneg\varphi(\vv{q})] \Big)
\end{align*}
We prove its correctness in the lemma below.
The rough idea is that a team is even precisely if $T_p$ and $T_{\neg p}$ are either both even or both odd, regardless of which proposition $p$ is.

\begin{lem}
  Let $T \in \teams(\Phi)$ and let $\vv{q}$ list all propositions in $\Phi$.
  Then $T \vDash \varphi(\vv{q})$ if and only if $|T|$ is odd.
\end{lem}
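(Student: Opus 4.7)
The plan is to prove the equivalence by induction on the length of $\vv{q}$, with a strengthened hypothesis tailored to survive the recursion. Since the recursion invokes $\varphi$ on subteams only after a $\dep{p}$ has been enforced, I would carry the invariant that the team is constant on all propositions already ``stripped off''. Formally: for every suffix $\vv{q}'$ of $\vv{q}$ with complementary prefix $P \subseteq \Phi$, and every $\Phi$-team $T$ constant on $P$, the claim $T \vDash \varphi(\vv{q}')$ iff $|T|$ is odd holds. The base case $\vv{q}' = \varepsilon$ is immediate: $T$ is then constant on all of $\Phi$, so $|T| \in \{0,1\}$, and $\varphi(\varepsilon) = \mathbf{1}$ holds iff $|T| = 1$.

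For the inductive step, abbreviate $\psi \dfn [\mathbf{1} \bor X] \sor [\dep{p} \land \sneg \varphi(\vv{q}')]$ with $X \dfn \sneg\dep{p} \land (\mathbf{1} \sor \dep{p})$, so that $\varphi(p\vv{q}') = \mathbf{1} \sor \sneg\psi$. The argument hinges on two semantic characterizations. First, a subteam $S$ satisfies $\mathbf{1} \bor X$ iff $|S_p| = 1$ or $|S_{\neg p}| = 1$: the $\mathbf{1}$-disjunct covers the pure singleton cases, while $X$ forces both sides of $S$ nonempty via $\sneg\dep{p}$ and, through $\mathbf{1} \sor \dep{p}$, splits off a singleton whose complement is constant on $p$ (any strict split that leaves at least two elements on each side would violate $\dep{p}$ on the complement). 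Second, a subteam $U$ satisfies $\dep{p} \land \sneg\varphi(\vv{q}')$ iff $U$ is constant on $p$ and $|U|$ is even; the inductive hypothesis applies because $U \subseteq T$ inherits constancy on $P$ and additionally satisfies $\dep{p}$, so $U$ is constant on $P \cup \{p\}$.

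Combining these, a subteam $Y$ of $T$ satisfies $\psi$ iff there is a strict split $(S,U)$ of $Y$ with $U$ constant on $p$, $|U|$ even, and $|S_p| = 1$ or $|S_{\neg p}| = 1$. A brief case analysis on which side of $Y$ contains $U$ shows this is equivalent to ``$|Y_p|$ is odd or $|Y_{\neg p}|$ is odd''. For instance, if $U \subseteq Y_p$, then $|S_p| = |Y_p| - |U|$ with $|U|$ even, so $|S_p| = 1$ is realizable iff $|Y_p|$ is odd, whereas the condition $|S_{\neg p}| = |Y_{\neg p}| = 1$ is its own case; the case $U \subseteq Y_{\neg p}$ is symmetric. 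Consequently $Y \vDash \sneg\psi$ iff both $|Y_p|$ and $|Y_{\neg p}|$ are even.

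Finally, $T \vDash \mathbf{1} \sor \sneg\psi$ iff there is $x \in T$ such that $Y \dfn T \setminus \{x\}$ satisfies $\sneg\psi$, i.e.\ both $|Y_p|$ and $|Y_{\neg p}|$ are even. Removing $x$ flips exactly one of these two parities depending on whether $x \vDash p$ or $x \vDash \neg p$, so such $x$ exists iff $|T_p|$ and $|T_{\neg p}|$ have different parities, which is equivalent to $|T| = |T_p| + |T_{\neg p}|$ being odd. The main obstacle I anticipate is the characterization of $\mathbf{1} \bor X$: one must handle the boundary strict splits carefully and verify that $\mathbf{1}$ and $X$ together yield precisely the ``at least one side has size $1$'' condition, neither too permissive nor too restrictive.
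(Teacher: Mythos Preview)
Your proposal is correct and follows essentially the same route as the paper: both strengthen the induction hypothesis to teams constant on the already-stripped propositions, both isolate the subformula $\psi$ and establish that $S \vDash \psi$ iff at least one of $|S_p|,|S_{\neg p}|$ is odd, and both finish by the parity-flip argument for $\mathbf{1} \sor \sneg\psi$. Your presentation is slightly more modular in that you first characterize $\mathbf{1} \bor X$ on its own as ``one of $|S_p|,|S_{\neg p}|$ equals $1$'' before combining with the inductively understood right bracket, whereas the paper analyzes the whole strict split of $\psi$ directly; this is a cosmetic difference, not a different idea.
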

\begin{proof}
The proof is by induction on $|\vv{q}|$.
Since the domain of $T$ exceeds the arguments of the recursive subformulas $\phi$, we prove the following stronger statement.
Let $\vv{q} = (q_1,\ldots,q_m)$.
Then, for any $\Phi$-team $S$ satisfying $\dep{\Phi \setminus \{q_1,\ldots,q_m\}}$, it holds that that $S \vDash \varphi(\vv{q})$ if and only if $|S|$ is odd.
The base case is clear as the only $\emptyset$-teams are $\emptyset$ and $\{\emptyset\}$.

We proceed with the inductive step, and first provide some intuition.
The crucial subformula is
  \[
  \psi \dfn \big[\mathbf{1} \bor  \big(\sneg\dep{p}\land\big(\mathbf{1} \sor \dep{p} \big)\big) \big]  \sor [\dep{p}\land \sneg\varphi(\vv{q})]\text{.}
  \]
  We will show below that it is true iff at least one of $|S_p|$ and $|S_{\neg p}|$ is odd.
  Then $\sneg \psi$ means that both $|S_p|$ and $|S_{\neg p}|$ are even.
  This is sufficient for $|S|$ to be even but of course not necessary.
  However, the following holds:
  $|S|$ is odd precisely when we can remove one assignment $S$ such that afterwards both $|S_p|$ and $|S_{\neg p}|$ are even.
  Hence, oddness is defined by $\mathbf{1} \sor \sneg \psi$.

  Intuitively, $\psi$ allows to split off an even subteam of either $S_p$ or $S_{\neg p}$ by \ldots$\+\sor \, (\dep{p} \land \sneg\phi(\vv{q}))$, reducing either $S_p$ or $S_{\neg p}$, depending on which is odd, to a singleton.
  Afterwards the team then satisfies $\mathbf{1} \bor  \sneg\dep{p}\land\big(\mathbf{1} \sor \dep{p} \big)$.
  We prove this formally, \ie, that $S \vDash \psi$ iff $|S_p|$ or $|S_{\neg p}|$ is odd.
\begin{itemize}[leftmargin=1cm]
  \item["$\Rightarrow$"]
  Suppose $S \vDash \psi$ via the strict split $(U,V)$ such that $V \vDash \dep{p} \land \sneg\varphi(\vv{q})$, and either $U \vDash \mathbf{1}$ or $U \vDash \sneg\dep{p} \land (\mathbf{1}\sor \dep{p})$.
  Note that $|V|$ is even by induction hypothesis.
  We distinguish the two possible cases for $U$.
  \begin{itemize}
    \item $U \vDash \mathbf{1}$:
    Then $U,V \vDash \dep{p}$.
    Additionally,
    Both $U$ and $U \cup V$ have odd size, and one of them equals $S_p$ or $S_{\neg p}$, depending on whether $U$ and $V$ agree on $p$ or not.

    \item $U \vDash \sneg\dep{p}\land (\mathbf{1}\sor \dep{p})$:
   Due to symmetry, we can assume $V \subsetneq S_p$ and $S_{\neg p} \subsetneq U$.
  By the formula, $U$ has a strict split $(X,Y)$ such that $|X| = 1$ and $Y \vDash \dep{p}$.
  Let $Z = S_p \setminus V$.
  Either $Z \subseteq X$ or $Z \subseteq Y$, as $X$ and $Y$ do not agree on $p$, but each is constant in $p$.
  If $Z \subseteq X$, then $Z = X$ and $|V \cup X| = |S_p|$ is odd and we are done.
  If $Z \subseteq Y$, then $S_{\neg p} \subseteq X$, hence $S_{\neg p} = X$ and $|S_{\neg p}|$ is odd.
  \end{itemize}
  \item["$\Leftarrow$"]
  \Wloss $|S_p|$ is odd.
  Pick $s \in S_p$ arbitrarily and consider the split $(S_{\neg p}\cup \{s\}, S_{p} \setminus \{s\})$ of $S$.
  For the second component, $S_p \setminus \{s\} \vDash \dep{p} \land \sneg\varphi(\vv{q})$ by induction hypothesis.
  For the first component, either $S_{\neg p}$ is empty and $S_{\neg p} \cup \{s\} \vDash \mathbf{1}$, or $S_{\neg p}$ is non-empty and $S_{\neg p} \cup \{s\} \vDash \sneg\dep{p} \land (\mathbf{1} \sor \dep{p})$.
  In both cases, $S \vDash \psi$.\qedhere
\end{itemize}
\end{proof}

We have shown an exponential lower bound for parity in the existential fragment.
For the matching upper bound, the following formulas define parity by mutual recursion:
\begin{align*}
  \phi^\mathrm{even}(\+) & \dfn \bot\\
  \phi^\mathrm{odd}(\+) & \dfn \nempty\\
  \phi^\mathrm{even}(p\vv{q}) &\dfn \big((p \land \phi^\mathrm{odd}(\vv{q})) \lor (\neg p \land \phi^{\mathrm{odd}}(\vv{q}))\big) \bor \big( (p \land \phi^\mathrm{even}(\vv{q})) \lor (\neg p \land \phi^{\mathrm{even}}(\vv{q})) \big)\\
  \phi^\mathrm{odd}(p\vv{q}) &\dfn \big((p \land \phi^\mathrm{odd}(\vv{q})) \lor (\neg p \land \phi^{\mathrm{even}}(\vv{q})) \big) \bor \big( (p \land \phi^\mathrm{even}(\vv{q})) \lor (\neg p \land \phi^{\mathrm{odd}}(\vv{q}))\big)
\end{align*}

\smallskip

\begin{thm}\label{thm:parity-exp}
  Let $|\Phi| = n$.
  If $\Sigma = \{\land,\bor,\lor\}$ or $\Sigma = \{\land,\bor,\sor\}$, then the class of $\Phi$-teams of odd resp.\ even cardinality is definable by a $\PL(\Sigma)$-formula of length $2^{\calO(n)}$.
\end{thm}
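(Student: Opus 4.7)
My plan is to prove by induction on $m = |\vv q|$ the stronger statement that for every tuple $\vv q = (q_1,\dots,q_m)$ of pairwise distinct propositions and every team $T$ whose domain equals $\{q_1,\dots,q_m\}$, we have $T \vDash \phi^{\mathrm{even}}(\vv q)$ iff $|T|$ is even, and $T \vDash \phi^{\mathrm{odd}}(\vv q)$ iff $|T|$ is odd. Setting $\vv q$ to enumerate $\Phi$ then yields the theorem; by Proposition~\ref{prop:locality} the claim also transfers to arbitrary $\Phi$-teams regardless of how the larger ambient domain is chosen.

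The base case $m = 0$ is immediate: the only $\emptyset$-teams are $\emptyset$ and $\{\emptyset\}$, and $\bot$ is true exactly on the former while $\nempty$ is true exactly on the latter. For the inductive step, consider a $\{p,q_1,\dots,q_m\}$-team $T$ and let $T_p, T_{\neg p}$ be its classical partition according to the value of $p$. The crucial observation is that any split $(U,V)$ witnessing a disjunct of the form $(p \land \psi) \lor (\neg p \land \theta)$ is forced to satisfy $U = T_p$ and $V = T_{\neg p}$: the literal constraints give $U \subseteq T_p$ and $V \subseteq T_{\neg p}$, and $U \cup V = T$ together with $T_p \cap T_{\neg p} = \emptyset$ pins the split down uniquely (in particular, the split is automatically strict). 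Projecting $T_p$ and $T_{\neg p}$ onto $\vv q$ preserves cardinality (because $p$ is constant on each part), so the induction hypothesis yields $T_p \vDash \phi^{\mathrm{even}}(\vv q)$ iff $|T_p|$ is even, and analogously in the three other combinations. Reading off the two top-level Boolean disjuncts of $\phi^{\mathrm{even}}(p\vv q)$ gives that $T \vDash \phi^{\mathrm{even}}(p\vv q)$ iff $|T_p|$ and $|T_{\neg p}|$ are either both odd or both even, which is equivalent to $|T| = |T_p|+|T_{\neg p}|$ being even; the case of $\phi^{\mathrm{odd}}(p\vv q)$ is symmetric.

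For the size bound, every unfolding step replaces $\phi^{\mathrm{even/odd}}(\vv q)$ by a formula consisting of four copies of $\phi^{\mathrm{even/odd}}(\vv q)$ glued together by constantly many connectives and literals. Letting $f(m)$ be the maximal length of $\phi^{\mathrm{even}}(\vv q)$ and $\phi^{\mathrm{odd}}(\vv q)$ over $|\vv q| = m$, this gives the recurrence $f(m+1) \leq 4 f(m) + \calO(1)$, whence $f(n) = \calO(4^n) = 2^{\calO(n)}$. For the strict-disjunction variant $\Sigma = \{\land,\bor,\sor\}$, I will simply replace every occurrence of $\lor$ by $\sor$; by the split-uniqueness observation above, every satisfying lax split for one of these disjunctions is already strict, so the rewritten formulas are equivalent to the original ones and have the same asymptotic size.

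The only real subtlety I anticipate is the bookkeeping around locality and cardinality: one must carefully verify that projecting $T_p$ and $T_{\neg p}$ onto $\vv q$ does not identify assignments, so that the inductive hypothesis can be applied to the projections with unchanged cardinalities. Since $p$ takes a fixed value on each part, this is immediate, and the rest of the argument is essentially a direct translation of the recursive structure of the formulas into the inductive statement.
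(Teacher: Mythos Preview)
Your proposal is correct and follows essentially the same route as the paper: both argue by induction on the length of $\vv q$, reduce to the unique split $(T_p,T_{\neg p})$ forced by the literals $p$ and $\neg p$, invoke locality (Proposition~\ref{prop:locality}) together with the cardinality-preserving projection onto $\vv q$ to apply the induction hypothesis, and handle the $\sor$-variant by noting that the relevant splits are automatically strict. Your treatment is slightly more explicit than the paper's in spelling out the split-uniqueness argument and the size recurrence $f(m{+}1)\le 4f(m)+\calO(1)$, but the underlying argument is the same.
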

\begin{proof}
  First of all, observe that the formula $(p \land \phi) \lor (\neg p \land \phi')$ is equivalent to $(p \land \phi) \sor (\neg p \land \phi')$ for all $\phi,\phi'$ and propositions $p$, since any split satisfying the former formula is necessarily strict.
  As a consequence, it suffices to consider $\Sigma = \{\land,\bor,\lor\}$.

  Let $\Phi = \{ p_1, \ldots, p_n\}$, and $T$ a $\Phi$-team.
  Let $\vv{p} = (p_1,\ldots,p_n)$ list all variables in $\Phi$.
  We prove by induction on $n$ that $T \vDash \phi^\mathrm{even}(\vv{p})$ iff $|T|$ is even, and $T \vDash \phi^\mathrm{odd}(\vv{p})$ iff $|T|$ is odd.

  First, if $\Phi = \emptyset$, then either $T = \emptyset$ and $T \vDash \bot = \phi^\mathrm{even}(\+)$, or $T = \{ \emptyset \}$ and $T \vDash \nempty = \phi^\mathrm{odd}(\+)$.
  For the inductive step, observe that $|T|$ is even iff $|T_p|$ and $|T_{\neg p}|$ have equal parity, and is odd iff they have different parity, where $p \in \Phi$ is an arbitrary proposition.
  Furthermore, $T_p$ and $\restr{T_p}{(\Phi\setminus \{p\})}$ have the same cardinality (the same goes for $T_\neg p$).
  Additionally, $T_p$ and $\restr{T_p}{(\Phi\setminus \{p\})}$ satisfy the same $\PL(\Phi\setminus \{p\},\Sigma)$-formulas by Proposition~\ref{prop:locality}.
  Hence the equivalence immediately follows by induction hypothesis.
\end{proof}

 \subsection{Modal team logic}

In this final section, we consider \emph{modal team logic} $\MTL$, introduced by Müller~\cite{mueller14}, which extends both classical modal logic $\ML$ and propositional team logic $\PL(\{\land,\sneg,\lor\})$.
Beginning with modal dependence logic by Väänänen~\cite{vaa08}, several atoms of dependency have been transferred from the first-order setting also to the modal setting~(cf.~\cite{EbbingHMMVV13,KontinenMSV17,HellaS15}).
Using the results of this paper, we show that the computational complexity of $\MTL$ does not change if it is augmented with any of the dependency atoms we considered before.

For each $k \geq 0$, we define the function $\exp_k$ as $\exp_0(n) \dfn n$ and $\exp_{k+1}(n) \dfn 2^{\exp_k(n)}$.

For $k \geq 0$, $\ATIMEALT{\exp_k}$ is the class of problems decidable by an alternating Turing machine (see~\cite{alternation}) with at most $p(n)$ alternations and runtime at most $\exp_k(p(n))$, for a polynomial $p$.
Likewise, $\TOWERPOLY$ is the class of problems that are decidable by a deterministic Turing machine in time $\exp_{p(n)}(1)$ for some polynomial $p$.

The syntax of $\MTL$ is given by the following grammar, where $p$ is an atomic proposition:
\begin{align*}
 \phi & \ddfn \top \mid \bot \mid p \mid \neg p \mid \sneg \phi \mid \phi  \land \phi  \mid \phi  \lor \phi \mid \Box\phi \mid \Diamond\phi \text{,}
\end{align*}
Observe that classical modal logic $\ML$ is the $\sneg$-free fragment of $\MTL$.
Let $\md(\phi)$ denote the \emph{modal depth} of $\phi$, \ie, the nesting depth of $\Diamond$ and $\Box$ inside $\phi$.
A \emph{Kripke structure over $\Phi$}, where $\Phi$ is a set of propositions, is a tuple $K = (W,R,V)$ where $(W,R)$ is a directed graph and $V \colon \Phi \to 2^W$.
A \emph{team} in $K$ is a subset of $W$.
Let $RT \dfn \{ v \mid (w,v) \in R, w \in T \}$ and $R^{-1}T \dfn \{ w \mid (w,v) \in R, v \in T \}$.
The set $\Prop(\phi)$ is defined as for propositional logic.
$\MTL$-formulas $\phi$ are evaluated as follows on pairs $(K,T)$, where $K$ is a Kripke structure over some set $\Phi' \supseteq \Prop(\phi)$ of propositions and $T$ is a team in $K$:
\begin{alignat*}{3}
  & (K, T) \vDash p           &  & \Leftrightarrow\; T \subseteq V(p)\text{ for }p \in \Phi\text{,}                           \\
  & (K, T) \vDash \neg p      &  & \Leftrightarrow\; T \cap V(p) = \emptyset \text{ for }p \in \Phi\text{,} \text{,}          \\
  & (K,T)\vDash \Diamond \psi &  & \Leftrightarrow\; \exists T' \subseteq RT : T \subseteq R^{-1}T' \text{ and }(K, T')\vDash \psi \\
  & (K,T)\vDash \Box\psi      &  & \Leftrightarrow\;(K,RT) \vDash \psi\text{,}
\end{alignat*}
with $\land,\sneg,\top$ and $\bot$ analogously to propositional logic.
An $\MTL$-formula $\phi$ is \emph{satisfiable} (\emph{valid}) if $(K,T)\vDash \phi$ for some (every) Kripke structure $K$ over $\Prop(\phi)$ and team $T$ in $K$.
The \emph{model checking problem} is, given $\phi \in \MTL$ and a Kripke structure with team $(K,T)$, to decide whether $(K,T) \vDash \phi$.

The modal atoms of dependence $\dep{\vv\alpha;\vv\beta}$, independence $\vv\alpha\perp_{\vv\beta}\vv\gamma$, inclusion $\vv\alpha \subseteq \vv\beta$,
exclusion $\vv\alpha\mid \vv\beta$, and anonymity $\vv\alpha \anon \vv\beta$, are defined completely analogous as the propositional variants (cf.\ p.~\pageref{p:atoms}), but with $\vv\alpha,\vv\beta,\vv\gamma$
being tuples of $\ML$-formulas instead of $\PL$-formulas.

\begin{thm}
  For $\MTL$ extended by the atoms $\dep{\cdot,\cdot}$, $\perp_c$, $\subseteq$, $\mid$, and $\anon$,
 \begin{itemize}
  \item satisfiability and validity is $\TOWERPOLY$-complete,
  \item satisfiability and validity for modal depth at most $k$ is $\ATIMEALT{\exp_k}$-complete,
  \item model checking is $\PSPACE$-complete,
 \end{itemize}
 with respect to logspace-reductions.
\end{thm}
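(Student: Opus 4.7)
The plan is to split each of the three complexity claims into hardness and membership, and observe that hardness is immediate. All three complexities are already known to be tight for plain $\MTL$ (see Müller~\cite{mueller14} and related work), and since the extended logic subsumes $\MTL$ the identity function is a logspace-reduction from every $\MTL$-problem into its extended counterpart. So the entire work is on membership.

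For membership I would build a logspace-computable, polynomial translation $\tau$ mapping each formula $\phi$ of $\MTL$ extended by $\dep{\cdot;\cdot}$, $\perp_c$, $\subseteq$, $\mid$, $\anon$ to an equivalent $\MTL$-formula. The translation recurses over the syntax tree of $\phi$, leaves all $\ML$-literals and $\MTL$-connectives untouched, and replaces each occurrence of a modal atom of dependency by the explicit $\PL(\{\sneg,\land,\lor\})$-formula supplied by Theorem~\ref{thm:upper-bounds-with-negation}, but with the $\ML$-formulas $\vv\alpha,\vv\beta,\vv\gamma$ of the modal atom plugged in wherever that theorem expected purely propositional arguments. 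Since Theorem~\ref{thm:upper-bounds-with-negation} already provides logspace-computable formulas of length polynomial in $|\vv\alpha\vv\beta\vv\gamma|$, and none of its definitions introduce any modality, summing sizes over the syntax tree of $\phi$ yields $|\tau(\phi)| = \poly(|\phi|)$ with $\md(\tau(\phi)) = \md(\phi)$, and $\tau$ is clearly realisable by a logspace transducer.

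Once $\tau$ is in place, the three results follow by composition with the known $\MTL$-bounds. General satisfiability and validity reduce to the $\TOWERPOLY$-complete problems for plain $\MTL$; the preservation of modal depth transports the $\ATIMEALT{\exp_k}$-bounds for bounded depth; and model checking reduces to the $\PSPACE$-complete problem for $\MTL$. Combined with the trivial hardness direction, this yields completeness.

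The main obstacle will be verifying that the defining formulas from Section~\ref{sec:upper}, designed for propositional team semantics, remain semantically correct when their arguments are $\ML$-formulas and teams live inside a Kripke structure. The proofs of Lemmas~\ref{lem:dep}--\ref{lem:anon-unary} only use that the arguments are flat (so that $s(\alpha)$ is unambiguously defined and the auxiliary shorthands $\E\alpha$, $\mathbf{1}_{\vv\alpha}$, $\vv\alpha\equi\vv\beta$, $\theta^=$, $\theta^{\neq}$, and $\gamma\hook\varphi$ behave as intended) together with the usual semantics of $\land,\sneg,\lor$ on teams. Both conditions are satisfied in the modal setting: $\ML$-formulas are flat and local on teams in Kripke structures, and $\MTL$ inherits the team-semantic clauses of $\land,\sneg,\lor$ verbatim from $\PL$. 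Once this semantic bridge is carefully checked, the equivalences $\sneg\mathrm{atom} \equiv \varphi(\cdots)$ of Section~\ref{sec:upper} transfer to $\MTL$, and one more outer $\sneg$ recovers each atom itself, completing the translation.
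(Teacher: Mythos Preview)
Your proposal is correct and follows essentially the same approach as the paper: cite the known completeness results for plain $\MTL$ (Müller~\cite{mueller14} and Lück~\cite{Luck18}) for hardness, and for membership replace each dependency atom by the polynomial-size, logspace-computable $\MTL$-formula obtained from the upper bounds of Section~\ref{sec:upper}. Your write-up is in fact more explicit than the paper's two-line proof, spelling out the preservation of modal depth and the reason (flatness of $\ML$-formulas) why the propositional translations remain sound in the modal setting.
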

\begin{proof}
  For the logic without any atoms, the complexity was shown by Müller~\cite{mueller14} and Lück~\cite{Luck18}.
  The upper bounds of Theorem~\ref{thm:upper-bounds} immediately carry over to $\MTL$, so we can substitute every such atom by a polynomially long equivalent $\MTL$-formula.
\end{proof}

\section{Conclusion}

In this paper, we classified common atoms of dependency with respect to their succinctness in various fragments of propositional team logic.
We showed that the negations of these atoms all can be polynomially expressed in the positive fragment of propositional team logic, while the atoms themselves can only be expressed in this fragment in formulas of exponential size.
This implies polynomial upper bounds for the atoms in full propositional team logic with unrestricted contradictory negation.
For the lower bounds, we adapted formula size games to the team semantics setting, and refined the approach with the notion of upper dimension.

In further research, comparing the atoms of dependency in terms of succinctness could be interesting.
For example, do the lower bounds for the inclusion atoms still hold if we consider the positive fragment together with dependence atoms?
Adding moves corresponding to atoms of dependency to the formula size game would enable looking into the relative succinctness.

\section*{Acknowledgment}
\noindent{}The authors want to thank Lauri Hella and Raine Rönnholm for pointing out references essential for this work, and Juha Kontinen for being the initiating force behind it.
We also thank the anonymous referees for their very detailed comments and suggestions on how to simplify several proofs.
This work was supported in part by a joint grant of the DAAD (57348395) and the Academy of Finland (308099).

\bibliographystyle{alpha}
\bibliography{PLsuccinct}

\newcommand{\etalchar}[1]{$^{#1}$}
\begin{thebibliography}{vDFvdHI14}

\bibitem[AI03]{AdlerI03}
Micah Adler and Neil Immerman.
\newblock An \emph{n!} lower bound on formula size.
\newblock {\em {ACM} Trans. Comput. Log.}, 4(3):296--314, 2003.

\bibitem[CKS81]{alternation}
Ashok~K. Chandra, Dexter~C. Kozen, and Larry~J. Stockmeyer.
\newblock Alternation.
\newblock {\em J. ACM}, 28(1):114--133, January 1981.

\bibitem[EHM{\etalchar{+}}13]{EbbingHMMVV13}
Johannes Ebbing, Lauri Hella, Arne Meier, Julian{-}Steffen Müller, Jonni
  Virtema, and Heribert Vollmer.
\newblock Extended modal dependence logic.
\newblock In {\em WoLLIC}, volume 8071 of {\em Lecture Notes in Computer
  Science}, pages 126--137. Springer, 2013.

\bibitem[EVW02]{EtessamiVW02}
Kousha Etessami, Moshe~Y. Vardi, and Thomas Wilke.
\newblock First-order logic with two variables and unary temporal logic.
\newblock {\em Inf. Comput.}, 179(2):279--295, 2002.

\bibitem[FvdHIK11]{FrenchHIK11}
Tim French, Wiebe van~der Hoek, Petar Iliev, and Barteld~P. Kooi.
\newblock Succinctness of epistemic languages.
\newblock In {\em {IJCAI} 2011, Proceedings of the 22nd International Joint
  Conference on Artificial Intelligence, Barcelona, Catalonia, Spain, July
  16-22, 2011}, pages 881--886, 2011.

\bibitem[FvdHIK13]{FrenchHIK13}
Tim French, Wiebe van~der Hoek, Petar Iliev, and Barteld~P. Kooi.
\newblock On the succinctness of some modal logics.
\newblock {\em Artif. Intell.}, 197:56--85, 2013.

\bibitem[Gal12]{ga12}
Pietro Galliani.
\newblock Inclusion and exclusion dependencies in team semantics — {On} some
  logics of imperfect information.
\newblock {\em Annals of Pure and Applied Logic}, 163(1):68--84, January 2012.

\bibitem[Gal15]{ga15}
Pietro Galliani.
\newblock Upwards closed dependencies in team semantics.
\newblock {\em Information and Computation}, 245:124--135, 2015.

\bibitem[GKPS95]{GogicKPS95}
Goran Gogic, Henry Kautz, Christos Papadimitriou, and Bart Selman.
\newblock The comparative linguistics of knowledge representation.
\newblock In {\em Proceedings of the 14th International Joint Conference on
  Artificial Intelligence - Volume 1}, IJCAI'95, pages 862--869, San Francisco,
  CA, USA, 1995. Morgan Kaufmann Publishers Inc.

\bibitem[GS05]{GroheS05}
Martin Grohe and Nicole Schweikardt.
\newblock The succinctness of first-order logic on linear orders.
\newblock {\em Logical Methods in Computer Science}, 1(1), 2005.

\bibitem[GV13]{GV13}
Erich Grädel and Jouko Väänänen.
\newblock Dependence and independence.
\newblock {\em Studia Logica}, 101(2):399--410, 2013.

\bibitem[HKLV16]{HannulaKLV16}
Miika Hannula, Juha Kontinen, Martin Lück, and Jonni Virtema.
\newblock On quantified propositional logics and the exponential time
  hierarchy.
\newblock In {\em GandALF}, volume 226 of {\em {EPTCS}}, pages 198--212, 2016.

\bibitem[HKVV18]{HannulaKVV18}
Miika Hannula, Juha Kontinen, Jonni Virtema, and Heribert Vollmer.
\newblock Complexity of propositional logics in team semantic.
\newblock {\em {ACM} Trans. Comput. Log.}, 19(1):2:1--2:14, 2018.

\bibitem[HLSV14]{HellaLSV14}
Lauri Hella, Kerkko Luosto, Katsuhiko Sano, and Jonni Virtema.
\newblock The expressive power of modal dependence logic.
\newblock In {\em Advances in Modal Logic 10, invited and contributed papers
  from the tenth conference on "Advances in Modal Logic", 2014}, pages
  294--312, 2014.

\bibitem[Hod97]{Hodges1997}
Wilfrid Hodges.
\newblock Compositional semantics for a language of imperfect information.
\newblock {\em Logic Journal of the IGPL}, 5(4):539--563, 1997.

\bibitem[HS89]{HS89}
Jaakko Hintikka and Gabriel Sandu.
\newblock Informational {Independence} as a {Semantical} {Phenomenon}.
\newblock In {\em Studies in {Logic} and the {Foundations} of {Mathematics}},
  volume 126, pages 571--589. Elsevier, 1989.

\bibitem[HS15]{HellaS15}
Lauri Hella and Johanna Stumpf.
\newblock The expressive power of modal logic with inclusion atoms.
\newblock In {\em GandALF}, volume 193 of {\em {EPTCS}}, pages 129--143, 2015.

\bibitem[HV15]{HellaV15}
Lauri Hella and Jouko Väänänen.
\newblock The size of a formula as a measure of complexity.
\newblock In {\em Logic Without Borders - Essays on Set Theory, Model Theory,
  Philosophical Logic and Philosophy of Mathematics}, pages 193--214. De
  Gruyter, 2015.

\bibitem[HV16]{HellaV16}
Lauri Hella and Miikka Vilander.
\newblock The succinctness of first-order logic over modal logic via a formula
  size game.
\newblock In {\em Advances in Modal Logic 11, Budapest, Hungary, August 30 -
  September 2, 2016}, pages 401--419, 2016.

\bibitem[KMSV15]{Kon15}
Juha Kontinen, Julian{-}Steffen Müller, Henning Schnoor, and Heribert Vollmer.
\newblock {A Van Benthem Theorem for Modal Team Semantics}.
\newblock In {\em {CSL}}, volume~41 of {\em LIPIcs}, pages 277--291, 2015.

\bibitem[KMSV17]{KontinenMSV17}
Juha Kontinen, Julian{-}Steffen Müller, Henning Schnoor, and Heribert Vollmer.
\newblock Modal independence logic.
\newblock {\em J. Log. Comput.}, 27(5):1333--1352, 2017.

\bibitem[KMV15]{KrebsMV15}
Andreas Krebs, Arne Meier, and Jonni Virtema.
\newblock A team based variant of {CTL}.
\newblock In {\em {TIME}}, pages 140--149. {IEEE} Computer Society, 2015.

\bibitem[KMVZ18]{KrebsMV018}
Andreas Krebs, Arne Meier, Jonni Virtema, and Martin Zimmermann.
\newblock Team semantics for the specification and verification of
  hyperproperties.
\newblock In {\em {MFCS}}, volume 117 of {\em LIPIcs}, pages 10:1--10:16.
  Schloss Dagstuhl - Leibniz-Zentrum fuer Informatik, 2018.

\bibitem[KN09]{KN09}
Juha Kontinen and Ville Nurmi.
\newblock Team {Logic} and {Second}-{Order} {Logic}.
\newblock In {\em Logic, {Language}, {Information} and {Computation}}, volume
  5514, pages 230--241. Springer Berlin Heidelberg, Berlin, Heidelberg, 2009.

\bibitem[LSW01]{LutzSW01}
Carsten Lutz, Ulrike Sattler, and Frank Wolter.
\newblock Modal logic and the two-variable fragment.
\newblock In {\em Computer Science Logic, Paris, France, September 10-13,
  2001}, pages 247--261, 2001.

\bibitem[Lü18]{Luck18}
Martin Lück.
\newblock Canonical models and the complexity of modal team logic.
\newblock In {\em {CSL}}, volume 119 of {\em LIPIcs}, pages 30:1--30:23.
  Schloss Dagstuhl - Leibniz-Zentrum fuer Informatik, 2018.

\bibitem[Mar03]{Markey03}
Nicolas Markey.
\newblock Temporal logic with past is exponentially more succinct, concurrency
  column.
\newblock {\em Bulletin of the {EATCS}}, 79:122--128, 2003.

\bibitem[M{\"{u}}l14]{mueller14}
Julian{-}Steffen M{\"{u}}ller.
\newblock {\em Satisfiability and model checking in team based logics}.
\newblock PhD thesis, University of Hanover, 2014.

\bibitem[Rö18]{RRthesis}
Raine Rönnholm.
\newblock {\em Arity Fragments of Logics with Team Semantics}.
\newblock PhD thesis, Tampere University, Tampere University Press, 2018.

\bibitem[vDFvdHI14]{DitmarschFHI14}
Hans van Ditmarsch, Jie Fan, Wiebe van~der Hoek, and Petar Iliev.
\newblock Some exponential lower bounds on formula-size in modal logic.
\newblock In {\em Advances in Modal Logic 10, held in Groningen, The
  Netherlands, August 5-8, 2014}, pages 139--157, 2014.

\bibitem[vdHI14]{HoekI14}
Wiebe van~der Hoek and Petar Iliev.
\newblock On the relative succinctness of modal logics with union, intersection
  and quantification.
\newblock In {\em International conference on Autonomous Agents and Multi-Agent
  Systems, {AAMAS} '14, Paris, France, May 5-9, 2014}, pages 341--348, 2014.

\bibitem[vdHIK12]{HoekIK12}
Wiebe van~der Hoek, Petar Iliev, and Barteld~P. Kooi.
\newblock On the relative succinctness of two extensions by definitions of
  multimodal logic.
\newblock In {\em How the World Computes - Turing Centenary Conference and 8th
  Conference on Computability in Europe, CiE 2012, Cambridge, UK, June 18-23,
  2012. Proceedings}, pages 323--333, 2012.

\bibitem[Vir17]{Virtema17}
Jonni Virtema.
\newblock Complexity of validity for propositional dependence logics.
\newblock {\em Inf. Comput.}, 253:224--236, 2017.

\bibitem[Vä07]{vaa07}
Jouko Väänänen.
\newblock {\em {Dependence logic: A New Approach to Independence Friendly
  Logic}}.
\newblock Number~70 in London {Mathematical} {Society} student texts. Cambridge
  University Press, Cambridge~; New York, 2007.

\bibitem[Vä08]{vaa08}
Jouko Väänänen.
\newblock Modal dependence logic.
\newblock {\em New perspectives on games and interaction}, 4:237--254, 2008.

\bibitem[Vä19]{Vaa}
Jouko Väänänen.
\newblock An atom's worth of anonymity.
\newblock Dagstuhl Seminar 19031 "Logics for Dependence and Independence",
  2019.

\bibitem[Wil99]{Wilke99}
Thomas Wilke.
\newblock {CTL}\({}^{\mbox{+}}\) is exponentially more succinct than {CTL}.
\newblock In {\em Foundations of Software Technology and Theoretical Computer
  Science, 19th Conference, Chennai, India, December 13-15, 1999}, pages
  110--121, 1999.

\bibitem[Yan14]{Yang14}
Fan Yang.
\newblock {\em On extensions and variants of dependence logic}.
\newblock PhD thesis, University of Helsinki, 2014.

\bibitem[YV16]{YangV16}
Fan Yang and Jouko Väänänen.
\newblock Propositional logics of dependence.
\newblock {\em Ann. Pure Appl. Logic}, 167(7):557--589, 2016.

\bibitem[YV17]{YangV17}
Fan Yang and Jouko V{\"{a}}{\"{a}}n{\"{a}}nen.
\newblock Propositional team logics.
\newblock {\em Ann. Pure Appl. Logic}, 168(7):1406--1441, 2017.

\end{thebibliography}

\end{document}